\documentclass[11pt,a4paper]{article}
\usepackage[margin=1in]{geometry}
\usepackage{tikz}
\usetikzlibrary{positioning,arrows.meta}
\usepackage{subcaption}
\usepackage{makecell}
\usepackage[english]{babel}
\usepackage{array} % 提供 p{} 功能
\usepackage{booktabs} % 可选，提供更美观的表格线

\usepackage{xcolor}

\usepackage{authblk}
\usepackage[colorlinks=true, allcolors=blue]{hyperref}

\usepackage{multirow}

\usepackage[ruled,linesnumbered,vlined]{algorithm2e}
\usepackage{algpseudocode}
\usepackage{amssymb}
\usepackage{amsmath}
\usepackage{amsthm}
\usepackage{graphicx}
\usepackage[colorlinks=true, allcolors=blue]{hyperref}
\usepackage{natbib}
\usepackage{framed}
\definecolor{shadecolor}{gray}{0.9}
\usepackage{pgfplots}
\pgfplotsset{compat=1.17}
\usepackage{tcolorbox}
\usepackage{tikz}
 \usetikzlibrary{patterns}
\usepackage{pgfplotstable}
    \pgfplotsset{
    name nodes near coords/.style={
        every node near coord/.append style={
            name=#1-\coordindex,
            alias=#1-last,
        },
    },
    name nodes near coords/.default=coordnode
    }

\newtheorem{theorem}{Theorem}[section]
\newtheorem{definition}[theorem]{Definition}
\newtheorem{claim}[theorem]{Claim}
\newtheorem{corollary}[theorem]{Corollary}
\newtheorem{lemma}[theorem]{Lemma}
\newtheorem{invariant}[theorem]{Invariant}

\newcommand{\E}{\mathbb{E}}
\newcommand{\bP}{\mathbf{P}}
\newcommand{\bX}{\mathbf{X}}

\newcommand{\MMS}{\mathsf{MMS}}

\title{MMS Allocation for Chores with Online Agent Arrivals}

\author[1]{Haolong Li}
\author[1]{Zehan Lin}
\author[1,2]{Huahua Miao}
\author[1]{Xiaowei Wu}

\affil[1]{University of Macau\\
\texttt{\{yc47435,yc47490,yc47994,xiaoweiwu\}@um.edu.mo}}

\affil[2]{Shenzhen Institutes of Advanced Technology,
Chinese Academy of Sciences}

\date{}

\begin{document}
\maketitle

\begin{abstract}
We study the fair allocation of $m$ indivisible chores to $n$ agents with subadditive cost functions arriving online in an arbitrary order. Upon an agent’s arrival, we are informed of her cost function and must irrevocably assign her a set of chores. We focus on the Maximin Share (MMS) fairness notion and aim to compute an allocation in which all items are assigned, and no agent incurs a cost more than $\alpha$ times her MMS.

Without any prior information about the instance (other than $n$ and $m$), we design an algorithm with a competitive ratio of $O(\min\{n, k\log^{1+\epsilon}k, \log m\})$ for any constant $\epsilon > 0$, where $k$ denotes the number of cost function types. 
Our bound matches the best known offline approximation guarantees for MMS under subadditive costs and is nearly optimal with respect to all three parameters: we show that even for binary additive cost functions, no online algorithm can achieve a competitive ratio of $o(\min\{n, k\log k, \log m\})$.

We then consider the setting in which the $k$ cost function types are known in advance (though the realized types of arriving agents are not). 
For additive cost functions, we provide an algorithm with a competitive ratio of $O(\min\{\log k, \log(kn)/\log\log(kn)\})$, and show that constant-competitive algorithms do not exist for general $k$, even for the binary additive setting. For binary additive functions when $k \le n$, we propose a $3$-competitive algorithm and establish a lower bound of $2$.
\end{abstract}

\newpage

\section{Introduction}
\label{sec:intro}

Fair allocation studies how to allocate a set $M$ of $m$ indivisible items among a group $N$ of $n$ agents with heterogeneous preferences. 
Depending on these preferences, items may be viewed either as desirable goods (e.g., resources) or undesirable chores (e.g., tasks). 
The primary objective in this area is to compute allocations that satisfy well-defined notions of fairness.
In this work, we focus on the fair allocation of chores, where each agent evaluates bundles via a non-negative cost function. 
Our main fairness concept is the \emph{maximin share} (MMS). 
Originally introduced by Budish~\cite{conf/bqgt/Budish10} for goods, this notion extends naturally to chores, where the MMS value of an agent is the minimum cost she can guarantee herself by partitioning all chores into $n$ bundles and then receiving the worst bundle (i.e., the bundle with the highest cost). 
An allocation is said to be $\alpha$-MMS, for some approximation ratio $\alpha \ge 1$, if every agent receives a bundle whose cost is at most $\alpha$ times her MMS value.

Prior work on chore allocation has mostly focused on the offline setting, where the entire set of agents and items is known in advance. 
However, many real-world applications are inherently dynamic, with information revealed sequentially.
This motivates the study of fair allocation in online settings. 
For example, Zhou et al.~\cite{icml/ZhouBW23} studied MMS approximations in a model where agents are offline, but items arrive sequentially and must be allocated irrevocably upon arrival. 
They showed that no online algorithm can achieve a competitive ratio better than $2$ without prior information. 
Song et al.~\cite{corr/abs-2507-14039} later strengthened this lower bound to $\Omega(n)$ and provided an $O(\min\{n, k\})$-approximation algorithm in the $k$-valued setting, where the cost function of each agent has at most $k$ different values.

Motivated by disaster relief scenarios, where supplies (like food or medical resources) must be distributed promptly as requests arrive, Kulkarni et al.~\cite{sigecom/KulkarniMS25} studied a complementary model in which items are given offline, and agents arrive sequentially. 
They focused on the allocation of goods and proved that no positive-MMS guarantee is achievable without prior information, even for two agents. 
To overcome this limitation, they considered the (known) $k$-type setting, where the algorithm is given a set of $k$ valuation functions, and each arriving agent reveals her type upon arrival. 
They designed a $(1/k)$-MMS competitive algorithm and complemented it with a lower bound of $\Omega(1/\sqrt{k})$.

In this work, we investigate the analogous online setting for chores, which naturally arises in dynamic task assignment problems. 
Consider the following illustrative example.

\begin{tcolorbox}[colback=gray!15, colframe=gray!15]
\textbf{Task Assignment to Volunteers.} 
Consider a campus volunteer event in which an organizer must recruit a fixed number of volunteers to complete a predefined set of tasks. 
Volunteers arrive sequentially; upon arrival, the organizer learns each volunteer's skills and preferences and must irrevocably assign a subset of tasks. 
The organizer's goal is to complete all tasks while ensuring that no volunteer feels unfairly treated.
\end{tcolorbox}

In this setting, the organizer knows the total number of volunteers in advance but learns each agent’s cost function only upon arrival, and the assignments of tasks are irrevocable. 
This motivates the online model that we formally define next.
\paragraph{Chore Allocation to Online Agents.}
In this model, a set of $m$ items is given offline, while $n$ agents arrive online sequentially; the value of $n$ is known to the algorithm. 
Upon the arrival of an agent $i \in N$, the algorithm learns her cost function $c_i$ and must irrevocably assign a subset of unallocated items to her. 
The goal is to compute a full allocation (in which all items are allocated) that minimizes the MMS approximation ratio.

\medskip

In this work, we assume that agents have subadditive cost functions and focus only on deterministic online algorithms. 
In contrast to the case of goods, for chore allocation, any online algorithm trivially achieves a competitive ratio of at most $n$, since every allocation is $n$-MMS. 
This raises the natural question of whether strictly better competitive ratios can be obtained. 
Unfortunately, as we show next, this is not possible in general.

\paragraph{Hardness of the Problem.}
Consider, for example, the case of two agents. 
Suppose the first agent has a unit cost for each of the $m$ chores. 
Unless the algorithm assigns more than $m-2$ chores to the first agent, there exists an instance in which the second agent has cost $1$ on two of the remaining chores and cost $0$ on the other items, for which the allocation is not better than $2$-MMS. 
On the other hand, if more than $m-2$ chores are assigned to the first agent, then her MMS approximation ratio approaches $2$ as $m \to \infty$.
This hard instance naturally extends to $n$ agents. 
In particular, consider binary cost functions with a nested structure: each arriving agent has cost $1$ on all currently unallocated chores and cost $0$ on all previously allocated ones. 
For sufficiently large $m$, this construction implies that no online algorithm can achieve a competitive ratio strictly better than $n$.

\medskip

The above hardness result shows that beating the competitive ratio of $n$ is impossible in general. 
However, the construction relies on instances with a very large number of chores (e.g., exponential in $n$) and entirely heterogeneous agent cost functions, which may be unrealistic in practice. 
For example, in the above task assignment example, the number of tasks is typically bounded, and the volunteers naturally fall into a small number of types based on their skills or preferences. 
These observations motivate the following question.

\begin{quote}
    \emph{Can we obtain competitive ratios strictly better than $n$ when the number of chores is bounded, or when agents belong to a small number of types?}
\end{quote}

Furthermore, following prior work~\cite{sigecom/KulkarniMS25} on online fair allocation with type information, we investigate whether additional knowledge about cost functions can lead to improved guarantees. 
In particular, we ask:

\begin{quote}
    \emph{Can we design algorithms with improved (e.g., polylogarithmic or constant) competitive ratios when the types of cost functions are known?}
\end{quote}

In this work, we answer both questions in the affirmative.

\subsection{Our Contributions}
We begin with the fully uninformed setting, in which the algorithm knows only $n$ and $m$ and has no prior knowledge about the agents' cost functions.
We show that, despite this lack of information, there exists an algorithm with a competitive ratio strictly better than $n$ when either the number of items is bounded or the agents belong to a small number of types.
Moreover, the achieved competitive ratio is nearly optimal.

\begin{tcolorbox}[colback=gray!15, colframe=gray!15]
{\bf Result 1} (Theorem \ref{thm:fully-uninformed-upper}){\bf .} \label{Result1}
{\em For the allocation of chores to online agents with subadditive cost functions, if the agents belong to $k$ unknown types, there exists an algorithm that guarantees an $O\!\left( \min\!\left\{ n, k\log^{1+\epsilon}k,\log m \right\} \right)$ competitive ratio for any constant $\epsilon > 0$.}
\end{tcolorbox}

Our algorithm is near-optimal in two respects.
First, its competitive ratio matches the best known offline approximation ratio for MMS under subadditive cost functions due to Li et al.~\cite{conf/nips/0037WZ23}, who also proved a nearly matching lower bound of $\Omega(\min\{n,\log m/\log\log m\})$.
We further adapt their analysis to establish a lower bound of $\Omega(k)$ when the agents belong to $k$ types.
Second, we prove a nearly matching lower bound of $\Omega\left(\min\{n, k\log k, \log m\}\right)$ for all deterministic online algorithms, even for binary additive instances.

Our algorithm is based on the simple idea of asking each online agent to take sufficiently many items so that all items are eventually allocated.
Moreover, by using the same MMS partition for agents of the same type and assigning different bundles to different agents, we can guarantee that all items are allocated when sufficiently many agents of that type arrive.
The main challenge is to achieve an $O(k\log k)$ competitive ratio when the number $k$ of agent types is unknown.
For example, when the first agent arrives, we can allocate to it only $O(1)$ bundles from her MMS partition; otherwise, in the case where all remaining agents have the same type as agent~1 (i.e., $k=1$), the resulting approximation ratio would be $\omega(k\log k)$.
To overcome this difficulty, we maintain an estimate of the number of agent types and use it to dynamically adjust the number $\ell_i$ of bundles that each online agent $i$ takes from her MMS partition.
With a carefully designed choice of $\ell_i$, we show that all items are allocated while the MMS approximation ratio remains bounded.

\medskip

We next turn to the known-types setting studied in previous work, where the algorithm is given a set of $k$ cost functions (which are the types) but learns the type of an online agent only upon her arrival.
We investigate whether strictly better competitive ratios can be achieved in this setting.
Unfortunately, as discussed above, the hardness of approximation in the offline setting~\cite{conf/nips/0037WZ23} leaves very little room for improvement under subadditive cost functions.
We therefore focus on additive and binary additive cost functions.
We emphasize that, in the known-types setting, not all agent types are guaranteed to appear.
This distinction is particularly important in the binary setting, since otherwise the problem becomes trivial: we could simply wait and assign each item $e$ to an arriving agent $i$ with $c_i(e)=0$, thereby reducing the problem to allocating only those items for which every type incurs cost~$1$, which is a trivial problem.
Consequently, our model allows $k > n$.
Our next two results show that the MMS guarantee can be improved substantially in the known-types setting.

\begin{tcolorbox}[colback=gray!15, colframe=gray!15]
{\bf Result 2} (Theorem \ref{thm:known-types-additive-upper}){\bf .} \label{Result2}
{\em For the allocation of chores to online agents with additive cost functions, if the agents belong to $k$ known types, there exists an algorithm that guarantees an $O\!\left(\min\left\{ \frac{\log(kn)}{\log\log(kn)}, \log k \right\}\right)$ competitive ratio.}
\end{tcolorbox}
This result demonstrates that logarithmic competitive ratios are achievable in the known-types setting and establishes a sharp separation between additive and subadditive cost functions.

Our main algorithmic idea is to reduce the online allocation problem to computing an offline partition of items that is fair for all types. Specifically, we seek a partition of the items into $n$ bundles such that the partition is $\alpha$-MMS with respect to every cost function type. We refer to such a partition as an $\alpha$-approximate \emph{universal partition}. Clearly, if an $\alpha$-approximate universal partition can be computed, then assigning each arriving agent an arbitrary unallocated bundle from the partition yields an $\alpha$-MMS allocation.
Complementing this positive result, we provide a hard instance showing that the best achievable approximation factor for universal partitions with $k$ cost functions is $\Omega\!\left(\frac{\log k}{\log\log k}\right)$.

\smallskip

Although binary additive costs have a simple structure, all of our hardness results already hold for binary instances. We therefore study this setting to identify the core difficulty behind the general additive problem.
Unfortunately, a direct consequence of the $\Omega(\log m)$ hardness result for the uninformed setting implies that no online algorithm can guarantee an $o(\log \log k)$ competitive ratio in the known-types binary additive setting.
In particular, constant-competitive algorithms do not exist for general $k$.
We therefore focus on the regime $k \leq n$ and develop a $3$-competitive algorithm.

\begin{tcolorbox}[colback=gray!15, colframe=gray!15]
{\bf Result 3} (Theorem \ref{thm:known-binary-three-mms}){\bf .} \label{Result3}
{\em There exists a $3$-competitive algorithm for the allocation of chores to online agents with binary additive cost functions, provided that the agents belong to $k$ known types and $k \leq n$.}
\end{tcolorbox}

Unlike a universal partition, our binary algorithm chooses each bundle after the current type is revealed. It controls the cost for the receiving agent while keeping the remaining chores safe for every type. We call this a \emph{universal residual}.
We first show that a certain randomized allocation process achieves a bounded MMS approximation ratio with positive probability, and then derandomize the process via an appropriately designed potential function.
However, there is a crucial difference: rather than requiring, as universal partitions do, that every bundle has small cost under every cost function type, we relax this condition.
Specifically, we compute the bundles on-the-fly as agents arrive, and we require only that the allocated bundle has a small cost for the receiving agent, while the remaining set of items maintains a small cost for all types.
We refer to this property as maintaining a \emph{universal residual}.
In the binary additive setting, this can be achieved by assigning a random bundle of fixed size determined by the arriving agent's cost function.
We prove that such random allocations preserve the universal residual property with positive probability. Consequently, by defining a suitable potential function over the residual set of items, we obtain a polynomial-time derandomization of the process.
Complementing this positive result, we show that exact MMS allocations remain unattainable even in this restricted regime: no deterministic algorithm can guarantee better than a $2$-MMS allocation even when $k=O(1)$.

\subsection{Technical Overview}

We provide an overview of the main techniques used in the known-types setting.

\paragraph{Universal partitions.}
Although all possible cost-function types are known in advance, the type of each arriving agent is revealed only upon arrival. Thus, an allocation decision made for the current agent may leave an unfavorable set of chores for future agents. Our first approach avoids this online uncertainty by constructing, before any agent arrives, a partition that is approximately MMS for every possible type. We call such a partition a \emph{universal partition}. Once it is computed, each arriving agent can simply receive an arbitrary unused bundle, regardless of her type or the arrival order.
We give two constructions. The first assigns every item independently and uniformly to one of the $n$ bundles. Since the desired cost bound must hold for every type and every bundle, there are $kn$ constraints to control. A potential-function analysis gives an $O\left(\frac{\log(kn)}{\log\log(kn)}\right)$ guarantee, and the random process can be derandomized by conditional expectation.
To remove the dependence on $n$, we instead construct the bundles one at a time. In each round, we select a bundle whose cost is small for every type while ensuring that the remaining chores still have sufficiently small total cost for every type. Thus, each round only needs to control the $k$ types rather than all $kn$ type-bundle pairs. A two-sided potential function guarantees that such a suitable bundle exists and can be found deterministically, which gives an $O(\log k)$ approximation.

\paragraph{Universal residuals.}
For binary additive costs, we can further exploit the type revealed by the current agent. Unlike a universal partition, we no longer require the current bundle to have small cost for every possible type. It only needs to have small cost for the receiving agent. At the same time, we keep the remaining chores sufficiently small under every type, so that future agents can still be served fairly. We call this property a \emph{universal residual}.
Concretely, when an agent arrives, we first allocate all remaining chores that have zero cost for her. From the remaining costly chores, we select a fixed-size subset for her. The size is chosen so that her cost is at most three times her MMS. We define a potential function that measures whether the residual set may become too large for any type. A uniformly random choice of the fixed-size subset does not increase this potential in expectation. We then derandomize this choice by conditional expectation, which ensures that the universal residual is maintained in every round and gives a deterministic $3$-competitive algorithm when $k\leq n$.

\subsection{Related Work}

Given the vast literature on fair allocation, we focus on the results most closely related to our work. For a more comprehensive overview, we refer readers to the surveys by Aleksandrov and Walsh~\cite{conf/aaai/AleksandrovW20} and Amanatidis et al.~\cite{journals/ai/AmanatidisABFLMVW23}.

\paragraph{Online Fair Allocation for Chores.}
Most prior work in this domain focuses on the item-arrival model.
Without prior information, Zhou et al.~\cite{icml/ZhouBW23} showed that no deterministic algorithm can do better than $2$-MMS, though knowing the total cost in advance enables a $(2-1/n)$-MMS guarantee.
Subsequent works further established $\Omega(n)$ impossibility results in the absence of prior information~\cite{corr/abs-2507-14039,journals/corr/abs-2507-12984}.
Neoh and Teh~\cite{neoh2026closing} further extended these impossibility results to randomized algorithms against adaptive adversaries, showing that no $\alpha$-MMS guarantee is possible for any $1\leq\alpha<n$.
Song et al.~\cite{corr/abs-2507-14039} designed an online algorithm that achieves a $\min\{n,O(k),O(\log D)\}$-MMS approximation, where the cost functions of agents are $k$-valued, and $D$ is the ratio between the maximum cost and the minimum cost.
Wang and Wei~\cite{journals/corr/abs-2505-24321} studied binary additive costs and showed that exact MMS allocations can be guaranteed online.
For the more general personalized bivalued setting, where costs are in $\{1,k_i\}$ with $k_i>1$, they established a $5/3$-MMS guarantee for two agents.
This positive result was extended to an arbitrary number of agents with a $(2+\sqrt{3})$-MMS approximation by Song et al.~\cite{corr/abs-2507-14039}.
%\rev{For proportionality up to one chore, Neoh and Teh~\cite{neoh2026closing} showed that no online algorithm can achieve a factor below $n$ against adaptive adversaries.}\rev{For mixed manna, Choi et al.~\cite{choi2026temporal} gave an online algorithm that maintains EF1 after every item arrival when there are only two types of items.} \rev{Amanatidis et al.~\cite{amanatidis2026buffers} obtained EF1 guarantees for personalized finite-value mixed-manna instances by allowing a bounded buffer to delay and reorder item allocations.}

\paragraph{Online Fair Allocation for Goods.}
The goods setting has been studied under both item- and agent-arrival models. 
Closest to our work is that of Kulkarni et al.~\cite{sigecom/KulkarniMS25}, who studied MMS allocations for online arriving agents. 
They studied the (known) $k$-type setting and established a $1/k$ lower bound and a $2/(\sqrt{k}-2)$ upper bound for MMS approximations under the adversarial arrival setting. 
En et al.~\cite{journals/corr/abs-2606-18679} also investigated online resource allocation under the agent-arrival model.
Instead of allocating items to agents, their model allocates online agents to offline facilities to maximize the total utility. 
For the item arrival model, existing work also investigated vanishing envy over time~\cite{conf/sigecom/BenadeKPP18}, and fairness and efficiency trade-offs under random distributions~\cite{conf/sigecom/ZengP20}. Additionally, a growing line of literature focuses on online allocation with predictions or prior knowledge regarding future items~\cite{banerjee2022online,journals/corr/abs-2505-24503,chen2026competitiveanalysisonlinefair}.
From a machine learning perspective, Procaccia et al.~\cite{conf/nips/ProcacciaS024} and Yamada et al.~\cite{conf/aistats/YamadaKAI24} examined online fair allocation with indivisible items through bandit learning frameworks. 
Moreover, a separate line of literature on divisible resources focused on maintaining fairness during sequential agent arrivals with minimal allocation updates~\cite{jair/KashPS14,ec/FriedmanPV17,moor/VardiPF22,conf/ijcai/LiLL18}.

\paragraph{Offline Approximate MMS for Chores.}
For offline chore allocation, exact MMS allocations exist for two agents but need not exist for $n\ge 3$~\cite{conf/wine/FeigeST21,conf/aaai/AzizRSW17}. 
For additive cost functions, the MMS approximations have been extensively studied under both cardinal and ordinal preferences~\cite{conf/aaai/AzizRSW17,journals/mp/AzizLW24,conf/sigecom/FeigeH23,journals/teco/BarmanK20,conf/sigecom/HuangL21}.
The best known cardinal approximation ratio is $13/11$~\cite{conf/sigecom/HuangS23}, and a lower bound of $44/43$ is known even for three agents~\cite{conf/wine/FeigeST21}.
In ordinal models, the known guarantees are weaker, improving from $2-1/n$~\cite{conf/aaai/AzizRSW17} to $5/3$~\cite{journals/mp/AzizLW24} and then to $8/5$~\cite{conf/sigecom/FeigeH23}. 
For submodular chore allocation, Li et al.~\cite{conf/nips/0037WZ23} established an $\Omega(\min\{n,\log m/\log\log m\})$ lower bound and designed an algorithm that computes an $O(\min\{n,\log m\})$-MMS allocation for subadditive cost functions.

\section{Preliminaries}
\label{sec:preliminary}

For every positive integer $q$, let $[q]=\{1,\ldots,q\}$. We study the allocation of a set $M$ of $m$ indivisible chores to a set $N=[n]$ of agents who arrive online. When agent $i$ arrives, her cost function $c_i:2^M\to\mathbb R_{\geq0}$ is revealed, and the algorithm must irrevocably assign her a bundle $X_i$ of unallocated chores. The resulting allocation $\mathbf X=(X_1,\ldots,X_n)$ is an ordered $n$-partition of $M$.
For convenience we write $c(e):=c(\{e\})$ for cost function $c$ and item $e\in M$. Unless stated otherwise, we assume that all cost functions satisfy $c(\varnothing)=0$ and are monotone.
For agents sharing the same cost function, we say that they belong to the same \emph{type}.

In this paper, we study different classes of cost functions.

\begin{definition}[Subadditive]\label{def:subadditive}
A cost function $c:2^M\to \mathbb{R}_{\ge 0}$ is \emph{subadditive} if
\begin{equation*}
    c(S\cup T)\le c(S)+c(T), \quad \forall S,T\subseteq M.
\end{equation*}
\end{definition}

\begin{definition}[Additive]\label{def:additive}
A cost function $c:2^M\to \mathbb{R}_{\ge 0}$ is \emph{additive} if
\begin{equation*}
    c(S) = \sum_{e \in S} c(e), \quad \forall S\subseteq M.
\end{equation*}
\end{definition}

\begin{definition}[Binary Additive]
\label{def:binary}
An additive cost function $c:2^M\to \mathbb{R}_{\ge 0}$ is \emph{binary} if for every item $e \in M$, we have $c(e) \in \{0, 1\}$.
\end{definition}

In this work, we focus on the fairness notion of (approximate) maximin share, which we formally define as follows.

\begin{definition}[Maximin Share]
    Given a set of chores $M$ and $n$ agents, the maximin share (MMS) value of agent $i$ is defined as
    \begin{equation*}
        \MMS_i (M) :=\min_{\bX \in \prod_n(M)} \max_{j \in N}  \ \left\{c_i(X_j)\right\},
    \end{equation*}
    where $\prod_n(M)$ denotes the set of all $n$-partitions of $M$.
\end{definition}

We use
$\MMS_i$ to denote $\MMS_i(M)$ when $M$ is clear from the context.
We call an $n$-partition $\mathbf{P} = (P_1,P_2,\ldots,P_n)$ of $M$ an \emph{MMS partition} of agent $i$ if $c_i(P_j)\leq \MMS_i$ holds for all $j\in [n]$.
Note that there might be multiple MMS partitions for a given cost function.
However, we can fix one of them arbitrarily, and therefore, in this paper, we assume that there exists a \emph{unique} MMS partition for every given cost function.
This is crucial when multiple agents share the same cost function.

\begin{definition}[$\alpha$-MMS]
    An allocation $\bX$ is said to satisfy $\alpha$-approximate maximin share guarantee for some $\alpha \geq 1$ if for every agent $i \in N$, we have $c_i(X_i) \leq \alpha \cdot \MMS_i$.
    In particular, when $\alpha = 1$, $\bX$ is an \emph{MMS} allocation.
\end{definition}

By convention of online algorithm analysis, we say that an algorithm is $\alpha$-competitive if it computes an $\alpha$-MMS allocation for any online instance.
In the remainder of this paper, we use $\log$ to denote the natural logarithm (with base $e$).

\section{Unknown Subadditive Cost Functions}
\label{sec:fully-uninformed}

In this section, we study the fully uninformed setting, where the online algorithm knows only $n$ and $m$. We assume that agents have subadditive cost functions. Let $k$ denote the number of types among the online agents. Note that the online algorithm does not know $k$ in advance. We present an algorithm that guarantees an $O(\min\{n,\, k\log^{1+\epsilon}k,\, \log m\})$-MMS allocation for any constant $\epsilon>0$. 
Moreover, we establish nearly matching lower bounds on the MMS guarantee achievable, showing that our algorithm is nearly optimal.

\subsection{The Online Algorithm}
\label{sec:fully-uninformed-upper}

We first present the algorithm and prove an upper bound of $O(\min\{n,\, k\log^{1+\epsilon}k,\, \log m\})$ on its competitive ratio. Since every allocation is trivially $n$-MMS, we focus on the bounds that depend on $k$ and $m$.
At a high level, the algorithm asks each online agent to take a sufficiently large number of items so that only a limited number remain when the last agent arrives. Furthermore, we exploit the $k$-type structure by assigning different bundles in the MMS partition (recall that we fix a unique MMS partition\footnote{Since the computation of an MMS partition is NP-hard in general, if computational complexity matters, we can compute an $O(1)$-approximation of the MMS partition in polynomial time. It can be verified that this does not affect the asymptotic competitive ratio of our algorithms.} for each cost function) to agents of the same type. Consequently, when sufficiently many agents of a given type appear, we can ensure that all items are eventually allocated. For example, when all agents belong to a single type (i.e., $k=1$), we can compute an MMS partition and assign a distinct bundle to each agent, thereby obtaining an exact MMS allocation. However, since the value of $k$ is unknown in advance, the algorithm must proceed adaptively: it updates its estimate on $k$ based on the types observed so far and requires agents to take more bundles as additional types are discovered.

Motivated by these ideas, our algorithm (Algorithm~\ref{alg:main-alg}) operates as follows.

For each agent $i$, let $k_i$ denote the number of distinct cost functions observed up to and including agent $i$. The algorithm then asks agent $i$ to take $\ell_i$ bundles from the MMS partition of $c_i$, where the selected bundles are those containing the largest numbers of currently unallocated items. The value of $\ell_i$ is determined as a function of $k_i$.
Specifically, we fix a small constant $\epsilon > 0$, define $\delta_\epsilon:=\frac{3}{2\log^{1+\epsilon} 2} + \frac{1}{\epsilon\log^\epsilon2}$, and define function
\begin{equation*}
    f(r):=2 \lceil\delta_\epsilon\cdot r\cdot \log^{1+\epsilon}(r+1)\rceil.
\end{equation*}
Let $L:=2(\lceil\log m\rceil+1)$ and $\ell_i=\min\{n,f(k_i),L\}$.

\begin{algorithm}[!htb]
\caption{$O(\min\{n, k\log^{1+\epsilon}k,
\log m\})$-MMS allocations for subadditive costs}
\label{alg:main-alg}
\KwIn{A set $M$ of items, the number of online agents $n$, and a fixed constant $\epsilon>0$}

Let $R_1 \gets M$ be the set of remaining items.

\For{each online agent $i$, where $i=1,2,\ldots,n$}{
    Compute the MMS partition $\mathbf P^{(i)}=(P^{(i)}_1,\ldots,P^{(i)}_n)$ with respect to $c_i$\;

    Let $k_i$ be the number of arrived types so far\;
    
    $\ell_i\gets\min\{n, f(k_i),L\}$;

    Relabel the bundles of $\mathbf P^{(i)}$ so that
    $|P^{(i)}_1\cap R_i|\geq\cdots\geq|P^{(i)}_n\cap R_i|$;

    Let $X_i\gets
    \bigcup_{j=1}^{\ell_i}(P^{(i)}_j\cap R_i)$ be union of the $\ell_i$ bundles with most unallocated items;

    Update the set of remaining items: $R_{i+1} \gets R_i\setminus X_i$;
}
\KwOut{Allocation $\mathbf X=(X_1,\ldots,X_n)$}
\end{algorithm}

We prove the following main result of this section.

\begin{theorem}
\label{thm:fully-uninformed-upper}
For any constant $\epsilon>0$, Algorithm~\ref{alg:main-alg} returns an $O( \min\!\left\{n,k\log^{1+\epsilon}k,\log m\right\})$-MMS allocation for subadditive cost functions.
\end{theorem}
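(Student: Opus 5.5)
The plan is to prove two separate facts about Algorithm~\ref{alg:main-alg}: every agent's cost is small, and the output is a full allocation, meaning $R_{n+1}=\varnothing$.

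\textbf{Cost bound.} Agent $i$ receives the union of at most $\ell_i$ sets $P^{(i)}_j\cap R_i$. By monotonicity each of these sets has cost at most $c_i(P^{(i)}_j)\le \MMS_i$. By subadditivity, $c_i(X_i)\le \ell_i\cdot\MMS_i$. Since $k_i\le k$ and $f$ is nondecreasing,
\[
\ell_i\le\min\{n,\,f(k),\,L\}.
\]
Here $f(k)=O(k\log^{1+\epsilon}k)$ for constant $\epsilon$, and $L=O(\log m)$. This gives the claimed ratio, so the real content is feasibility.

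\textbf{Two per-agent facts.} The feasibility argument rests on two observations about a single step.

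\emph{(a) Shrinkage.} Agent $i$ takes the $\ell_i$ bundles with the most remaining items. The other $n-\ell_i$ bundles therefore hold at most a $(n-\ell_i)/n$ fraction of $R_i$, so
\[
|R_{i+1}|\le (1-\ell_i/n)\,|R_i|\le e^{-\ell_i/n}|R_i|.
\]

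\emph{(b) Exhaustion within a type.} All agents of one type use the same fixed MMS partition. Bundles taken earlier by that type have empty intersection with the remaining set. Hence each later agent of that type picks bundles still containing items, unless fewer than $\ell_i$ nonempty bundles remain, in which case she takes everything that remains. So if the agents of some type together request at least $n$ bundles, that type's whole partition, which is all of $M$, has been allocated. If $\ell_i=n$ for some agent, we finish immediately for the same reason.

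\textbf{Counting argument.} Assume for contradiction that $R_{n+1}\ne\varnothing$. By (b), every type $t$ satisfies $\sum_{i:\,\text{type } t}\ell_i<n$.

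Split the agents into those with $\ell_i=f(k_i)<L$ and those with $\ell_i=L$; the case $\ell_i=n$ is already handled.

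Consider the $j$-th discovered type. Each of its agents has $k_i\ge j$, so each has $\ell_i\ge\min\{f(j),L\}$. Its agents in the first group therefore number fewer than $n/f(j)$. Summing over $j$, the first group has fewer than
\[
n\sum_{j\ge1}\frac1{f(j)}\le \frac{n}{2\delta_\epsilon}\sum_{j\ge1}\frac{1}{j\log^{1+\epsilon}(j+1)}
\]
agents. The main computation is showing this sum is at most $\delta_\epsilon$, so the first group is smaller than $n/2$. I would bound the $j=1$ term (and perhaps $j=2$) separately and compare the tail with $\int dx/(x\log^{1+\epsilon}x)=1/(\epsilon\log^{\epsilon}x)$. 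The constant $\delta_\epsilon=\frac{3}{2\log^{1+\epsilon}2}+\frac{1}{\epsilon\log^\epsilon 2}$ appears tailored to this comparison: the first term covers the first one or two summands, and the integral covers the rest.

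Consequently at least $n/2$ agents have $\ell_i=L$. Applying (a),
\[
|R_{n+1}|\le m\,e^{-(n/2)(L/n)}=m\,e^{-(\lceil\log m\rceil+1)}<1.
\]
So $R_{n+1}=\varnothing$, which is the contradiction.

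I expect the main obstacle to be the series estimate, together with getting the factor $2$ in $f$ and the factor $2$ in $L$ to match so the two groups balance exactly as above.
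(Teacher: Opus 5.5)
Your proposal is correct and follows the paper's proof essentially step for step: the subadditivity cost bound, then the per-type exhaustion argument combined with $\sum_{r\ge1} 1/f(r)\le 1/2$, then geometric shrinkage over the at least $n/2$ agents with $\ell_i=L$. The paper proves the series bound exactly as you sketch, separating the $r=1,2$ terms and comparing the tail with $\int_2^\infty \frac{dx}{x\log^{1+\epsilon}x}$; the only cosmetic difference is that you count the agents with $\ell_i<L$ directly, whereas the paper locates the first agent $i^*$ with $\ell_{i^*}=L$ and uses monotonicity of $\ell_i$, and your version is if anything slightly cleaner.
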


It is straightforward to establish the approximation ratio of the allocation by the monotonicity and subadditivity of the cost functions.
Consider an arbitrary agent $i$. We have
\begin{align*}
    c_i(X_i) &\leq \sum_{j=1}^{\ell_i}c_i(P^{(i)}_j\cap R_{i}) \leq \sum_{j=1}^{\ell_i}c_i(P^{(i)}_j) \leq \ell_i\cdot \MMS_i.
\end{align*}

Therefore, the approximation ratio of MMS is $\max_{i\in N}\{\ell_i\} = O( \min\!\left\{n,k\log^{1+\epsilon}k,\log m\right\})$, where $k = k_n$ is the total number of agent types.
It remains to show that all items are allocated.

We first establish a technical lemma, which (intuitively speaking) states that the function values of $f$ are sufficiently large to ensure a full allocation.

\begin{lemma}
\label{lemma:function-f}
The function $f$ is nondecreasing and satisfies
$$
\sum_{r=1}^{\infty}\frac{1}{f(r)}\leq\frac{1}{2}.
$$
\end{lemma}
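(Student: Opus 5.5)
Monotonicity is immediate. The map $r\mapsto \delta_\epsilon r\log^{1+\epsilon}(r+1)$ is a product of positive nondecreasing functions on $[1,\infty)$. Composing with the nondecreasing maps $x\mapsto\lceil x\rceil$ and $x\mapsto 2x$ preserves this, so $f$ is nondecreasing.

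For the sum, I drop the ceiling using $\lceil x\rceil\ge x$. This gives
\[
\frac{1}{f(r)}\le \frac{1}{2\delta_\epsilon\, r\log^{1+\epsilon}(r+1)},
\]
so it suffices to show $\sum_{r\ge1}\frac{1}{r\log^{1+\epsilon}(r+1)}\le \delta_\epsilon$.

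I split off the first two terms and bound the tail by an integral.
\begin{itemize}
\item The $r=1$ term is $1/\log^{1+\epsilon}2$.
\item The $r=2$ term is $\frac{1}{2\log^{1+\epsilon}3}\le \frac{1}{2\log^{1+\epsilon}2}$.
\item Together these give at most $\frac{3}{2\log^{1+\epsilon}2}$, which is exactly the first summand of $\delta_\epsilon$.
\end{itemize}
For $r\ge3$, I use $\frac{1}{r\log^{1+\epsilon}(r+1)}\le \frac{1}{r\log^{1+\epsilon} r}$. Since $x\mapsto 1/(x\log^{1+\epsilon}x)$ is decreasing for $x>1$, I can compare with an integral:
\[
\sum_{r\ge3}\frac{1}{r\log^{1+\epsilon}r}\le\int_2^\infty\frac{dx}{x\log^{1+\epsilon}x}=\frac{1}{\epsilon\log^{\epsilon}2}.
\]
The integral is evaluated by the substitution $u=\log x$. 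This matches the second summand of $\delta_\epsilon$, so the total is at most $\delta_\epsilon$ and $\sum_r 1/f(r)\le \frac12$.

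The only delicate step is aligning the split with the two pieces of $\delta_\epsilon$. The constant was evidently designed for this particular split: terms $r=1,2$ handled directly, and the tail compared against $\int_2^\infty$ via the decreasing-function comparison $\frac{1}{r\log^{1+\epsilon}r}\le\int_{r-1}^{r}\frac{dx}{x\log^{1+\epsilon}x}$ for $r\ge3$. If a different grouping looked necessary, I would still bound the tail by an integral and check the constants against $\delta_\epsilon$. Everything else is routine.
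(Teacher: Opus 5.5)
Your proof is correct and follows the paper's argument essentially step for step. You drop the ceiling to reduce to $\sum_{r\ge1} 1/(r\log^{1+\epsilon}(r+1))\le\delta_\epsilon$, bound the $r=1,2$ terms by $\frac{3}{2\log^{1+\epsilon}2}$, and compare the $r\ge 3$ tail (after replacing $\log(r+1)$ by $\log r$) with $\int_2^\infty \frac{dx}{x\log^{1+\epsilon}x}=\frac{1}{\epsilon\log^{\epsilon}2}$, with the integral comparison spelled out via the decreasing integrand slightly more explicitly than the paper does.
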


\begin{proof}
The function $f$ is nondecreasing by definition. 
Furthermore, we have 
\begin{equation*}
    \sum_{r=1}^{\infty}\frac{1}{f(r)} \leq \frac{1}{2\delta_\epsilon} \cdot \sum_{r=1}^{\infty} \frac{1}{r\cdot \log^{1+\epsilon}(r+1)}.
\end{equation*}

For $r\geq2$, we have $\log(r+1)\geq\log r$. The integral test gives
\begin{align*}
    \sum_{r=1}^{\infty} \frac{1}{r\log^{1+\epsilon}(r+1)}
    &\leq \frac{1}{\log^{1+\epsilon}2} + \frac{1}{2\log^{1+\epsilon}2} + \sum_{r=3}^{\infty} \frac{1}{r\log^{1+\epsilon} r}\\
    &\leq \frac{3}{2\log^{1+\epsilon}2} + \int_2^\infty \frac{\mathrm{d}x}{x\log^{1+\epsilon} x}
    = \frac{3}{2\log^{1+\epsilon}2} + \frac{1}{\epsilon\log^\epsilon2} = \delta_\epsilon.
\end{align*}

Combining the two inequalities yields the lemma.
\end{proof}

Using the property of $f$, we show that all items are allocated.

\begin{lemma} \label{lemma:all_item_allocated_uninformed}
    All items are allocated when Algorithm~\ref{alg:main-alg} terminates.
\end{lemma}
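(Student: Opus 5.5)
The plan is a proof by contradiction that combines two ways in which items get used up. First, assume some agent has $\ell_i=n$. Then $X_i=R_i$, because the $n$ bundles of $\mathbf P^{(i)}$ partition $M$ and hence cover $R_i$, and we are done. So from now on I assume $\ell_i=\min\{f(k_i),L\}$ for every $i$, and that $R_{n+1}\neq\varnothing$.

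\textbf{Fact (a): geometric shrinkage.} The sets $P^{(i)}_j\cap R_i$, $j\in[n]$, partition $R_i$. Agent $i$ takes the $\ell_i$ of them with the most items, so $|X_i|\ge \frac{\ell_i}{n}|R_i|$, and therefore
\[
|R_{i+1}|\le \Bigl(1-\tfrac{\ell_i}{n}\Bigr)|R_i| .
\]
Every factor is at most $1$, so for any subset $T\subseteq N$ of agents we get
\[
|R_{n+1}|\le m\prod_{i\in T}\Bigl(1-\tfrac{\ell_i}{n}\Bigr)\le m\exp\Bigl(-\sum_{i\in T}\ell_i/n\Bigr).
\]

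\textbf{Fact (b): same-type agents use disjoint bundles.} All agents of a given type use the same fixed MMS partition. Once a bundle of that partition has been taken, it has no remaining items, so it sorts to the bottom of the relabeling. Consequently each later agent of that type takes $\ell_i$ bundles that are fresh, unless every remaining bundle is already empty, in which case $R$ is empty. It follows that if the agents of some type $t$ have $\sum \ell_i\ge n$, then every bundle of that partition has been taken and $R_{n+1}=\varnothing$. Under our assumption, therefore, $\sum_{i\text{ of type }t}\ell_i<n$ for every type $t$.

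\textbf{Splitting the agents.} Let $S=\{i:\ell_i<L\}$, so that $\ell_i=f(k_i)$ for $i\in S$. Index the types by order of first appearance. An agent $i$ of type $t$ arrives no earlier than the first agent of type $t$, so $k_i\ge t$, and since $f$ is nondecreasing (Lemma~\ref{lemma:function-f}), $\ell_i\ge f(t)$ for $i\in S$. Let $s_t$ be the number of type-$t$ agents in $S$. Fact (b) gives $s_t f(t)<n$, and summing over $t$ with Lemma~\ref{lemma:function-f} yields
\[
|S|<n\sum_t \frac{1}{f(t)}\le \frac{n}{2}.
\]
Hence the set $T=N\setminus S$ of agents with $\ell_i=L$ has $|T|>n/2$.

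\textbf{Conclusion.} Applying Fact (a) to this $T$ gives
\[
|R_{n+1}|\le m\exp(-|T|L/n)<m\exp(-L/2)=m\,e^{-(\lceil\log m\rceil+1)}<1,
\]
so $R_{n+1}=\varnothing$, a contradiction.

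The main obstacle is Fact (b). It relies on the fixed unique MMS partition per cost function and on the tie-breaking behaviour of the relabeling, in particular that an exhausted bundle is never re-counted while a nonempty one is skipped. I would state it carefully as an invariant: for each type, the bundles taken so far are distinct, or $R$ is empty. The remaining steps are routine.
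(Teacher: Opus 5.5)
Your proof is correct and is essentially the paper's argument: per-type bundle counting together with Lemma~\ref{lemma:function-f} shows that fewer than $n/2$ agents have $\ell_i<L$, and geometric shrinkage over the remaining agents with $\ell_i=L$ then forces $|R_{n+1}|<1$. The differences are cosmetic. You work with the set $S=\{i:\ell_i<L\}$ instead of the prefix before the first agent $i^*$ with $\ell_i=L$, and you state explicitly the fresh-bundle invariant that the paper leaves implicit in ``by the design of the algorithm.''
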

\begin{proof}
Suppose, for contradiction, that some items are left unallocated after all agents have arrived. 
This implies that every agent $i$ takes exactly $\ell_i$ non-empty bundles $P^{(i)}_j\cap R_i$ when she arrives.
Recall that $\ell_i = \min\{n, f(k_i),L\}$ and $k_i$ is non-decreasing in $i$.
Therefore, $\ell_i$ is non-decreasing and upper bounded by $L$.
Furthermore, since some items are unallocated after all agents have arrived, we must have $\ell_i \neq n$ for all $i$. 
We let $i^*$ be the first agent for which $\ell_i = L$.
If no such agent exists, we let $i^* = n+1$.
By definition, we have $\ell_i = f(k_i)$ for all agents $i < i^*$; and $\ell_i = L$ for all agents $i \geq i^*$.
We first show that $i^* \leq n/2$.

\smallskip

Fix an arbitrary type $t$ and let $N_t$ be the set of agents of type $t$ that arrive before $i^*$, and let $i_t\in N_t$ be the agent who arrives first.
Note that all agents in $N_t$ share the same cost function, and we use the same MMS partition whenever any of these agents arrives.
By the design of the algorithm, the total number of bundles taken by agents in $N_t$ is at least $|N_t|\cdot f(k_{i_t})$.
Since not all items are allocated, we have $|N_t|\cdot f(k_{i_t}) < n$, which implies that $|N_t| < \frac{n}{f(k_{i_t})}$.
Therefore, the total number of agents that arrived before $i^*$ is less than
\begin{equation*}
    \sum_{t = 1}^\infty \frac{n}{f(k_{i_t})} = n\cdot \sum_{t = 1}^\infty \frac{1}{f(k_{i_t})} \leq n\cdot \sum_{r = 1}^\infty \frac{1}{f(r)} \leq \frac{n}{2},
\end{equation*}
where the first inequality follows since all $k_{i_1},k_{i_2},\ldots$ take different values, and the second inequality follows from Lemma~\ref{lemma:function-f}.
Therefore, we have $i^* \leq n/2$.

\smallskip

Consequently, we have $\ell_i = L = 2(\lceil\log m\rceil+1)$ for all $i\geq n/2$.
Recall that when agent $i$ arrives, among the $n$ bundles in the MMS partition, the algorithm assigns agent $i$ the $L$ bundles with the most unallocated items.
In Algorithm~\ref{alg:main-alg}, we use $R_i$ to denote the set of unallocated items at the beginning of round $i$ (when agent $i$ arrives).
Therefore we have
\begin{equation*}
    |R_{i+1}| = |R_i \setminus X_i| \leq \left(1-\frac{L}{n}\right)\cdot|R_i|,
\end{equation*}
which implies that
\begin{equation*}
    |R_{n+1}| \leq \left(1-\frac{L}{n}\right)^{n/2}\cdot|R_{\frac{n}{2}}| \leq m\cdot\exp\left(-\frac{L}{2}\right)= m\cdot\exp\left(-\lceil\log m\rceil-1\right) < 1,
\end{equation*}
which contradicts the assumption that some items are left unallocated (i.e., $|R_{n+1}|\geq 1$).
\end{proof}

\paragraph{Improved Ratio for Known $k$.} 
As discussed, one difficulty of the problem is that we do not know the number of types and our algorithm must dynamically update $\ell_i$ based on $k_i$.
If the total number of types $k$ is known to the algorithm upfront, then we can simply fix $\ell = \min\{n, k, L/2\}$ for all agents, and still ensure that all items are allocated.
The resulting allocation can be shown to be $\min\{n,k,\lceil \log m\rceil + 1\}$-MMS (see Appendix~\ref{sec:known-number-types}).

\subsection{Lower Bound on MMS Guarantee}

In the following, we present several lower bounds for the problem.
The first follows directly from the offline approximation of MMS under subadditive cost functions, while the second applies to online algorithms for binary additive functions.

\smallskip

As mentioned in the introduction, our online algorithm achieves a competitive ratio that matches the best known offline approximation ratio for MMS under subadditive cost functions, namely $O(\min\{n,\log m\})$, due to Li et al.~\cite{conf/nips/0037WZ23}.
They also established a lower bound of
\begin{equation*}
    \Omega\!\left(\min\left\{n,\frac{\log m}{\log\log m}\right\}\right),
\end{equation*}
showing that this approximation ratio is essentially optimal up to a $\log\log m$ factor.
Adapting their hard-instance construction, we further present (in Appendix~\ref{app:known-subadditive-lower}) an instance with $n$ agents whose cost functions belong to $k$ monotone submodular types, such that every allocation is $\Omega(k)$-MMS.
This implies that our online competitive ratio is also nearly optimal with respect to the parameter $k$.

\smallskip

We next establish lower bounds on the competitive ratio of deterministic online algorithms, which largely match the upper bounds obtained in our analysis.

\begin{theorem}
\label{thm:lower-bound-subadditive-uninformed}
No online algorithm can achieve a competitive ratio of $o(\min\{n, k\log k, \log m\})$, even for binary additive costs. 
\end{theorem}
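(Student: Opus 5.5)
The plan is to use an adaptive adversary built from the nested binary construction described in the introduction, and to make the number of types $k$ depend on how the algorithm behaves. Every type is binary additive. When a new type $t$ is introduced, its cost function is the indicator of the set $S_t$ of items unallocated at that moment, so $c_t(e)=1$ for $e\in S_t$ and $c_t(e)=0$ otherwise. For $|S_t|\ge n$ we have $\MMS_t=\lceil |S_t|/n\rceil$. Hence any type-$t$ agent that receives $a$ items of $S_t$ has ratio at least roughly $a n/|S_t|$. The key invariant to keep in mind is this: if the algorithm gives type-$t$ agents ratio at most $\alpha_t$, then each such agent removes at most about $\alpha_t |S_t|/n$ items of $S_t$.

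\textbf{Phase structure.} Phase $t$ starts by introducing type $t$ with $S_t$ equal to the current remaining set $R$. The adversary then keeps sending type-$t$ agents until $|R\cap S_t|\le |S_t|/2$, and then moves to phase $t+1$. Since items outside $S_t$ are worthless to type $t$, the algorithm gains nothing by dumping them on these agents, and $R\subseteq S_t$ throughout phase $t$. So $|S_{t+1}|\ge |S_t|/2$ up to one agent's overshoot. Let $\alpha_t$ be the largest ratio the algorithm grants during phase $t$. Then phase $t$ consumes at least about $n/(2\alpha_t)$ agents. Meanwhile $|S_t|\ge m/2^{O(t)}$, so $|S_t|\ge n$ holds as long as $t\le c\log(m/n)$; this is where the $\log m$ term in the bound comes from.

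\textbf{Choosing $k$ adaptively.} Fix a target ratio $\beta(k)=o(k\log k)$ and suppose the algorithm is $\beta$-competitive. Consider the first phase $t$ in which $\alpha_t\ge c\,t\log t$. The adversary stops there, since the instance has $k=t$ types and ratio $\Omega(k\log k)$. To finish the instance it sends the remaining agents as zero-cost copies, or as copies of type $t$, which does not lower the realized ratio. If no such phase occurs, then the phases together consume at least
\[
\sum_{t\le k}\frac{n}{2c\,t\log t}
\]
agents. Because $\sum 1/(t\log t)$ diverges, this exceeds $n$ at some finite $k$, and nonempty items then remain with no agents left. That is a failure. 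If instead the last agent is forced to take everything, her ratio is about $|R|/\MMS\approx n$. In both cases the algorithm loses $\Omega(\min\{n,\ldots\})$.

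\textbf{Main obstacle.} The hard part is the quantitative tradeoff. A plain threshold $c\,t\log t$ only makes the sum exceed $n$ once $\log\log k\gtrsim c$, which gives a growth rate slightly off from $k\log k$. I would therefore replace the threshold by an arbitrary sequence $g(t)$ with $\sum_t 1/g(t)$ divergent and $g(t)=o(t\log t)$ along the relevant $k$. A given $\beta=o(k\log k)$ can be dominated by such a $g$, because $\sum_t 1/\beta(t)$ diverges whenever $\beta(t)=o(t\log t)$; this is the converse of Lemma~\ref{lemma:function-f}. Two further points need care. First, I must account carefully for agents whose take straddles the halving threshold. Second, I must make sure the three regimes ($n$, $k\log k$ and $\log m$) are each realized by an appropriate choice of $m$ relative to $n$, taking $m\ge n2^{k}$ so that $|S_t|\ge n$ through all $k$ phases.
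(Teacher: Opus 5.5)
Your handling of the $k\log k$ term is essentially the paper's proof. It uses phases whose approval set is the current residual, an adversary that stops opening types once a phase-$t$ agent exceeds the algorithm's budget for $t$ types, and divergence of $\sum_t 1/f(t)$ when $f(t)\le b\,t\log t$. The paper fixes each phase at $n/(2f(t))$ agents, which gives $|R|\ge m/2^t$ directly and avoids overshoot accounting; your halving-triggered phases also work.

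The obstacle you describe is not real. With a plain threshold $c\,t\log t$, the $e^{e^{\Theta(c)}}$ phases needed are a constant once $c$ is fixed. So for $n,m$ large in terms of $c$, the adversary already forces ratio at least $c\,k\log k$ at the realized $k$. Your budget $g\ge\beta$ with $\sum_t 1/g(t)=\infty$ is just the paper's argument with $f$ renamed.

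The genuine gap is the $n$ and $\log m$ terms. The paper proves these separately: no algorithm is $(n-\epsilon)$-competitive, and none is $\frac14\log m$-competitive, even with unrestricted $k$. Your proposal only gestures at them, and choosing $m$ ``appropriately'' in the same construction does not give them. Your adversary freezes the type count as soon as the algorithm is generous. For any budget that does not depend on $n,m$, it opens only $T=O(1)$ types. So all it guarantees is ratio above $g(k)$ for some $k\le T$.

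In particular, it cannot refute an algorithm that always hands each agent about $\sqrt{\log m}$ (or $\sqrt n$) of her MMS bundles. That algorithm crosses your threshold in phase $1$, you stop with $k=1$, and nothing contradicts a $\sqrt{\log m}=o(\log m)$ guarantee.

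What is missing is an adversary that keeps opening new types as long as the algorithm stays within the claimed bound $\alpha$. The paper opens a new type with \emph{every} agent, setting $A_i=R_i$, so $\MMS_i=\lceil r_i/n\rceil$. Then $\alpha$-competitiveness forces $r_{i+1}\ge(1-\alpha/n)r_i-\alpha$, hence $r_n\ge(1-\alpha/n)^{n-1}(m+n)-(n-1)$. Take $m=(2n/\epsilon)^n$ with $\alpha=n-\epsilon$, or $m=2^{4n}$ with $\alpha=\frac14\log m$. Either choice leaves $r_n=\omega(n)$ items, and the last agent, who must take them all, exceeds ratio $\alpha$.

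In your phase language, the analogue is to use the constant threshold $\alpha=\gamma\log m$ (resp.\ $\gamma n$) in every phase, rather than a threshold depending only on $t$. You would then still need to check that the $O(\alpha)$ phases needed to exhaust the agents do not exhaust the items first, i.e.\ $m\ge n\,4^{O(\alpha)}$. This requires $\gamma$ small and $m$ polynomially (resp.\ exponentially) larger than $n$.
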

\begin{proof}
We prove the following three statements, the combination of which yields the theorem:
\begin{itemize}
    \item No online algorithm is $(n-\epsilon)$-competitive for any constant $\epsilon>0$;
    \item No online algorithm is $(\frac{1}{4}\log m)$-competitive;
    \item No online algorithm is $O(k\log k)$-competitive.
\end{itemize}

We first present a general framework for constructing hard instances that rules out the existence of $\alpha$-competitive algorithms.
All instances in our construction are binary additive: each agent $i$ is associated with an approval set $A_i$, where $c_i(e)=1$ for every $e\in A_i$ and $c_i(e)=0$ otherwise.
Moreover, the approval sets are defined according to a nested structure.

\paragraph{Nested Structure.}
Consider $n$ agents arriving in the order $1,\dots,n$.
Let $R_i$ denote the set of unallocated items when agent $i$ arrives, with $R_1=M$, and let $r_i = |R_i|$.
The adversary defines the approval set of agent $i$ as $A_i = R_i$.
Thus, agent $i$ has cost $1$ for every remaining item and cost $0$ for every item allocated before her arrival.
It follows that $\MMS_i = \left\lceil \frac{r_i}{n} \right\rceil$.

\paragraph{Lower Bounds in $n$ and $m$.}
We first use this construction to derive lower bounds as functions of $n$ and $m$.
Suppose that an algorithm is $\alpha$-competitive.
Then, for every agent $i$, we have $|X_i| = c_i(X_i)
\le \alpha \cdot \left\lceil \frac{r_i}{n} \right\rceil$,
which implies
\begin{equation*}
r_{i+1}
= r_i - |X_i|
\ge r_i - \alpha \cdot \left\lceil \frac{r_i}{n} \right\rceil
\ge \left(1-\frac{\alpha}{n}\right)r_i - \alpha.
\end{equation*}
Iterating this recurrence yields
\begin{equation*}
r_n \ge \left(1-\frac{\alpha}{n}\right)^{n-1}m
- \alpha \cdot \sum_{j=0}^{n-2}
\left(1-\frac{\alpha}{n}\right)^j
= \left(1-\frac{\alpha}{n}\right)^{n-1}(m+n) -(n-1).
\end{equation*}
To rule out the existence of an $\alpha$-competitive algorithm, it suffices to construct an instance for which $r_n = \omega(n)$, since then agent $n$ has $\MMS_n = \omega(1)$ but has $c_n(X_n) = c_n(M)$.
\begin{itemize}
    \item If $\alpha = n-\epsilon$, then one can verify that for
    $m=(2n/\epsilon)^n$, we have $r_n = \omega(n)$.

    \item If $\alpha = \frac{1}{4}\log m$, then one can verify that for
    $m=2^{4n}$, we have $r_n = \omega(n)$.
\end{itemize}

\paragraph{Lower Bound in $k$.}
To establish the lower bound for instances with a bounded number of agent types, we use a similar construction, but agents of the same type arrive consecutively.
Suppose, for contradiction, that there exists a deterministic algorithm that guarantees an $O(k\log k)$-MMS allocation, where $k$ is the (unknown) number of agent types.
Let $f(t)$ denote the competitive ratio achieved by the algorithm when the number of agent types is $t$.
By the definition of big-$O$ notation, there exist constants $a$ and $b$ such that, for all $t\ge a$, $f(t)\leq b\cdot t\cdot \log t$.
The first type of agents has approval set $M$.
We let $\frac{n}{2f(1)}$ agents\footnote{For sufficiently large $n$, we may assume that $n/(2f(t))$ is an integer for every $t\in\{1,\ldots,k\}$. Similarly, for sufficiently large $m$, we may assume that $m/n$ is an integer.}
of this type arrive.
Since at the moment all agents belong to one type, the algorithm cannot allocate more than $f(1)\cdot \MMS_1 = f(1)\cdot \frac{m}{n}$ items to any agent.
Therefore, the number of remaining items after these agents arrive is at least
\begin{equation*}
m -
f(1)\cdot \frac{m}{n}\cdot \frac{n}{2f(1)}
= \frac{m}{2}.
\end{equation*}
Next, we define the approval set of the second type to be the set of remaining items and let $\frac{n}{2f(2)}$ agents of this type arrive.
By the same argument, the number of remaining items after these agents have been allocated is at least $m/4$.
Continuing in this manner, after introducing $t$ types, the number of remaining items is at least ${m}/{2^t}$.
We terminate the construction once the total number of arrived agents reaches $n$.
By choosing $m$ sufficiently large, we can ensure that at least $n$ items remain when the process terminates.

It therefore remains to show that there exists some $t$ such that $\sum_{i=1}^{t}\frac{n}{2f(i)} > n$,
or equivalently,
$\sum_{i=1}^{t}\frac{1}{f(i)} > 2$.
This condition is guaranteed whenever the infinite series $\sum_{i=1}^{\infty}\frac{1}{f(i)}$ diverges.
Since $f(i)\le b\cdot i\cdot \log i$ for all $i\ge a$, we have
\begin{equation*}
\sum_{i=1}^{\infty}\frac{1}{f(i)}
\ge
\sum_{i=a}^{\infty}\frac{1}{f(i)}
\ge
\sum_{i=a}^{\infty}\frac{1}{b\cdot i\cdot \log i}
=
\infty.
\end{equation*}

Hence, there exists a finite $t$ for which $\sum_{i=1}^{t}\frac{1}{f(i)} > 2$.
Using this value of $t$ in the above construction yields a hard instance, which shows that no deterministic algorithm can guarantee an $O(k\log k)$ competitive ratio.
\end{proof}

\section{Known Additive Cost Function Types}
\label{sec:known-types-additive}

In this section, we study the known-types setting, where the algorithm is given a collection of $k$ cost functions $c_1,\ldots,c_k$, each corresponding to a distinct agent type. Each agent reveals her type only upon arrival. We do not assume that every type appears at least once, and it is possible that $k>n$.
As discussed in the previous section, the offline inapproximability result leaves very little room for improving the competitive ratio for subadditive costs. Therefore, in this section we focus on additive cost functions.
We use $\MMS_t$ to denote the MMS value of the cost function $c_t$.
We normalize the cost functions so that $\MMS_t = 1$ for all $t$.
This further implies $c_t(M)\leq n$ and $c_t(e)\le 1$ for every item $e\in M$ and every $t\in[k]$.

\subsection{From Online Allocation to Offline Universal Partitions}

In this subsection, we reduce the design of competitive online algorithms to the offline problem of computing approximate universal partitions.

\begin{definition}[$\alpha$-MMS Universal Partition]
\label{def:universal-partition}
Given cost functions $c_1,\dots,c_k$, an $n$-partition $\bP=(P_1,\dots,P_n)$ of $M$ is called an $\alpha$-MMS universal partition if
\begin{equation*}
c_t(P_j)\le \alpha\cdot \MMS_t,
\qquad
\text{for every } t\in[k] \text{ and every } j\in[n].
\end{equation*}
\end{definition}

The following lemma shows that the existence of an $\alpha$-MMS universal partition immediately yields an $\alpha$-competitive online allocation.

\begin{lemma}
\label{lem:universal-partition-online-reduction}
Given an $\alpha$-MMS universal partition, assigning a distinct bundle to each arriving agent yields an $\alpha$-competitive online algorithm.
\end{lemma}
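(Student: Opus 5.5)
The plan is a direct check of the two requirements: the output must be a full allocation, and each agent must pay at most $\alpha$ times her MMS. Fix an $\alpha$-MMS universal partition $\bP=(P_1,\dots,P_n)$ of $M$ with respect to the known types $c_1,\dots,c_k$. We compute it before any agent arrives, since it depends only on the type list and not on which types are realized or in what order.

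The online algorithm keeps a pointer to the unused bundles. When the $i$-th agent arrives, with $i\in[n]$, it assigns her $X_i:=P_i$; any fixed injective rule, such as the next unused index, works. The number of agents $n$ is known in advance and equals the number of bundles, so every agent receives a bundle and every bundle is used exactly once. Hence $(X_1,\dots,X_n)$ is a permutation of $\bP$ and in particular an $n$-partition of $M$. This shows that all items are allocated, and each assignment is irrevocable and consists only of unallocated items, because distinct bundles are disjoint.

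For the fairness bound, let agent $i$ have realized type $t\in[k]$. Then $c_i=c_t$, and the MMS value is unchanged: $\MMS_i$ is computed with respect to the full item set $M$ and $n$ agents, so it equals $\MMS_t$. Definition~\ref{def:universal-partition} gives $c_t(P_j)\le \alpha\cdot \MMS_t$ for every $j$. Therefore $c_i(X_i)\le \alpha\cdot\MMS_i$ no matter which bundle she gets. Since this holds for every agent and every arrival sequence, the algorithm is $\alpha$-competitive.

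There is essentially no obstacle here. The only subtle point is making sure the guarantee does not depend on the realized types or on the arrival order. That is handled because the universal partition bounds every bundle under every type at once, including types that never appear and the case $k>n$.
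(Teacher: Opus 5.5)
Your proof is correct and matches the paper's argument: give $P_i$ to the $i$th arriving agent, and use the fact that the universal partition bounds every bundle under every type at once, so neither the realized types nor the arrival order matter. Your explicit checks that the bundles form a full allocation and that $\MMS_i=\MMS_t$ for the realized type only spell out what the paper leaves implicit.
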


\begin{proof}
Let $\bP=(P_1,\ldots,P_n)$ be an $\alpha$-MMS universal partition, and we allocate bundle $P_i$ to the $i$th arriving agent. Although the type of agent $i$ is unknown before arrival, the defining property of the universal partition ensures that every bundle has cost at most $\alpha\cdot \MMS_t$ under every cost function type $t$. Consequently, the allocation is $\alpha$-MMS.
\end{proof}

Therefore, it suffices to focus on computing an offline universal partition of the items, without explicitly considering the online allocation process.

In the remainder of this section, we present two algorithms that compute $O(\frac{\log(kn)}{\log\log(kn)})$-MMS and $O(\log k)$-MMS universal partitions, respectively. We then construct a binary additive instance for which every universal partition has an approximation ratio of $\Omega(\frac{\log k}{\log\log k})$.

\subsection{An \texorpdfstring{$O(\frac{\log(kn)}{\log\log(kn)})$}{}-MMS Universal Partition}
\label{sec:known-types-additive-logkn}

Our first algorithm for computing the universal partition is based on a simple balls-into-bins paradigm and its standard probabilistic analysis.
Consider the process in which each of the $m$ items is assigned independently and uniformly at random to one of the bundles.
Using the fact that $\MMS_t \geq \max_{e\in M} c_t(e)$, together with appropriate concentration inequalities, one can show that, with high probability, every bundle has cost at most $O\!\left(\frac{\log(kn)}{\log\log(kn)}\right)$.
Consequently, the existence of such a universal partition follows rather directly from the probabilistic argument.

To obtain a deterministic polynomial-time algorithm for computing such a partition, however, a more careful approach is required.
In particular, we define a suitable potential function and assign the items sequentially so as to maintain a bounded potential throughout the process.

The main result of this section is summarized below.

\begin{lemma} 
\label{lem:additive-logkn-universal} 
Given $k$ cost functions, we can compute an $O\!\left(\frac{\log(kn)}{\log\log(kn)}\right)$-MMS universal partition in polynomial time deterministically. 
\end{lemma}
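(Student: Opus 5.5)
We follow the balls-into-bins paradigm and derandomize it with a pessimistic estimator, that is, by conditional expectations on an exponential-moment potential. Recall the normalization $\MMS_t=1$, so $c_t(M)\le n$ and $c_t(e)\le 1$ for all $t\in[k]$ and $e\in M$. Set $\alpha = C\frac{\log(kn)}{\log\log(kn)}$ for a large constant $C$, and fix a parameter $\lambda>0$ to be chosen, roughly $\lambda=\log\alpha$. If all items were placed independently and uniformly into the $n$ bundles, then the cost of bundle $j$ under type $t$ would be $Y_{t,j}=\sum_e c_t(e)\,Z_{e,j}$ with $Z_{e,j}\sim\mathrm{Bernoulli}(1/n)$ independent across $e$. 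Since $c_t(e)\in[0,1]$, convexity gives $e^{\lambda x}\le 1+x(e^\lambda-1)$ on $[0,1]$. Hence
\begin{equation*}
\E\bigl[e^{\lambda Y_{t,j}}\bigr]\le\prod_e\Bigl(1+\tfrac{c_t(e)}{n}(e^\lambda-1)\Bigr)\le \exp\bigl((e^\lambda-1)\,c_t(M)/n\bigr)\le \exp(e^\lambda-1).
\end{equation*}

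Define the potential $\Phi=\sum_{t\in[k]}\sum_{j\in[n]} e^{-\lambda\alpha}\,\E[e^{\lambda Y_{t,j}}]$, where the expectation is conditional on the placements made so far and the remaining items are treated as random. Under full randomization, Markov's inequality and a union bound show that the probability that some $Y_{t,j}\ge\alpha$ is at most $\Phi_0\le kn\exp(e^\lambda-1-\lambda\alpha)$. With $\lambda=\log\alpha$ this becomes $kn\exp(\alpha-1-\alpha\log\alpha)$. Since $\alpha\log\alpha\ge (C-o(1))\log(kn)$, choosing $C$ large (say $C=3$ suffices for large $kn$, and constant cases are trivial) gives $\Phi_0<1$.

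The deterministic algorithm processes items one at a time. For the current item $e$, it tries each bundle $j'$ and computes the resulting conditional potential. This is easy because each term factorizes: for a fixed $(t,j)$, the conditional expectation equals $e^{\lambda(\text{current cost of }P_j)}\prod_{e'\text{ unplaced}}\bigl(1+\tfrac{1}{n}(e^{\lambda c_t(e')}-1)\bigr)$. To keep the algorithm polynomial-time, we work with the exact product form rather than the convexity bound. The algorithm places $e$ in the bundle minimizing $\Phi$. The current $\Phi$ is the average over $j'$ of these choices, so the minimizing choice does not increase $\Phi$. At the end no items are random, so $\Phi=\sum_{t,j}e^{\lambda(c_t(P_j)-\alpha)}<1$. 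Every term must then be below $1$, which forces $c_t(P_j)<\alpha=\alpha\cdot\MMS_t$ for all $t,j$.

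The main obstacle is the parameter calculation: we must verify $\Phi_0<1$ with the $\log/\log\log$ scaling, and in particular handle the small-$kn$ regime and the constant $C$ rigorously. Polynomial time is immediate, since there are $m\cdot n$ candidate evaluations, each consisting of $kn$ products of length at most $m$. The precision of the exponentials can be handled by noting that only comparisons are needed, or by rounding with the slack available in $\Phi_0<1$ (for instance, targeting $\Phi_0\le 1/2$).
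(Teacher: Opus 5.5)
Your proposal is correct and is essentially the paper's proof. With $\lambda=\log\alpha$, your term $e^{\lambda(Y_{t,j}-\alpha)}$ is exactly the paper's $B^{c_t(P_j)-B}$ with $B=\alpha$, the moment bound via convexity and $c_t(M)\le n$ is the same, and the conditional-expectation derandomization uses the same product formula; the differences (your $C=3$ versus the paper's $B=\lceil 4\log(kn)/\log\log(kn)\rceil$, and your added remark on numerical precision) are cosmetic.
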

\begin{proof}
We first show that randomly allocating the items to $n$ bundles yields the desired universal partition with positive probability. 
To establish the existence of such a partition under random allocation, we define the parameter
\begin{equation*}
B:= \left\lceil \frac{4\cdot \log(kn)}{\log\log(kn)} \right\rceil
\end{equation*}
and assume that $B = \omega(1)$ (otherwise, $k$ and $n$ are both constants, and the lemma holds trivially since all partitions are $n$-MMS). Our goal is to find a partition such that each bundle of the partition has a cost of at most $B$ for every type. 
Given a partition $\bP = (P_1,\ldots,P_n)$ of $M$, we define its potential as
\begin{equation*}
    \Phi(\bP) := \sum_{t=1}^{k}\sum_{j=1}^{n} B^{c_t(P_j)-B}.
\end{equation*}

Note that if there exists a bundle $P_i$ and a type $t$ such that $c_t(P_i) > B$, then we have 
\begin{equation*}
    \Phi(\bP) \geq B^{c_t(P_i) - B} > 1.
\end{equation*}

Therefore, we conclude that a partition $\bP$ is a $B$-MMS universal partition if $\Phi(\bP) \leq 1$.

\smallskip

We now bound the expectation of this potential for the random partition. 
For any item $e \in M$ and bundle $P_j$, we use $\mathbf{1} [e\in P_j]$ to denote the indicator of the event that $e$ is assigned to $P_j$. 
Note that for every fixed $j$, the indicators $\{\mathbf{1}[e\in P_j]\}_{e\in M}$ are independent random variables and satisfy $\Pr[\mathbf{1}[e\in P_j] = 1] = 1/n$. For any type $t\in[k]$, since $c_t$ is additive, we have
$c_t(P_j)=\sum_{e\in M} c_t(e) \cdot \mathbf{1}[e\in P_j]$.
Therefore,
\begin{equation*}
B^{c_t(P_j)} = B^{\sum_{e\in M} c_t(e) \cdot \mathbf{1}[e\in P_j]} = \prod_{e\in M} B^{c_t(e)\cdot \mathbf{1}[e\in P_j]}.    
\end{equation*}

Taking expectation and using the independence of the indicators, we get
\begin{align*}
\mathbb E\!\left[B^{c_t(P_j)}\right] &= \prod_{e\in M} \mathbb E\!\left[B^{c_t(e)\cdot \mathbf{1}[e\in P_j]}\right] 
= \prod_{e\in M} \left( \left(1-\frac1n\right)\cdot B^0 + \frac1n\cdot B^{c_t(e)} \right)  \\
&= \prod_{e\in M} \left(1-\frac1n+\frac1n \cdot B^{c_t(e)}\right).
\end{align*}

Recall that we have $c_t(e)\in [0, 1]$ and $c_t(M) \leq n$. Therefore, we derive
\begin{align*}
\mathbb E\!\left[B^{c_t(P_j)}\right]
&\le
\prod_{e\in M}
\left(1+\frac{B-1}{n}\cdot c_t(e)\right)  
\le
\prod_{e\in M}
\exp\left(\frac{B-1}{n}\cdot c_t(e)\right)  \\
&=
\exp\left(\frac{B-1}{n}\cdot \sum_{e\in M}c_t(e)\right) 
=
\exp\left(\frac{B-1}{n}\cdot c_t(M)\right) 
\le
e^{B-1},
\end{align*}
where the first inequality holds by $B^{c_t(e)} \leq 1+ (B-1)\cdot c_t(e)$, and the second inequality uses $1+x\le e^x$ for all $x$.
By linearity of expectation, we have
\begin{align*}
\mathbb E[\Phi(\mathbf P)]
&=
\sum_{t=1}^{k}\sum_{j=1}^{n}
B^{-B}\cdot \mathbb{E}\left[B^{c_t(P_j)}\right]  
\le
kn\cdot B^{-B}\cdot e^{B-1}  \\
&=
kn\cdot \exp\bigl(-B(\log B-1)-1\bigr) \leq kn \cdot \exp(-2\log(kn)-1) < 1,
\end{align*}
where the second inequality holds by $B(\log B - 1) > 2\log (kn)$.
This establishes the existence of a $B$-MMS universal partition.
We now show how to compute such a partition deterministically in polynomial time by derandomizing the random allocation process via conditional expectations.

Fix an arbitrary ordering of the items. We allocate the items to bundles one at a time while maintaining an upper bound of $1$ on the conditional expectation of the potential. Specifically, given a partial allocation $\bX=(X_1,X_2,\ldots,X_n)$, let $\E[\Phi(\bP)\mid \bX]$ denote the conditional expectation of the potential of $\bP$ given that $X_i\subseteq P_i$ for every $i$. The expectation is taken over the random assignment of the remaining unallocated items, each of which is assigned independently and uniformly at random to a bundle. By the analysis above, we have
\begin{equation*}
    \E[\Phi(\bP)\mid (\varnothing,\ldots,\varnothing) ] < 1.
\end{equation*}
Moreover, when $\bX$ is a complete allocation, we have
\begin{equation*}
    \E[\Phi(\bP)\mid \bX]=\Phi(\bX).
\end{equation*}

Our goal is to maintain the invariant that $\E[\Phi(\bP)\mid \bX] < 1$
throughout the allocation process. Since the invariant holds initially when no items have been allocated, maintaining it until the end guarantees that the resulting complete allocation is a $B$-MMS universal partition.

Suppose that the current partial allocation is $\bX$, and let $e$ be the next unallocated item. For each bundle $P_j$, define $\Psi_j$ as the conditional expectation of $\Phi(\bP)$ given $\bX$ and the additional condition that $e$ is assigned to $P_j$, i.e.,
\begin{equation*}
    \Psi_j :=
    \E\!\left[\Phi(\bP)\,\middle|\, (X_1,\ldots,X_j\cup\{e\},\ldots,X_n)\right].
\end{equation*}

Since $e$ is assigned uniformly at random to one of the $n$ bundles in $\E[\Phi(\bP)\mid \bX]$, we have
\begin{equation*}
    \E[\Phi(\bP)\mid \bX]
    = \frac{1}{n} \cdot \sum_{j=1}^{n}\Psi_j.
\end{equation*}

Therefore, there exists some $j\in[n]$ such that $\Psi_j \leq \E[\Phi(\bP)\mid \bX] < 1$.
We assign $e$ to such a bundle $P_j$, thereby preserving the invariant, and continue until all items have been allocated.

It remains to show that, given any (partial) allocation $\bX$, the conditional expectation $\E[\Phi(\bP)\mid \bX]$ can be computed in polynomial time. By the definition of the potential function, it suffices to compute $\E[B^{c_t(P_j)}\mid \bX]$ for every type $t$ and bundle $P_j$. Using the same argument as above, we obtain
\begin{equation*}
    \E\left[B^{c_t(P_j)}\mid \bX\right]
    = B^{c_t(X_j)}\cdot \prod_{e\notin \cup_{i\in N} X_i}
    \left(1-\frac{1}{n} +\frac{1}{n}\cdot B^{c_t(e)} \right).
\end{equation*}

This expression is clearly computable in polynomial time. Summing these quantities over all types $t$ and all bundles $P_j$ yields the conditional expectation of the potential function.
\end{proof}

\subsection{\texorpdfstring{$O(\log k)$}{}-MMS Universal Partition}
\label{sec:known-types-additive-logk}

Next, we present an alternative algorithm for computing an $O(\log k)$-MMS universal partition. Unlike the algorithm from the previous section, its approximation guarantee is independent of $n$, and therefore improves upon the previous result when $k \ll n$.
As before, we begin by describing a randomized process that produces the desired partition with positive probability, and then show how to derandomize it using a potential-function-based argument.
At a high level, the limitation of the previous randomized approach is that it constructs all $n$ bundles simultaneously. As a result, the analysis incurs a $\log(kn)$ factor, since it (implicitly) requires a union bound over all $n$ bundles and all $k$ cost functions. To remove the dependence on $n$, we instead adopt a progressive procedure that constructs the bundles sequentially, one at a time.
The key idea is that, at each step, it suffices to show that both the newly constructed bundle and the set of remaining unallocated items have bounded cost under each of the $k$ cost functions with positive probability. Consequently, the analysis only requires a union bound over the $k$ cost functions, thereby eliminating the dependence on the number of bundles $n$ from the approximation guarantee.
As a warm-up, we first present a simple algorithm for computing a $(k+1)$-MMS universal partition.

Fix an arbitrary ordering of the items and add them to an initially empty bundle $S$ until $\sum_{t\in[k]} c_t(S) > k$, or until no items remain.
Recall that $\MMS_t = 1$ for every type $t\in [k]$, and that $c_t(e)\le 1$ for every type $t\in [k]$ and item $e\in M$.
Therefore, for the resulting bundle, we have $c_t(S)\le k+1$ for every $t \in [k]$, implying that $S$ is $(k+1)$-MMS for all types.
We then remove $S$ from consideration and repeat the same procedure on the remaining items to construct the next bundle, continuing until $n$ bundles have been produced.
Since $\sum_{t\in[k]} c_t(M) \le k n$, the process must terminate within $n$ rounds.
Hence, the resulting partition is a $(k+1)$-MMS universal partition.
Our improved algorithm follows the same high-level principle, but is significantly more careful in selecting the items allocated to each bundle $S$.

In the remainder of this section, we assume that $k=\omega(1)$; otherwise, the above $(k+1)$-MMS universal partition already yields an $O(\log k)$-MMS guarantee.
The algorithm proceeds in $n$ rounds, with one bundle constructed in each round.
Let $S_1,S_2,\ldots,S_{i-1}$ denote the bundles constructed before round $i$, and let
$R_i = M \setminus \bigcup_{j<i} S_j$
be the set of remaining items.

We maintain the following invariant throughout the execution of the algorithm.

\begin{invariant}
\label{invariant:logk_universal}
    At the beginning of round $i$, for all types $t\in [k]$, we have
    \begin{align*}
        c_t(S_j) \leq 12\cdot \log k, \quad \forall j < i, 
        \text{ and } \quad
        c_t(R_i) \leq (n-i+1)\cdot \log k.
    \end{align*}
\end{invariant}

Clearly, the invariant holds at the beginning of round $1$, as $c_t(M)\leq n$ holds for all types $t$.
To maintain the invariant, it remains to show that if the invariant holds at the beginning of round $i$, then it also holds at the end of the round.

\begin{lemma}
\label{lem:additive-logk-extraction}
Suppose Invariant~\ref{invariant:logk_universal} holds at the beginning of round $i$.
Then we can find a set $S\subseteq R_i$ satisfying $c_t(S) \leq 12\cdot \log k$ and $c_t(R_i \setminus S)\le (n-i)\cdot \log k$ for every $t\in[k]$. Moreover, such a set can be found deterministically in polynomial time.
\end{lemma}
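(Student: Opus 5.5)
Let $r := n-i+1$ be the number of bundles still to be formed, including the current one, so that $c_t(R_i)\le r\log k$ for every $t$. If $r=1$ (the last round), the target $c_t(R_i\setminus S)\le 0$ forces $S=R_i$, and the invariant directly gives $c_t(R_i)\le \log k\le 12\log k$. So assume $r\ge 2$.

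The plan is a randomized construction followed by derandomization via conditional expectations, as in Lemma~\ref{lem:additive-logkn-universal}. Put each item $e\in R_i$ into $S$ independently with probability $p:=1/r$. For each $t$, additivity gives $\E[c_t(S)]\le \log k$ and $\E[c_t(R_i\setminus S)]\le (1-1/r)\,r\log k=(r-1)\log k$, and every item has $c_t(e)\le 1$. We need two one-sided guarantees simultaneously for all $k$ types:
\begin{itemize}
\item an upper tail $c_t(S)\le 12\log k$, which is many times the mean, so it is easy;
\item an upper tail $c_t(R_i\setminus S)\le (r-1)\log k$, i.e.\ $c_t(S)\ge c_t(R_i)-(r-1)\log k$, which is only at the mean and so is \emph{not} implied by concentration.
\end{itemize}

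This second requirement is the main obstacle. Exact-mean deviation cannot be guaranteed with probability close to $1$, so a union bound over $k$ types fails as stated. The fix I would try first is to avoid spending slack naively.

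\begin{itemize}
\item If $c_t(R_i)\le (r-1)\log k$, the residual condition for $t$ is automatic.
\item Otherwise type $t$ is ``tight'' and we need $c_t(S)\ge c_t(R_i)-(r-1)\log k$, where the right-hand side is at most $\log k$.
\end{itemize}

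So we want $S$ to collect at least a prescribed amount $d_t\le\log k$ for each tight type, while staying below $12\log k$ for every type. To do this I would raise the sampling probability to $p:=\min\{1,4/r\}$. Then for a tight type $\E[c_t(S)]=p\,c_t(R_i)\ge 4\cdot\frac{r-1}{r}\log k\ge 2\log k\ge 2d_t$, and a multiplicative Chernoff lower tail with unit-bounded summands gives failure probability at most $e^{-\E/8}$. That bound is too weak by a constant, so I would tune the constants: $p=c/r$ with $c$ around $6$, keeping the upper threshold $12\log k$ for $S$. Each term is then at most $k^{-2}$, since $\E[c_t(S)]\le 6\log k$ and $12\log k$ is twice that, with upper tail via $B^{x}$-type moment bounds. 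The case $r<c$ (where $p=1$) is handled by taking $S=R_i$ directly, since then $c_t(R_i)\le r\log k<12\log k$.

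To derandomize, I would follow the proof of Lemma~\ref{lem:additive-logkn-universal} and use a two-sided pessimistic estimator
\[
\Phi(S)=\sum_t \lambda^{c_t(S)-12\log k}+\sum_{t\text{ tight}}\mu^{\,d_t-c_t(S)},
\]
with $\lambda>1$ and $\mu>1$ chosen as in the exponential-moment Chernoff proofs. Its expectation under independent sampling factorizes over items, so it can be computed exactly. The analysis above shows it is $<1$ initially, and any outcome with $\Phi(S)<1$ satisfies both requirements. Deciding the items one by one (include or exclude) while never increasing the conditional expectation then yields $S$ deterministically in polynomial time. The delicate part is choosing $c,\lambda,\mu$ so that both families of terms are each at most $1/(2k)$; I expect this to work with constants such as $12$, using $k=\omega(1)$.
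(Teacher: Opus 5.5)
Your plan is correct and matches the paper's proof. The paper also samples each item independently with $p=6/(n-i+1)$ (taking $S=R_i$ when $n-i+1\le 12$), singles out the critical types $A=\{t: c_t(R_i)>(n-i)\log k\}$, and derandomizes the two-sided potential $\sum_{t\in[k]}2^{c_t(S)-12\log k}+\sum_{t\in A}5^{\log k-c_t(S)}$ item by item via conditional expectations. The constants you left open are fixed there by the bases $\lambda=2$, $\mu=5$ and the uniform lower target $\log k$ (instead of your $d_t\le\log k$), which gives $\E[\Phi(S)]\le 2/k<1$.
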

\begin{proof}
If $n-i+1 \leq 12$, then by Invariant~\ref{invariant:logk_universal}, choosing $S = R_i$ yields the lemma.
Therefore, we assume that $n-i+1 > 12$.
We construct $S$ by the following random sampling process: for each item in $R_i$, we include it in $S$ independently with probability $p := \frac{6}{n-i+1} < \frac{1}{2}$.
We prove that with positive probability, for all $t$ we have $c_t(S) \leq 12\cdot \log k$ and $c_t(R_i\setminus S) \leq (n-i)\cdot \log k$.

We first identify a group of \emph{critical} types, which is defined as
\begin{equation*}
    A:=\{t\in[k]: c_t(R_i) > (n-i)\cdot \log k\}.
\end{equation*}

For a type $t\notin A$, we have $c_t(R_i\setminus S)\le c_t(R_i)\le (n-i)\cdot \log k$.
Therefore for any $S$, the set $R_i\setminus S$ of remaining items satisfies Invariant~\ref{invariant:logk_universal}.
For a type $t \in A$, if we can ensure that $c_t(S) \ge \log k$, the set $R_i\setminus S$ also satisfies Invariant~\ref{invariant:logk_universal}.
Therefore, it suffices to show that with positive probability, we have $c_t(S) \leq 12\cdot \log k$ for all $t\in [k]$ and $c_t(S) \ge \log k$ for all $t\in A$.

We define the potential of $S$ as
\begin{equation*}
    \Phi(S) := \sum_{t\in [k]} 2^{c_t(S)-12\log k} + \sum_{t\in A} 5^{\log k-c_t(S)}.
\end{equation*}

Therefore, $\Phi(S)\le 1$ implies
\begin{equation*}
    c_t(S)\le 12\cdot \log k
    \quad\text{for every }t\in[k],
    \quad \text{ and } \quad
    c_t(S)\ge \log k
    \quad\text{for every }t\in A.
\end{equation*}

Next, we show that $\E[\Phi(S)] < 1$, which shows the existence of the desired $S$.
We have
\begin{align*}
\label{eq:expectation-of-potential-logk}
    \E[\Phi(S)] &= \E\left[\sum_{t\in [k]} 2^{c_t(S)-12\log k} + \sum_{t\in A} 5^{\log k-c_t(S)}\right] \\
    &= \E\left[2^{-12\log k}\cdot\sum_{t\in [k]} 2^{c_t(S)} + 5^{\log k}\cdot\sum_{t\in A} 5^{-c_t(S)}\right] \\
    &= 2^{-12\log k}\cdot \sum_{t\in [k]} \E\left[2^{c_t(S)}\right] + 5^{\log k}\cdot \sum_{t\in A}\E\left[5^{-c_t(S)}\right].
\end{align*}
        
We first bound the term $\E[2^{c_t(S)}]$. By a similar argument as in Section~\ref{sec:known-types-additive-logkn}, we have
\begin{align*}
\mathbb E\left[2^{c_t(S)}\right] & = \mathbb E\left[ \prod_{e\in R_i}2^{c_t(e)\cdot\mathbf{1}[e\in S]}\right]
=\prod_{e\in R_i}\mathbb E\left[2^{c_t(e)\cdot\mathbf{1}[e\in S]}\right] 
=\prod_{e\in R_i}\left(1-p+p\cdot 2^{c_t(e)}\right) \\
& \le \prod_{e\in R_i} \left(1+p\cdot c_t(e)\right)
\le \prod_{e\in R_i} e^{p\cdot c_t(e)}
=  \exp\left(p\cdot c_t(R_i) \right)
\le \exp\left(6\log k\right),
\end{align*}
where the last inequality follows by $p=6/(n-i+1)$ and $c_t(R_i)\le (n-i+1)\log k$. 

Next, fix a critical type $t\in A$ (which satisfies $c_t(R_i) > (n-i)\cdot \log k$). Since $c_t(e)\in[0,1]$, we have $5^{-c_t(e)}\le 1-\frac{4}{5}\cdot c_t(e)$. By the same argument as above, we derive
\begin{align*}
\mathbb E\left[5^{-c_t(S)}\right] & = \prod_{e\in R_i} \left(1-p+p\cdot 5^{-c_t(e)}\right)
\le \prod_{e\in R_i} \left(1-\frac{4p}{5}\cdot c_t(e)\right) 
\le \exp\left(-\frac{4p}{5}\cdot c_t(R_i)\right) \\
& \leq \exp\left(- \frac{4}{5}\cdot\frac{6}{n-i+1} \cdot (n-i)\cdot \log k\right)
< \exp\left( -4 \log k \right),
\end{align*}
where the last inequality uses $n-i+1>12$.
Therefore, we have
\begin{equation*}
\E [\Phi(S)] \leq  2^{-12\log k}\cdot k\cdot e^{6\log k} + 5^{\log k} \cdot k \cdot e^{-4\log k} 
\leq k^{-2}\cdot k + k^{-2}\cdot k = \frac{2}{k} < 1.
\end{equation*}

By a similar argument as used in Section~\ref{sec:known-types-additive-logkn}, we can derandomize the above process by maintaining a bounded conditional expectation on $\E [\Phi(S)]$, and thus compute the set $S$ deterministically in polynomial time.
We defer the details to Appendix~\ref{app:additive-derandomization}. 
\end{proof}

By Lemma~\ref{lem:additive-logk-extraction}, if Invariant~\ref{invariant:logk_universal} holds when round $i$ begins, then by setting $S_i = S$, we can ensure that Invariant~\ref{invariant:logk_universal} holds when round $i$ ends.
Notice that for $i=n$, Invariant~\ref{invariant:logk_universal} implies that $c_t(R_n) \leq \log k$, and it suffices to set $X_n = R_n$.
Therefore, when the algorithm terminates, all items are allocated, and we obtain a $(12\cdot \log k)$-MMS universal partition.

\begin{lemma}
\label{lem:additive-logk-universal}
Given $k$ cost functions, we can compute an $O(\log k)$-MMS universal partition in polynomial time deterministically.
\end{lemma}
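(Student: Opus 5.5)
The plan is to assemble Lemma~\ref{lem:additive-logk-universal} from Invariant~\ref{invariant:logk_universal} and Lemma~\ref{lem:additive-logk-extraction} by induction over the $n$ rounds, after a preliminary case split on $k$.

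\textbf{Normalization and small $k$.} First I normalize every $c_t$ so that $\MMS_t=1$. This requires knowing $\MMS_t$, which is NP-hard to compute. I would handle this as in the footnote of Section~\ref{sec:fully-uninformed-upper}: replace $\MMS_t$ by a polynomial-time constant-factor estimate, for instance $\max\{c_t(M)/n,\max_e c_t(e)\}$. This estimate lies within a factor $2$ of $\MMS_t$ for additive costs. Then $c_t(M)\le n$ and $c_t(e)\le 1$ still hold, and the ratio changes only by a constant. If $k=O(1)$, the warm-up greedy procedure already gives a $(k+1)$-MMS universal partition in polynomial time, which is $O(1)=O(\log k)$. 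So from here on I assume $k$ exceeds a suitable constant, as required in the proof of Lemma~\ref{lem:additive-logk-extraction}.

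\textbf{Induction over rounds.} I run $n-1$ rounds. The base case is that Invariant~\ref{invariant:logk_universal} holds at round $1$, since $c_t(R_1)=c_t(M)\le n\le n\log k$. For $k\ge 3$ we have $\log k\ge 1$, so this inequality is valid.

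For the inductive step, assume the invariant holds at the start of round $i<n$. Lemma~\ref{lem:additive-logk-extraction} deterministically produces $S\subseteq R_i$ with $c_t(S)\le 12\log k$ and $c_t(R_i\setminus S)\le (n-i)\log k$ for all $t$. Setting $S_i=S$ and $R_{i+1}=R_i\setminus S_i$, the invariant holds at the start of round $i+1$.

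At round $n$, the invariant gives $c_t(R_n)\le \log k\le 12\log k$, so I set $S_n=R_n$. The sets $S_1,\dots,S_n$ partition $M$, since each $S_i$ is carved from the remainder and the last bundle takes everything left. Every bundle has cost at most $12\log k=12\log k\cdot\MMS_t$ under every type $t$, which gives a $12\log k$-MMS universal partition. If some $S_i$ happens to be empty, the partition is still a valid $n$-partition.

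\textbf{Running time and main obstacle.} There are $n$ calls to a polynomial-time subroutine, so the total time is polynomial. The only delicate part is the derandomization inside Lemma~\ref{lem:additive-logk-extraction}, which I take as given from Appendix~\ref{app:additive-derandomization}. The idea there is conditional expectations on the two-sided potential, where each factor $\E[2^{c_t(S)}]$ and $\E[5^{-c_t(S)}]$ is a product over the undecided items and so is computable exactly. The critical set $A$ is fixed at the start of each round, so the potential is well defined while items are decided one at a time. Beyond this, the main thing I would double-check is that normalizing by a constant-factor MMS estimate keeps the bounds $c_t(e)\le 1$ and $c_t(M)\le n$, which the extraction lemma relies on.
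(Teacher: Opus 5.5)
Your proposal is correct and follows the paper's argument. You establish the base case from $c_t(M)\le n$, maintain Invariant~\ref{invariant:logk_universal} round by round via Lemma~\ref{lem:additive-logk-extraction}, set the last bundle to $R_n$, and dispose of $k=O(1)$ with the $(k+1)$-MMS greedy warm-up. Your explicit normalization by the polynomial-time quantity $\max\{c_t(M)/n,\max_e c_t(e)\}$ is a sound refinement that the paper leaves implicit. Since this quantity is a lower bound on $\MMS_t$, it actually loses no factor at all.
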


Together with Lemma~\ref{lem:universal-partition-online-reduction} and Lemma~\ref{lem:additive-logkn-universal}, we have our second main result.

\begin{theorem}
\label{thm:known-types-additive-upper}
Given $k$ known types of additive cost functions, there is a deterministic polynomial-time online algorithm with a competitive ratio of
\begin{equation*}
    O\!\left(\min\left\{\frac{\log(kn)}{\log\log(kn)},\log k\right\}\right).
\end{equation*}
\end{theorem}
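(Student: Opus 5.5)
The theorem follows by combining the offline universal-partition constructions with the online reduction. The algorithm first runs both deterministic polynomial-time procedures on the $k$ given cost functions, after the normalization $\MMS_t=1$ for every type $t$. Lemma~\ref{lem:additive-logkn-universal} gives an $O\!\left(\frac{\log(kn)}{\log\log(kn)}\right)$-MMS universal partition $\bP^{(1)}$. Lemma~\ref{lem:additive-logk-universal}, obtained by iterating Lemma~\ref{lem:additive-logk-extraction} to maintain Invariant~\ref{invariant:logk_universal}, gives a $(12\log k)$-MMS universal partition $\bP^{(2)}$.

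Before any agent arrives, we choose the partition with the smaller approximation guarantee. Both guarantees are explicit quantities computable from $k$ and $n$ alone: $B=\lceil 4\log(kn)/\log\log(kn)\rceil$ for $\bP^{(1)}$, and $12\log k$ for $\bP^{(2)}$ (or $k+1$ when $k=O(1)$). For a concrete certificate that avoids comparing constants, we can instead compute, for each partition, the quantity $\max_{t,j} c_t(P_j)$ and keep the partition that minimizes it. Either way, the chosen partition is an $\alpha$-MMS universal partition with $\alpha = O(\min\{\log(kn)/\log\log(kn),\log k\})$.

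Both procedures run in polynomial time. For the first, this holds because the conditional expectations have closed-form products. For the second, it holds by the derandomization deferred to the appendix, which we take as given.

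Lemma~\ref{lem:universal-partition-online-reduction} then applies directly. Each arriving agent, whatever her revealed type, receives a distinct unused bundle of the chosen partition. Every item is allocated, since the partition covers $M$, and each agent $i$ of type $t$ pays $c_t(P_j)\le \alpha\cdot\MMS_t$. Undoing the normalization just rescales each $c_t$, and both $c_t(P_j)$ and $\MMS_t$ scale together, so the ratio is preserved.

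The two degenerate regimes need a brief check. When $B=O(1)$, or when $k=O(1)$ and the $(k+1)$-MMS warm-up is used, the bound is trivially $O(1)$ relative to the minimum, so nothing breaks. A type with $\MMS_t=0$ cannot be normalized. In that case $c_t(M)=0$, so every bundle costs $0$ for that type. Such types can be dropped from the constructions and remain satisfied by any partition.

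This is essentially bookkeeping. The only point needing care is that the choice between the two partitions is made offline and does not depend on the arrival sequence. This is what keeps the online algorithm deterministic and valid even when $k>n$ or some types never appear.
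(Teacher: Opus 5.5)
Your proposal is correct and matches the paper's argument: the theorem is exactly Lemma~\ref{lem:universal-partition-online-reduction} applied, offline, to whichever of the universal partitions from Lemmas~\ref{lem:additive-logkn-universal} and~\ref{lem:additive-logk-universal} has the better guarantee. One point on the polynomial-time claim, which the paper also glosses over: normalizing to $\MMS_t=1$ exactly would require computing $\MMS_t$, which is NP-hard. This is harmless, because both constructions (and the $(k+1)$ warm-up) only use $c_t(M)\le n$ and $c_t(e)\le 1$, so you can instead normalize by the computable lower bound $\max\{c_t(M)/n,\max_{e} c_t(e)\}\le \MMS_t$.
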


\subsection{Lower Bound for Universal Partitions}
\label{sec:known-types-additive-lower}
In this subsection, we establish a lower bound for approximate universal partitions. We show that even for binary additive cost functions, there are instances for which every universal partition must incur an approximation ratio of
$\Omega\!\left(\frac{\log k}{\log\log k}\right)$.

\begin{theorem}
\label{thm:additive-universal-lower}
There exists an instance with $k$ binary additive cost functions such that every universal partition has an approximation ratio of
$\Omega\!\left(\frac{\log k}{\log\log k}\right)$.
\end{theorem}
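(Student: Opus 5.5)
We prove the statement with an explicit binary additive instance in which the types are \emph{all} approval sets of a fixed size, so that every bundle of any partition is contained in (or covers many items of) some type. Fix an integer $n\ge 2$, let $M$ contain $m=n^2$ items, and let the types be all binary additive functions $c_A$ with $c_A(e)=\mathbf 1[e\in A]$, where $A$ ranges over all subsets of $M$ with $|A|=n$. Thus $k=\binom{n^2}{n}$.

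\textbf{Step 1: each type has MMS exactly $1$.} Since $c_A(M)=n$ and there are $n$ bundles, $\MMS_A\ge 1$. Placing the $n$ items of $A$ into distinct bundles (with the other items distributed arbitrarily) shows $\MMS_A\le 1$. So the normalization $\MMS_t=1$ of this section holds for every type.

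\textbf{Step 2: every universal partition has ratio at least $n$.} Let $\bP=(P_1,\ldots,P_n)$ be any $n$-partition of $M$. By pigeonhole, some bundle $P_j$ has $|P_j|\ge m/n=n$. Choose any $n$-subset $A\subseteq P_j$; this $A$ is one of our types. Then $c_A(P_j)=n=n\cdot\MMS_A$, so $\bP$ is not $\alpha$-MMS universal for any $\alpha<n$.

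\textbf{Step 3: relate $n$ to $k$.} We have
\[
\log k=\log\binom{n^2}{n}\le n\log(en).
\]
Hence $\log k=\Theta(n\log n)$ (the matching lower bound $\binom{n^2}{n}\ge n^n$ gives the other direction), so $\log\log k=\Theta(\log n)$. Therefore
\[
n=\Theta\!\left(\frac{\log k}{\log\log k}\right),
\]
which gives the claimed $\Omega\!\left(\frac{\log k}{\log\log k}\right)$ lower bound. The bound is stated for the sequence of values $k=\binom{n^2}{n}$; this suffices for an instance-existence statement.

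The only delicate point is Step 3: choosing $m=n^2$ balances the forced bundle cost $\min\{n,m/n\}$ against the size of $k$. With a general $m$, the forced cost is $\min\{n,m/n\}$ while $\log k\approx n\log(m/n)$, and $m=n^2$ optimizes this trade-off. The remaining steps are direct pigeonhole arguments.
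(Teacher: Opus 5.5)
Your proof is correct, but it takes a genuinely different and more elementary route than the paper.

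\textbf{Your route.} You take \emph{every} $n$-subset of an $n^2$-item ground set as a type. The covering property (every bundle with at least $n$ items contains some approval set) then holds trivially. This forces a ratio of $n$, with $\log k=\Theta(n\log n)$.

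\textbf{The paper's route.} The paper fixes $q\ge 2$ and sets $n=(2q)^{2q}$ and $m=qn$. It uses only $k=(2q)^{q+3}$ \emph{random} approval sets of size $n$. A union-bound argument (Claim~\ref{claim:A_covers_Q}) shows that every $q$-subset of $M$ lies in some $A_t$. Pigeonhole then forces a bundle of cost at least $q$, with $\log k=\Theta(q\log q)$.

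\textbf{What each buys.} Your version needs no probabilistic covering lemma and is fully explicit. The price is $k=n^{\Theta(n)}$: the forced ratio equals the trivial bound $n$, and the instance lies far outside the regime $k\le n$. In the paper's instance, $k=(2q)^3\sqrt{n}<n$ for $q\ge 4$, so the bound $q$ is far below the trivial ratio $n$. Because those types are binary with $k\le n$, Theorem~\ref{thm:known-binary-three-mms} applies to the same instances. So the paper's construction shows that restricting to universal partitions costs $\Omega(\log k/\log\log k)$ exactly in the regime where universal residuals achieve ratio $3$. Your instance cannot say anything about that regime. The paper's ability to keep $k$ small relative to $n$ comes from using large approval sets, each covering many $q$-subsets at once.

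Both arguments establish the statement as written, for an infinite sequence of values of $k$. The difference is in how much the instance shows, not a gap in your proof.
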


\begin{proof}
Fix any integer $q \ge 2$.
We show that there exists an instance with $n=(2q)^{2q}$ agents, $m=q\cdot n$ items, and $k=(2q)^{q+3}$ binary additive cost functions $c_1, c_2, \ldots, c_k$ such that, for every partition $(P_1,P_2,\ldots,P_n)$ of $M$, there exists a cost function $c_t$ and a bundle $P_i$ satisfying $c_t(P_i)\ge q$.
Our construction ensures that $\MMS_t=1$ for all $t$, which implies that every universal partition has an approximation ratio of at least
$q=\Omega\!\left(\frac{\log k}{\log\log k}\right)$.

We construct $k$ binary additive cost functions $c_1,\ldots,c_k$.
Let $A_t$ denote the approval set of $c_t$, that is,
$c_t(e)=1$ if $e\in A_t$ and $c_t(e)=0$ otherwise.
Let the size of every approval set be $n$. 
This implies that $\MMS_t=1$ for every $t\in[k]$.

\begin{claim}
\label{claim:A_covers_Q}
There exist size-$n$ sets $A_1,\ldots,A_k$ such that for every size-$q$ subset $Q\subseteq M$, there exists $t\in[k]$ with $Q\subseteq A_t$.
\end{claim}
\begin{proof}
Consider the following random construction.
For each $t\in[k]$, choose $A_t$ independently and uniformly at random from all size-$n$ subsets of $M$.

Fix a size-$q$ subset $Q\subseteq M$ and some $t\in[k]$. 
Over the randomness of $A_t$, we have
\begin{equation*}
\Pr[Q\subseteq A_t]
=\frac{\binom{m-q}{n-q}}{\binom{m}{n}}
=\prod_{r=0}^{q-1}\frac{n-r}{m-r}
\ge \left(\frac{n-q}{m}\right)^q
\ge \frac{1}{(2q)^q},
\end{equation*}
where the last inequality uses $m=q\cdot n$ and $n=(2q)^{2q}$.

Therefore, the probability that $Q$ is not contained in any of
$A_1,\ldots,A_k$ is at most
\begin{equation*}
\left(1-\frac{1}{(2q)^q}\right)^k
\le \exp\left(-\frac{k}{(2q)^q}\right)
= \exp\left( -(2q)^3 \right)
<
(2q)^{-(2q)^2},
\end{equation*}
where the last inequality follows from the fact that
$2q>\log(2q)$ for every $q\ge 2$.

Since there are at most
\begin{equation*}
    \binom{m}{q}\le m^q = \bigl(q\cdot(2q)^{2q}\bigr)^q
\end{equation*}
choices for $Q$, a union bound implies that the probability that some size-$q$ subset is not contained in any of
$A_1,\ldots,A_k$ is at most
\begin{equation*}
\left(q\cdot(2q)^{2q}\right)^q
\cdot(2q)^{-(2q)^2}
= q^q\cdot(2q)^{-2q^2} < 1.
\end{equation*}

Hence, with positive probability, every size-$q$ subset of $M$ is contained in at least one of the sets
$A_1,\ldots,A_k$.
This proves the existence of the desired collection of sets.
\end{proof}

Now consider an arbitrary partition $\bP$ of $M$.
Since $m=q\cdot n$, at least one bundle $P_i$ must contain at least $q$ items.
Choose an arbitrary size-$q$ subset $Q\subseteq P_i$.
By Claim~\ref{claim:A_covers_Q}, there exists some $t\in[k]$ such that
$Q\subseteq A_t$.
Consequently, $c_t(P_i)\ge c_t(Q)=q$.

Since $\MMS_t=1$, the approximation ratio with respect to agent $t$ is at least $q$.
Therefore, every universal partition has an approximation ratio of at least
$q=\Omega\!\left(\frac{\log k}{\log\log k}\right)$.
\end{proof}

We remark that this lower bound applies only to universal partitions and does not directly imply a lower bound for general online algorithms.

\section{Known Binary Additive Cost Function Types}
\label{sec:known-types-binary}

In this section, we turn our attention to binary additive costs, where each agent $i$'s preference is completely determined by her approval set $A_i$. That is, $c_i(e)=1$ for every $e\in A_i$ and $c_i(e)=0$ otherwise.
The algorithm is given a collection of $k$ binary cost function types, not all of which are necessarily realized, and $k$ may be larger than $n$.
Since there are $2^m$ possible subsets of items, the number of distinct binary types is bounded by $2^m$.
Consequently, the known-types setting with $k=2^m$ is equivalent to the unknown-types setting.
As a direct consequence of Theorem~\ref{thm:lower-bound-subadditive-uninformed} (no online algorithm is $o(\log m)$-competitive), no deterministic online algorithm can achieve an $o(\log\log k)$-MMS guarantee for binary additive costs when $k$ is unrestricted\footnote{Note that binary additive costs are a subclass of additive costs, this lower bound also applies to general additive costs. This is the best known lower bound for general online algorithms with known additive types.}
We therefore focus on the regime $k\le n$ and seek constant-competitive algorithms.

The binary structure allows the online algorithm to reason primarily about bundle sizes rather than their costs.
Consider the following natural approach.
Suppose the goal is to compute an $\alpha$-MMS allocation for some constant $\alpha$.
Upon arrival, the algorithm learns the approval set $A_i$ of agent $i$.
Since every item in $M\setminus A_i$ incurs zero cost to agent $i$, the algorithm can first allocate all currently unallocated items in $M\setminus A_i$ to agent $i$.
Among the remaining unallocated items in $A_i$, it then allocates a bundle of size $\alpha\cdot\MMS_i$.
The key question is how to choose this bundle.
A natural option is to select a random bundle of size $\alpha\cdot\MMS_i$.
Intuitively, for a sufficiently large constant $\alpha$, one might expect all items to be allocated by the end with high probability.
However, since the allocation events are highly dependent, it is difficult to characterize the intricate interactions of the events across different rounds.
Moreover, the previous hardness results show that a constant competitive ratio cannot be achieved without exploiting the knowledge of the $k$ cost function types.

Motivated by these observations, we extend the idea of approximate universal partitions to the binary setting, while constructing bundles on-the-fly as agents arrive.
The main advantage is that we can relax the requirements in the definition of an approximate universal partition.
Specifically, when agent $i$ arrives, we compute a bundle $X_i$ and only require that $c_i(X_i)$ be appropriately bounded; we do not need to control the cost of $X_i$ under other cost function types.
To ensure that the allocation is ``safe'' for future agents, however, we maintain the invariant that the set of remaining items has bounded cost under every cost function type.
We refer to this property as maintaining a \emph{Universal Residual}.

As in the previous sections, we first show that there exists a randomized process that maintains the universal residual with positive probability in every round, and then derandomize this process in polynomial time.
As a result, we obtain a $3$-competitive algorithm (Theorem~\ref{thm:known-binary-three-mms}) whenever $k\le n$.
To complement this positive result, we also prove that no algorithm can achieve an approximation ratio better than $2$ even when $k$ is a constant (Theorem~\ref{thm:binary-small-k-two-lower}).

\subsection{The Algorithm and the Invariant}
\label{sec:known-binary-three-mms}

We focus on the setting where $k \le n$ and propose a $3$-competitive algorithm.
Let $c_1,c_2,\ldots,c_k$ be the known cost function types, where each function $c_t$ has approval set $A_t\subseteq M$.
Without loss of generality, we assume that $A_t \neq \varnothing$ and $\MMS_t \geq 1$ for all $t$; otherwise, we can remove $c_t$ from the collection of types, and if an agent of type $t$ arrives, the algorithm can terminate immediately by allocating all items to this agent.
For ease of notation, we write $\mu_t = \MMS_t$ in the remaining sections.
For any two integers $x$ and $y$, we use $\binom{x}{y}$ to denote the binomial coefficient, with the standard convention that $\binom{x}{y} = 0$ whenever $y > x$.

Following the previous discussion, we present our algorithm.
In round $i$, when agent $i$ arrives, we compute a bundle $X_i$ and allocate it to agent $i$.
Let $R_i$ be the set of unallocated items at the beginning of round $i$, with $R_1 = M$.
We maintain the following invariant.

\begin{invariant}[Universal Residual]
\label{invariant:binary-3mms-potential}
At the beginning of round $i$, for every type $t\in [k]$, we have
\begin{equation*}
|R_i\cap A_t| \leq 2(n-i+1)\cdot \mu_t.
\end{equation*}
\end{invariant}

In particular, the above invariant implies that when agent $n$ arrives, the total cost of the unallocated items under her cost function is at most $2\cdot \mu_t$, and hence we can safely allocate all remaining items to agent $n$.
Therefore, it suffices to show that the invariant can be maintained while ensuring that each online agent $i$ receives a bundle of cost at most $3$ times her MMS.
% Note that the invariant holds in the first round, since $c_t(M) \leq n\cdot \mu_t$ for all $t$.
% 
We introduce the following potential function and relate the maintenance of the invariant to upper bounding this potential.
For a set $R$ of unallocated items and $r$ remaining agents, define
\begin{equation*}
\Phi(R,r) = \sum_{t\in [k]}
\sum_{\ell\in[r]}
\binom{|R\cap A_t|}{2\ell \cdot \mu_t+1}
\left( \prod_{j=\ell+1}^{r}
\left(1-\frac{3}{2j}\right) \right)^{2\ell \cdot \mu_t+1}.
\end{equation*}

\begin{lemma}
\label{lem:binary-3mms-residual-safety}
If $\Phi(R,r) < 1$, then for every type $t\in[k]$, we have $|R\cap A_t|\le 2r \cdot \mu_t$.
\end{lemma}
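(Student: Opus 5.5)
The plan is to argue by contrapositive, using only the single summand with $\ell = r$. Suppose some type $t$ has $|R\cap A_t| > 2r\mu_t$, that is, $|R\cap A_t|\ge 2r\mu_t+1$. We will show that this summand alone is already at least $1$. Since every summand in $\Phi(R,r)$ is nonnegative, this gives $\Phi(R,r)\ge 1$, contradicting the hypothesis $\Phi(R,r)<1$.

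First we check that the terms are nonnegative. The binomial coefficients are nonnegative by the stated convention. Each factor $1-\frac{3}{2j}$ is nonnegative for $j\ge 2$, and the products only run over $j\ge \ell+1\ge 2$. So every summand is nonnegative, and it suffices to bound one of them from below.

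Now take $\ell=r$ in the inner sum, which is allowed since $r\in[r]$ whenever $r\ge1$. The product $\prod_{j=r+1}^{r}(1-\frac{3}{2j})$ is then empty and equals $1$. The summand therefore reduces to $\binom{|R\cap A_t|}{2r\mu_t+1}$. Because $|R\cap A_t|\ge 2r\mu_t+1$, this binomial coefficient is at least $1$. Hence $\Phi(R,r)\ge 1$, which is the desired contradiction.

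The only point needing care is the degenerate case $r=0$, where the inner sum is empty. The lemma will only be applied with $r\ge 1$ remaining agents, and I would note this restriction explicitly. There is no substantive obstacle: the argument rests entirely on the empty-product convention and the nonnegativity of the summands.
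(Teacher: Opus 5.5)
Your proposal is correct and follows the paper's proof: isolate the $\ell=r$ summand, whose product is empty, so that term equals $\binom{|R\cap A_t|}{2r\mu_t+1}\ge 1$ whenever $|R\cap A_t|\ge 2r\mu_t+1$, contradicting $\Phi(R,r)<1$. Your check that every summand is nonnegative (since $1-\frac{3}{2j}\ge 0$ for $j\ge 2$) and your remark that $r\ge 1$ is needed both make explicit points the paper leaves implicit.
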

\begin{proof}
Fix an arbitrary $t\in [k]$.
The term corresponding to $\ell=r$ in $\Phi(R,r)$ is
$$
\binom{|R\cap A_t|}{2r \cdot \mu_t+1}
\left(
\prod_{j=r+1}^{r}\left(1-\frac{3}{2j}\right)
\right)^{2r \cdot \mu_t+1}
= \binom{|R\cap A_t|}{2r \cdot \mu_t+1}.
$$
If $|R\cap A_t|\ge 2r\cdot \mu_t+1$, this term is at least $1$.
This would imply $\Phi(R,r)\ge 1$, contradicting the assumption.
Hence $|R\cap A_t|\le 2r \cdot \mu_t$.
\end{proof}

By the above lemma, to maintain Invariant~\ref{invariant:binary-3mms-potential}, it suffices to ensure that $\Phi(R_i, n-i+1) < 1$ holds for all $i$.
To prove this, we show that if $\Phi(R_i, n-i+1) < 1$ holds for some $i< n$, then by choosing an appropriate set of items $X_i$ to allocate to agent $i$, we also have $\Phi(R_{i+1}, n-i) < 1$; see Algorithm~\ref{alg:known-binary-three-mms} for the details.

\begin{algorithm}[!htb]
\caption{$3$-MMS allocations for binary additive costs} \label{alg:known-binary-three-mms}
\KwIn{A set $M$ of items, the number of online agents $n$, cost functions with approval sets $A_1,\ldots,A_k$}
$R_1\gets M$\;
\For{each online agent $i$, where $i=1,\ldots,n$}{
\If{$i = n$}{
Let $X_i\gets R_i$ and $R_{n+1}\gets\varnothing$;
}
\Else{
Let $t_i$ be the type of agent $i$, and $B_i\gets R_i\cap A_{t_i}$\;
Find a set $S\subseteq B_i$ with $|S|=\left\lceil\frac{3|B_i|}{2(n-i+1)}\right\rceil$ and $\Phi(B_i\setminus S, n-i) \le \Phi(R_i,n-i+1)$\;
\tcp{Existence follows by Lemma~\ref{lem:binary-3mms-one-step}}
Let $X_i\gets (R_i \setminus B_i)\cup S$ and
$R_{i+1} \gets B_i \setminus S$;
}
}
\KwOut{Allocation $\mathbf X=(X_1,\dots,X_n)$}
\end{algorithm}

We first show that $\Phi(R_1,n) < 1$. Note that $R_1 = M$.

\begin{lemma}
\label{lem:binary-3mms-initial-potential}
We have $\Phi(M, n) < 1$.
\end{lemma}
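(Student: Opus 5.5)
The plan is to bound $\Phi(M,n)$ type by type. Since $k\le n$, it suffices to show that for every fixed type $t$ the inner sum
\begin{equation*}
\Phi_t := \sum_{\ell\in[n]} \binom{|A_t|}{2\ell\mu_t+1}\left(\prod_{j=\ell+1}^{n}\left(1-\frac{3}{2j}\right)\right)^{2\ell\mu_t+1}
\end{equation*}
is strictly less than $1/n$. The only facts about the instance I will use are $|A_t| = c_t(M)\le n\mu_t$ (the $n$ bundles of an MMS partition each contain at most $\mu_t$ approved items) and $\mu_t\ge 1$. Write $s_\ell := 2\ell\mu_t+1$.

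\textbf{Step 1: discard large $\ell$.} If $\ell\ge n/2$, then $s_\ell > n\mu_t \ge |A_t|$, so the binomial coefficient vanishes by the stated convention. Hence only $\ell < n/2$ contributes, and this is exactly what keeps the dangerous regime $\ell\approx n$ out of the sum.

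\textbf{Step 2: bound each remaining term.} For $\ell<n/2$ I use three estimates:
\begin{itemize}
\item $\binom{N}{s}\le (eN/s)^s$ with $N\le n\mu_t$ and $s_\ell > 2\ell\mu_t$, which gives $\binom{|A_t|}{s_\ell}\le \left(\frac{en}{2\ell}\right)^{s_\ell}$;
\item $1-x\le e^{-x}$ on each factor of the product;
\item $\sum_{j=\ell+1}^n \frac1j \ge \log\frac{n+1}{\ell+1}$.
\end{itemize}
Together these give $\prod_{j=\ell+1}^n(1-\frac{3}{2j})\le \left(\frac{\ell+1}{n+1}\right)^{3/2}$, and therefore the $\ell$-th term is at most $x_\ell^{s_\ell}$, where
\begin{equation*}
x_\ell := \frac{e}{2}\cdot\frac{n}{n+1}\cdot\frac{(\ell+1)^{3/2}}{\ell\sqrt{n+1}} \le \frac{e}{2}\cdot\left(1+\frac1\ell\right)^{3/2}\sqrt{\frac{\ell}{n+1}}.
\end{equation*}

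\textbf{Step 3: sum over $\ell$.} I split the range into two parts.
\begin{itemize}
\item For small $\ell$ (say $\ell\le \sqrt n$), $x_\ell = O(\sqrt{\ell/n})$ is tiny, and $s_\ell\ge 2\ell+1\ge 3$ gives a term of order $(\ell/n)^{3/2}$ or smaller. For instance, $\ell=1$ gives roughly $57/n^{3/2}$. These decay geometrically in $\ell$ once the exponent $2\ell\mu_t+1$ is accounted for, so their total is $o(1/n)$.
\item For $\sqrt n<\ell<n/2$, the bound $x_\ell\le \frac{e}{2\sqrt2}(1+1/\ell)^{3/2}$ is at most a constant $\rho<1$ once $\ell$ is moderately large. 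The term $\rho^{2\ell+1}$ is then exponentially small in $\sqrt n$, far below $1/n^2$, so at most $n/2$ such terms still sum to $o(1/n)$.
\end{itemize}
Since $\mu_t\ge1$, the exponents are at least $2\ell+1$, and larger $\mu_t$ only helps whenever $x_\ell<1$.

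\textbf{Main obstacle: constants.} The middle regime is tight: $e/(2\sqrt2)\approx 0.96$ is barely below $1$, and the factor $(1+1/\ell)^{3/2}$ can push $x_\ell$ above $1$ when $\ell$ is small relative to $n/2$ being nearly reached. Small values of $n$ are also delicate. To handle this I would first replace the crude bound $(eN/s)^s$ by $N^s/s!$ combined with a sharper Stirling estimate, and keep the exact product $\prod_{j=\ell+1}^n(1-\frac{3}{2j})$ rather than its exponential upper bound (it equals a ratio of Gamma functions, $\frac{\Gamma(n-1/2)\Gamma(\ell+1)}{\Gamma(n+1)\Gamma(\ell-1/2)}$). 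If the asymptotic argument still leaves a finite range of small $n$ uncovered, I would verify those cases directly. There $\Phi(M,n)$ has few nonzero terms, namely only $\ell<n/2$, and for $n\le 2$ the sum is empty, so $\Phi=0$.
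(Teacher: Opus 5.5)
Your decomposition (each type contributes less than $1/n$, then use $k\le n$) matches the paper's, but the per-type bound is never established. There are three problems.

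\textbf{The estimates are too lossy.} You combine $\binom{N}{s}\le (eN/s)^s$ with the product bound $((\ell+1)/(n+1))^{3/2}$, which is weaker than the available $(\ell/n)^{3/2}$. With these, your own $\ell=1$ bound of about $57/n^{3/2}$ is below $1/n$ only for $n$ beyond roughly $3\cdot 10^3$. Your middle-range step needs $\rho^{2\sqrt n}\ll n^{-2}$ with $\rho\approx e/(2\sqrt2)\approx 0.96$, which holds only for $n$ of order $10^5$.

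\textbf{The bound fails uniformly in $\mu_t$ for small $n$.} For small $n$ some $x_\ell$ exceed $1$; for example, $n=10$ and $\ell=4$ give $x_4\approx 1.04$. There the bound $x_\ell^{2\ell\mu_t+1}$ diverges as $\mu_t\to\infty$.

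\textbf{The fallback is not a finite check.} For fixed $n$, $\Phi_t$ still depends on the unbounded parameter $\mu_t$ (and on $|A_t|\le n\mu_t$), and nothing in your argument reduces these infinitely many cases. The sharper Stirling/Gamma-function refinement is only announced, so the main obstacle you identify remains open.

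The missing idea is to absorb $\mu_t$ into a single parameter $N=n\mu_t$.
\begin{itemize}
\item Use $\prod_{j=\ell+1}^{n}(1-\frac{3}{2j})\le(\ell/n)^{3/2}\le\bigl(\frac{2\ell\mu_t+1}{2n\mu_t}\bigr)^{3/2}$ and $\binom{|A_t|}{i}\le\binom{N}{i}$.
\item Writing $i=2\ell\mu_t+1$, each term is then at most $a_i:=\binom{N}{i}(i/(2N))^{3i/2}$.
\item The type-$t$ contribution is a sub-sum of $\sum_{i=3}^{N}a_i$, which depends on $N$ alone.
\end{itemize}
This sum is below $1/N$. The exact ratio $a_{i+1}/a_i=(N-i)\sqrt{i+1}\,(1+1/i)^{3i/2}(2N)^{-3/2}$ is less than $3/4$ for $3\le i<N$, and $a_3<1/(4N)$ once $N\ge 18$. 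That leaves a genuinely finite check for $N\le 17$. The result is $\Phi_t<1/(n\mu_t)\le 1/n$ uniformly in $\mu_t$, with no loss from $(eN/s)^s$.
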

\begin{proof}
Fix any type $t\in [k]$. Since $\mu_t=\lceil |A_t|/n\rceil$, we have $|A_t|\le n\cdot\mu_t$.
Then
$$
\prod_{j=\ell+1}^{n}\left(1-\frac{3}{2j}\right)
\le
\left(\frac{\ell}{n}\right)^{3/2}
\le
\left(\frac{2\ell \cdot \mu_t+1}{2n\cdot\mu_t}\right)^{3/2},
$$
where the first inequality follows from the following claim, whose proof is given in Appendix~\ref{app:omitted-3mms}.

\begin{claim} \label{claim:binary-3mms-product-bounds}
For all $1\le \ell\le r$, we have $\prod_{i=\ell+1}^{r}\left(1-\frac{3}{2i}\right)
\le \left(\frac{\ell}{r}\right)^{3/2}$.
\end{claim}

Therefore, the contribution of type $t$ to $\Phi(M,n)$ is at most
$$
\sum_{\ell=1}^{n} \binom{|A_t|}{2\ell \cdot \mu_t+1} \left( \frac{2\ell \cdot \mu_t+1}{2n \cdot \mu_t} \right)^{3(2\ell \cdot \mu_t+1)/2} < \frac{1}{n \cdot \mu_t} \leq \frac{1}{n},
$$
where the first inequality follows from the fact that $|A_t| \leq n\cdot\mu_t$, together with the following claim, whose proof is also deferred to Appendix~\ref{app:omitted-3mms}, applied with $N=n \cdot \mu_t$; the second inequality follows since $\mu_t \geq 1$.

\begin{claim} \label{claim:binary-3mms-numerical}
For every integer $N\ge 1$, we have
\begin{equation*}
    \sum_{i=3}^{N} \binom{N}{i} \left(\frac{i}{2N}\right)^{3i/2} < \frac{1}{N}.
\end{equation*}
\end{claim}

Since there are at most $k \le n$ types in total, summing these potentials over all types yields the lemma.

Note that this is the only place where the condition $k \leq n$ is required.
Therefore, our result also holds under the slightly weaker assumption that $\sum_{t\in [k]} (1/\mu_t) \leq n$.

\end{proof}

Given that Invariant~\ref{invariant:binary-3mms-potential} holds at the beginning of round $i$, the bundle $X_i$ allocated to agent $i$ has cost
\begin{equation*}
c_{t_i}((R_i\setminus B_i)\cup S)
= |S| =
\left\lceil\frac{3|R_i\cap A_{t_i}|}{2(n-i+1)}\right\rceil
\le \left\lceil 3 \cdot \mu_{t_i}\right\rceil
= 3 \cdot \mu_{t_i},
\end{equation*}
where the last equality uses the fact that $\mu_{t_i}$ is an integer.
Therefore, it remains to show the existence and computation of a set $S\subseteq B_i$ satisfying $\Phi(B_i\setminus S, n-i) \le \Phi(R_i,n-i+1)$, as this implies $\Phi(R_{i+1},n-i) < 1$ and thus maintains the invariant.

\subsection{The Existence and Computation of Set \texorpdfstring{$S$}{}}

In this subsection, we establish the key technical step by showing how to find a feasible set $S\subseteq B_i$ in round $i$.
As in the previous sections, we first prove the existence of such a set via a random sampling argument (Lemma~\ref{lem:binary-3mms-one-step}), and then show that the sampling process can be derandomized in deterministic polynomial time (Lemma~\ref{lem:binary-3mms-derandomization}).

\begin{lemma}
\label{lem:binary-3mms-one-step}
Consider any $i<n$, and let $t_i\in [k]$ be the type of agent $i$.
Let $B_i = R_i \cap A_{t_i}$.
There exists a set $S\subseteq B_i$ such that
\begin{equation*}
    |S| = \left\lceil\frac{3|B_i|}{2(n-i+1)}\right\rceil
    \quad
    \text{and}
    \quad
    \Phi(B_i\setminus S,n-i)\le \Phi(R_i,n-i+1).
\end{equation*}
\end{lemma}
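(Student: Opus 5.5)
We will prove the existence of $S$ by the probabilistic method. Write $r:=n-i+1\ge 2$, $b:=|B_i|$ and $s:=\lceil 3b/(2r)\rceil$. Choose $S$ uniformly at random among all size-$s$ subsets of $B_i$, and show
\begin{equation*}
\E\left[\Phi(B_i\setminus S,\,r-1)\right]\le \Phi(R_i,r).
\end{equation*}
Some realization of $S$ then achieves the bound. First the trivial checks. If $b=0$, take $S=\varnothing$: then $B_i\setminus S=\varnothing$, every binomial in $\Phi(\varnothing,r-1)$ vanishes, and the inequality holds because $\Phi\ge 0$. If $b\ge 1$, we have $s\le b$, since $3b/(2r)\le 3b/4$ when $r\ge 2$. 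So $S$ is well defined, and $s/b\ge 3/(2r)$.

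The key step is to read each binomial coefficient as a count of subsets. For a type $t$ and an integer $d\ge 1$, $\binom{|R\cap A_t|}{d}$ is the number of $d$-subsets of $R\cap A_t$. Fix a $d$-subset $D\subseteq B_i\cap A_t$. It survives, i.e.\ $D\subseteq B_i\setminus S$, with probability
\begin{equation*}
\frac{\binom{b-d}{s}}{\binom{b}{s}}=\prod_{j=0}^{d-1}\frac{b-s-j}{b-j}\le\left(1-\frac{s}{b}\right)^{d}\le\left(1-\frac{3}{2r}\right)^{d}.
\end{equation*}
If $d>b-s$ the probability is $0$ and the bound is trivial. Any $d$-subset of $(B_i\setminus S)\cap A_t$ is a $d$-subset of $B_i\cap A_t\subseteq R_i\cap A_t$. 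By linearity of expectation,
\begin{equation*}
\E\left[\binom{|(B_i\setminus S)\cap A_t|}{d}\right]\le\binom{|B_i\cap A_t|}{d}\left(1-\frac{3}{2r}\right)^{d}\le\binom{|R_i\cap A_t|}{d}\left(1-\frac{3}{2r}\right)^{d}.
\end{equation*}

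Now apply this with $d=2\ell\mu_t+1$ for each $t\in[k]$ and $\ell\in[r-1]$. Multiply by $\bigl(\prod_{j=\ell+1}^{r-1}(1-\frac{3}{2j})\bigr)^{d}$. The extra factor $(1-\frac{3}{2r})^{d}$ completes the product to $\prod_{j=\ell+1}^{r}(1-\frac{3}{2j})$, which is exactly the $(t,\ell)$ term of $\Phi(R_i,r)$. Summing over $t$ and $\ell\in[r-1]$ bounds $\E[\Phi(B_i\setminus S,r-1)]$ by $\Phi(R_i,r)$ minus its $\ell=r$ terms. Those terms are all nonnegative: each factor $1-\frac{3}{2j}$ with $j\ge 2$ is positive, and for $\ell=r$ the product is empty. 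This gives the claimed inequality, and hence the existence of $S$.

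The only delicate point is the per-subset survival bound for sampling without replacement. It rests on $(b-s-j)/(b-j)\le (b-s)/b$, together with $s/b\ge 3/(2r)$ coming from the ceiling. The rest is bookkeeping of the product indices.
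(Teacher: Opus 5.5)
Your proposal is correct and follows essentially the same argument as the paper: sample a uniformly random $s$-subset of $B_i$, bound the survival probability of each fixed $(2\ell\mu_t+1)$-subset by $\bigl(1-\frac{3}{2r}\bigr)^{2\ell\mu_t+1}$, sum over subsets by linearity of expectation, absorb the extra factor into the product, and drop the nonnegative $\ell=r$ terms. Your explicit checks that $s\le b$ and that each factor $1-\frac{3}{2j}$ is positive for $j\ge 2$ are small details the paper leaves implicit.
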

\begin{proof}
If $B_i=\varnothing$, then choosing $S=\varnothing$ trivially satisfies the lemma.
Hence, we assume that $B_i\ne\varnothing$.
Choose $S$ uniformly at random from all subsets of $B_i$ of size $\left\lceil 3|B_i|/(2(n-i+1))\right\rceil$.
Then the next residual set $R_{i+1}=B_i\setminus S$ has deterministic size
\begin{equation*}
    |R_{i+1}| =
    |B_i|-\left\lceil \frac{3|B_i|}{2(n-i+1)}\right\rceil
    \le
    \left(1-\frac{3}{2(n-i+1)}\right)|B_i|.
\end{equation*}

It therefore remains to show that
$\E[\Phi(R_{i+1},n-i)] \le \Phi(R_i,n-i+1)$,
which immediately implies the existence of the desired set $S$.
Recall that for all $R \subseteq M$ and $r\in [n]$,
\begin{equation*}
    \Phi(R,r)=\sum_{t\in[k]}\sum_{\ell\in[r]}
\binom{|R\cap A_t|}{2\ell\cdot \mu_t+1}
\left(\prod_{j=\ell+1}^{r}\left(1-\frac{3}{2j}\right)\right)^{2\ell\cdot \mu_t+1}.
\end{equation*}

Fix a type $t\in[k]$ and $\ell\in[n-i]$.
We first upper bound
$\E\left[\binom{|R_{i+1}\cap A_t|}{2\ell\cdot \mu_t+1}\right]$.

Notice that, since $R_{i+1}\subseteq B_i$, every subset of $R_{i+1}\cap A_t$ is also a subset of $B_i\cap A_t$.
Therefore, to count subsets of $R_{i+1}\cap A_t$ of a given size, it suffices to analyze the probability that a fixed subset $T\subseteq B_i\cap A_t$ of that size survives the sampling process, namely, that $T\cap S=\varnothing$.

For any fixed subset $T\subseteq B_i\cap A_t$ with $|T|=2\ell\cdot \mu_t+1$, we have
$\Pr[T\cap S=\varnothing]=0$
if $|B_i\setminus S|<|T|$.
Otherwise,
\begin{equation*}
    \Pr[T\cap S=\varnothing]
    =
    \frac{\binom{|B_i|-|T|}{|B_i\setminus S|-|T|}}
         {\binom{|B_i|}{|B_i\setminus S|}}
    \le
    \left(\frac{|B_i\setminus S|}{|B_i|}\right)^{|T|}
    \le
    \left(1-\frac{3}{2(n-i+1)}\right)^{2\ell\cdot \mu_t+1}.
\end{equation*}

Therefore, we have
\begin{align*}
    \E\left[\binom{|R_{i+1}\cap A_t|}{2\ell\cdot \mu_t+1}\right]
    &\le
    \binom{|B_i\cap A_t|}{2\ell\cdot \mu_t+1}
    \left(1-\frac{3}{2(n-i+1)}\right)^{2\ell\cdot \mu_t+1} \\
    &\le
    \binom{|R_i\cap A_t|}{2\ell\cdot \mu_t+1}
    \left(1-\frac{3}{2(n-i+1)}\right)^{2\ell\cdot \mu_t+1}.
\end{align*}

Consequently, we have
\begin{align*}
&\ \E[\Phi(R_{i+1},n-i)] \\
= &\ 
\sum_{t\in[k]}
\sum_{\ell\in[n-i]}
\E\left[\binom{|R_{i+1}\cap A_t|}{2\ell\cdot \mu_t+1}\right]
\left(
\prod_{j=\ell+1}^{n-i}
\left(1-\frac{3}{2j}\right)
\right)^{2\ell\cdot \mu_t+1} \\
\le &\ 
\sum_{t\in[k]}
\sum_{\ell\in[n-i]}
\binom{|R_i\cap A_t|}{2\ell\cdot \mu_t+1}
\left(1-\frac{3}{2(n-i+1)}\right)^{2\ell\cdot \mu_t+1}
\left(
\prod_{j=\ell+1}^{n-i}
\left(1-\frac{3}{2j}\right)
\right)^{2\ell\cdot \mu_t+1} \\
= &\
\sum_{t\in[k]}
\sum_{\ell\in[n-i]}
\binom{|R_i\cap A_t|}{2\ell\cdot \mu_t+1}
\left(
\prod_{j=\ell+1}^{n-i+1}
\left(1-\frac{3}{2j}\right)
\right)^{2\ell\cdot \mu_t+1} \\
\le &\
\sum_{t\in[k]}
\sum_{\ell\in[n-i+1]}
\binom{|R_i\cap A_t|}{2\ell\cdot \mu_t+1}
\left(
\prod_{j=\ell+1}^{n-i+1}
\left(1-\frac{3}{2j}\right)
\right)^{2\ell\cdot \mu_t+1} 
= \Phi(R_i,n-i+1),
\end{align*}
which establishes the existence of the desired set $S$.
\end{proof}

Applying the method of upper bounding the conditional expectations, as in the previous sections, we can derandomize the above sampling procedure and compute the desired set $S$ deterministically in polynomial time.
The proof of the following lemma is deferred to Appendix~\ref{app:omitted-3mms}.

\begin{lemma}
\label{lem:binary-3mms-derandomization}
For any $i < n$, the set $S\subseteq B_i$ described in Lemma~\ref{lem:binary-3mms-one-step} can be computed in deterministic polynomial time.
\end{lemma}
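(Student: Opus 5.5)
The plan is to derandomize the uniform sampling of Lemma~\ref{lem:binary-3mms-one-step} by the method of conditional expectations, fixing the elements of $S$ one at a time. Let $s=\lceil 3|B_i|/(2(n-i+1))\rceil$. We build $S$ in $s$ steps. At each step a partial set $S'\subseteq B_i$ with $|S'|=j$ is already fixed, and the rest of $S$ is a uniformly random $(s-j)$-subset of $B_i\setminus S'$. This conditional distribution is exactly the law of a uniform $s$-subset of $B_i$ conditioned on containing $S'$.

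Write $F(S')=\E[\Phi(B_i\setminus S,n-i)\mid S'\subseteq S]$. Averaging over which element is added next gives
\[
F(S')=\frac{1}{|B_i\setminus S'|}\sum_{e\in B_i\setminus S'}F(S'\cup\{e\}).
\]
This holds because a uniform $(s-j)$-subset of $B_i\setminus S'$ can be generated by first drawing a uniform element $e$ and then a uniform $(s-j-1)$-subset of the rest. Hence some $e$ satisfies $F(S'\cup\{e\})\le F(S')$, and we greedily add such an $e$. The proof of Lemma~\ref{lem:binary-3mms-one-step} gives $F(\varnothing)\le\Phi(R_i,n-i+1)$, and when $|S'|=s$ we have $F(S')=\Phi(B_i\setminus S',n-i)$. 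So the final set meets the required bound.

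The remaining task, and the main obstacle, is to compute $F(S')$ exactly in polynomial time. By linearity of expectation and the structure of $\Phi$, it suffices to compute
\[
\E\left[\binom{|(B_i\setminus S)\cap A_t|}{2\ell\mu_t+1}\right]
\]
for each $t\in[k]$ and $\ell\in[n-i]$, and then multiply by the explicit product weights. Given $S'$, the quantity $|(B_i\setminus S)\cap A_t|$ equals $a-Y$. Here $a=|(B_i\setminus S')\cap A_t|$, and $Y$ counts how many of the $s-j$ further draws land in $A_t$. The variable $Y$ is hypergeometric with population $|B_i\setminus S'|$, $a$ successes and $s-j$ draws, so its distribution has closed-form binomial expressions. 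We then sum $\Pr[Y=y]\binom{a-y}{2\ell\mu_t+1}$ over the polynomially many values of $y$.

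All quantities are integers or rationals whose bit-length is polynomial in $m$, and there are $O(kn)$ pairs $(t,\ell)$ to handle. Evaluating all candidates $e$ at each of the $s\le m$ steps therefore takes polynomial time. The one delicate point is the representation of the weights $\prod_j(1-3/(2j))$ raised to the power $2\ell\mu_t+1$. They are rational with polynomial bit-length, since $\mu_t\le m$, so exact rational arithmetic suffices. An alternative is to compare the candidates $F(S'\cup\{e\})$ directly, which avoids any rounding issues.
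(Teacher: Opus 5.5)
Your proposal is correct and follows essentially the same route as the paper: derandomize the uniform $s$-subset of Lemma~\ref{lem:binary-3mms-one-step} by the method of conditional expectations, and compute the expected binomial terms $\binom{|(B_i\setminus S)\cap A_t|}{2\ell\mu_t+1}$ exactly for each pair $(t,\ell)$. The difference is only in how the process is sequentialized. You grow $S$ by picking the best of all remaining candidates at each of $s$ steps, conditioning only on inclusions, which gives your clean hypergeometric formula. The paper instead scans the items of $B_i$ in a fixed order and makes binary include/exclude decisions with probability $q/|R'|$; it evaluates the conditional expectation by splitting each $h$-subset between $U\cap A_t$ and $R'\cap A_t$.
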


Combining the above analysis yields the following theorem.

\begin{theorem}
\label{thm:known-binary-three-mms}
For $k$ known binary additive cost function types with $k\le n$, there exists a deterministic polynomial-time online algorithm that returns a $3$-MMS allocation.
\end{theorem}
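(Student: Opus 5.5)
The plan is to assemble the theorem from the lemmas already established for Algorithm~\ref{alg:known-binary-three-mms}, by induction on the rounds.

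\emph{Inductive claim.} For every round $i\le n$ we have $\Phi(R_i,n-i+1)<1$.

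\emph{Base case.} This is Lemma~\ref{lem:binary-3mms-initial-potential}, $\Phi(M,n)<1$. It is the only place where $k\le n$ is used.

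\emph{Inductive step.} Suppose the claim holds at round $i<n$.
\begin{itemize}
\item By Lemma~\ref{lem:binary-3mms-one-step}, there is a set $S\subseteq B_i$ of the prescribed size $\lceil 3|B_i|/(2(n-i+1))\rceil$ with $\Phi(B_i\setminus S,n-i)\le\Phi(R_i,n-i+1)<1$.
\item By Lemma~\ref{lem:binary-3mms-derandomization}, such an $S$ can be found in deterministic polynomial time.
\item Since $R_{i+1}=B_i\setminus S$, the claim holds at round $i+1$.
\end{itemize}
Lemma~\ref{lem:binary-3mms-residual-safety} then turns the potential bound into Invariant~\ref{invariant:binary-3mms-potential}: $|R_i\cap A_t|\le 2(n-i+1)\mu_t$ for every type $t$ and every $i\le n$.

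\emph{Cost bound for agents $i<n$.} Agent $i$ receives $X_i=(R_i\setminus B_i)\cup S$. The items in $R_i\setminus B_i$ lie outside $A_{t_i}$, so they cost her nothing. Her cost is therefore
\[
|S|=\left\lceil \frac{3|R_i\cap A_{t_i}|}{2(n-i+1)}\right\rceil .
\]
By the invariant, the fraction inside the ceiling is at most $3\mu_{t_i}$. Since $\mu_{t_i}$ is an integer, the ceiling is at most $3\mu_{t_i}$.

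\emph{Cost bound for agent $n$.} She receives all of $R_n$. The invariant at $i=n$ gives $|R_n\cap A_{t_n}|\le 2\mu_{t_n}\le 3\mu_{t_n}$, so her cost is also within the bound.

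\emph{Completeness and the degenerate case.} The algorithm gives the last agent all of $R_n$, so every item is allocated and the output is a complete $3$-MMS allocation. Types with $\mu_t=0$ are handled by the preprocessing described earlier. An agent of such a type can take all remaining items at cost $0$, and all later agents receive empty bundles.

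\emph{Main obstacle.} The substantive difficulty lies in the two deferred ingredients. The first is the numerical Claim~\ref{claim:binary-3mms-numerical}, which drives the base case. The second is the derandomization Lemma~\ref{lem:binary-3mms-derandomization}. For the latter, I would pick the elements of $S$ one at a time and keep the conditional expectation of the binomial-coefficient potential from increasing. This expectation is computable exactly, because for a uniformly random completion the survival probability of each fixed $T$ is a hypergeometric ratio. Taking both as given, the theorem follows.
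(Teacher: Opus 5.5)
Your proposal is correct and matches the paper's argument: the theorem is obtained by chaining $\Phi(M,n)<1$ (the only use of $k\le n$) with the one-step and derandomization lemmas to keep $\Phi(R_i,n-i+1)<1$, applying the residual-safety lemma to get $|R_i\cap A_t|\le 2(n-i+1)\mu_t$, and then using the ceiling bound $|S|\le 3\mu_{t_i}$ for $i<n$ and $|R_n\cap A_{t_n}|\le 2\mu_{t_n}$ for the last agent. Your sketched derandomization adds elements of $S$ one at a time, whereas the paper scans items and includes each with probability $q/|R'|$; this difference is cosmetic, since both are valid conditional-expectation schemes that rest on the same hypergeometric survival formula.
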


When there are more known types, the initial potential contains more terms, making it harder to keep their sum below one.
Increasing the sampling coefficient above $3/2$ means allocating more costly items in each round.
This reduces the survival weights in the potential, but also raises the current agent's cost bound.
Thus, handling more types in this way may require a larger competitive ratio.
To obtain such a guarantee, the coefficient must be chosen so that the initial potential remains below one and the residual bound is preserved in every round.
\par
%\begin{revision}
%\paragraph{Extension to larger type pools.}
%The condition $k\le n$ can be relaxed within the same framework.For example, increasing the sampling parameter gives a $4$-competitive algorithm for $k\le n^2$.More generally, larger sampling parameters allow larger type pools, at the cost of a larger competitive ratio.For the $4$-competitive extension, the residual bound remains unchanged, and the same conditional-expectation method applies with the modified sample size and potential weights. Appendix~\ref{app:larger-type-pools} gives the details for $k\le n^2$.
%\par
%\end{revision}

\subsection{Lower Bound for Bounded \texorpdfstring{$k$}{}}

We complement our positive results with a lower bound, showing that even when $k=O(1)$, no deterministic online algorithm can guarantee a competitive ratio strictly smaller than $2$.

\begin{theorem}\label{thm:binary-small-k-two-lower}
No deterministic online algorithm can achieve a competitive ratio strictly smaller than $2$ for known binary additive cost instances, even when $k=O(1)$.
\end{theorem}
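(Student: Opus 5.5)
We construct a single finite instance with $n=3$ agents and $m=6$ chores, and argue adversarially against an arbitrary deterministic online algorithm. The number of types will be an absolute constant, so $k=O(1)$.

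\emph{The type family.} Let $M=\{e_1,\dots,e_6\}$. The known types are all binary additive cost functions whose approval set is one of the following:
\begin{itemize}
\item $M$ itself;
\item any $3$-element subset of $M$;
\item any $2$-element subset of $M$.
\end{itemize}
This gives $k=1+\binom{6}{3}+\binom{6}{2}=36$ types. For $n=3$ agents, an approval set of size $s$ has $\MMS=\lceil s/3\rceil$. So type $M$ has $\MMS=2$, and every $3$-subset or $2$-subset type has $\MMS=1$. All MMS values are integers, so a ratio strictly below $2$ means:
\begin{itemize}
\item an agent of type $M$ receives at most $3$ items;
\item an agent of any other type receives at most $1$ of her approved items.
\end{itemize}

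\emph{The adversary.} Suppose the algorithm has competitive ratio strictly below $2$. The adversary plays three rounds.
\begin{enumerate}
\item The first agent has type $M$. Since her cost is her bundle size, she receives at most $3$ items. Hence the residual set $R_2$ satisfies $|R_2|\ge 3$.
\item The adversary picks any three items $T\subseteq R_2$, and the second agent has approval set $T$. The algorithm may give her all of $R_2\setminus T$ at zero cost, but she may receive at most one item of $T$. Hence at least two items of $T$ remain in $R_3$.
\item The adversary picks two items $\{x,y\}\subseteq R_3\cap T$, and the third agent has approval set $\{x,y\}$, with $\MMS=1$. The final agent must receive all of $R_3$, since every item has to be allocated. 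Her cost is therefore $2=2\cdot\MMS$, which contradicts the assumption.
\end{enumerate}

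\emph{Why this is legitimate.} The algorithm is deterministic, so the adversary can simulate it and choose each type as a function of the algorithm's earlier decisions. Each chosen type lies in the known family: the set $T$ is a $3$-subset of $M$, and $\{x,y\}$ is a $2$-subset. The family need not be realized in full, which matches the model. The one point needing care is that the algorithm can dump zero-cost items on the second agent. The construction handles this because the adversary only needs three items to survive the first round, and each surviving $T$-item keeps cost $1$ for the final type regardless of what else was dumped.

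The argument involves no real calculation. The only place a naive attempt fails is taking the second agent's approval set to be all of $R_2$: then, for example, $|R_2|=4$ gives $\MMS=2$, and she may take three approved items. Restricting the second agent's approval set to exactly three items keeps her $\MMS$ at $1$, so she can absorb at most one approved item, and this is what forces a cost of $2$ on the last agent.
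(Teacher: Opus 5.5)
Your adversary argument is logically valid for the statement as worded. It is the paper's argument specialized to $n=3$. The paper splits $M$ into six groups of size $n/3$ (so $|M|=2n$), and uses the all-ones type with $\MMS=2$ together with the $\binom{6}{3}=20$ types approving three groups, each with $\MMS=1$. A first all-ones agent can take at most $3$ items, so she touches at most three groups. All remaining $n-1$ agents then approve three untouched groups, and pigeonhole forces one of them to receive two costly items. With singleton groups this is exactly your construction, except that your $2$-subset types are superfluous. Your third agent could simply have type $T$ again, since at least two items of $T$ survive; that gives $k=21$.

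The substantive shortfall is that your instance has $k=36>n=3$. The paper deliberately takes $n\ge k$ and notes that $n$ can be arbitrarily large. The purpose of this theorem is to complement the $3$-competitive algorithm of Theorem~\ref{thm:known-binary-three-mms}, which is only claimed for $k\le n$. The paper's conclusion that the optimal ratio in that regime lies between $2$ and $3$ therefore needs a hard instance with $k\le n$, and yours is outside that regime. A single instance at $n=3$ also says nothing about whether some algorithm beats $2$ for larger $n$. To fix this:
\begin{itemize}
\item replace each item by a group of $n/3$ items, with $n\ge 21$ divisible by $3$;
\item drop the $2$-subset types;
\item let every agent after the first have the three-group type disjoint from the first bundle.
\end{itemize}
Since that type approves exactly $n$ items, it has $\MMS=1$. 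Your round-by-round adversary then collapses to the single pigeonhole step over $n-1$ agents.
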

\begin{proof}
We construct an instance with $k=21$, where $n\ge k$ is divisible by $3$ and $|M|=2n$. Note that $n$ can be chosen arbitrarily large.
Partition the items $M$ into six disjoint groups $G_1,\ldots,G_6$, each of size $n/3$. We define $k = 1+\binom{6}{3}=21$
known binary types. 
The first type, denoted by $t_0$, has cost $1$ for every item; thus its approval set is $A_{t_0}=M$, and $\MMS_{t_0}=\left\lceil \frac{|M|}{n}\right\rceil = 2$.
For every subset $S\subseteq [6]$ of size $3$, define a type $t_S$ whose approval set is
$A_{t_S}=\bigcup_{j\in S}G_j$.
Since $|A_{t_S}|=n$, we have $\MMS_{t_S}=1$ for every such type.

Fix any deterministic online algorithm, and suppose it guarantees $\alpha$-MMS for some $\alpha<2$. The adversary first reveals one agent of type $t_0$. Since $\MMS_{t_0}=2$, this agent can receive at most three items; otherwise, her cost would be at least $4>\alpha\cdot \MMS_{t_0}$. Hence, the allocated bundle $X_1$ intersects at most three of the six groups.

Let $S\subseteq [6]$ be the set of groups intersected by $X_1$. Then $|S|\le 3$, so there exists a subset $T\subseteq [6]\setminus S$ with $|T|=3$. The adversary now makes all remaining $n-1$ agents have type $t_T$. Since no item in $A_{t_T}$ has been allocated to the first agent and $|A_{t_T}|=n$, all $n$ items in $A_{t_T}$ must be allocated among the remaining $n-1$ agents. By the pigeonhole principle, some such agent receives at least two items from $A_{t_T}$, incurring cost at least $2$. This violates the $\alpha$-MMS guarantee because $\MMS_{t_T}=1$ and $\alpha<2$.
\end{proof}

Together, Theorems~\ref{thm:known-binary-three-mms} and~\ref{thm:binary-small-k-two-lower} show that, for known binary additive costs in the small-$k$ regime, the optimal deterministic approximation ratio lies between $2$ and $3$.

We further show in Appendix~\ref{appendix: exactMMS} that exact MMS allocations can be guaranteed when $k=2$, while a $3/2$ lower bound holds for $k=7$. 
We leave open the question of characterizing the values of $k$ for which exact MMS allocations, or more generally constant-factor approximations, are possible.

\section{Conclusion and Open Problems}

We initiated the study of MMS allocations for chores in the online agent-arrival model. In the fully uninformed setting with subadditive cost functions, we presented an algorithm with a competitive ratio of $O(\min\{n, k\log^{1+\epsilon}k, \log m\})$ for any constant $\epsilon>0$, which is near-optimal. For the known-types setting, we showed that polylogarithmic and constant competitive ratios are achievable in the additive and binary additive settings, respectively.

Several gaps in the competitive ratios remain open, particularly in the known-types setting. For example, our results suggest that the optimal competitive ratio for the binary setting grows with $k$. However, the threshold value of $k$ for which constant-competitive or even $1$-competitive algorithms exist is still unclear.
Furthermore, it remains unknown whether the idea of maintaining a universal residual can be extended to the general additive setting to obtain a constant competitive ratio when $k \le n$.
Finally, it would be interesting to investigate whether randomized algorithms can achieve strictly better competitive ratios for the problem, either in expectation or with high probability.
\newpage
\section*{Declaration of AI Use}
The main algorithmic framework was developed entirely by the authors.
The authors acknowledge the use of GPT-5.6 Sol to assist in exploring parts of the proofs.
The manuscript was written by the authors, with ChatGPT and Codex used for language editing and literature searches.
All arguments were independently verified by the authors, who take full responsibility for the manuscript.

\bibliography{ref}
\bibliographystyle{alpha}

\newpage

\appendix

\section{Upper Bounds for Known Number of Types}
\label{sec:known-number-types}

In this appendix, we turn to the setting where the algorithm knows the exact value of $k$ upfront. We show that the competitive ratio can be improved to $\min\{n,k, \lceil \log m \rceil + 1\}$.

As mentioned in Section~\ref{sec:fully-uninformed-upper}, we simply fix $ \ell=\min\{n,k,\lceil \log m\rceil+1\}$ for all agents. Equivalently, since \(L=2(\lceil \log m\rceil+1)\), we have $\ell=\min\{n,k,L/2\}$. In what follows, we prove that all items are allocated.

\begin{algorithm}[!htb]
\caption{$\min\{n, k, \lceil\log m\rceil + 1\}$-MMS Algorithm for Known Number of Types}
\label{alg: k_type}
\KwIn{A set $M$ of items, the number of online agents $n$, and the number of types $k$}

Let $R_1 \gets M$ be the set of remaining items, $\ell\gets \min\{n,k,L/2\}$\;

\For{each online agent $i$, where $i=1,2,\ldots,n$}{
    Compute the MMS partition $\mathbf P^{(i)}=(P^{(i)}_1,\ldots,P^{(i)}_n)$ with respect to $c_i$\;

    Relabel the bundles of $\mathbf P^{(i)}$ so that
    $|P^{(i)}_1\cap R_i|\geq\cdots\geq|P^{(i)}_n\cap R_i|$;

    Let $X_i\gets \bigcup_{j=1}^{\ell}\left(P^{(t_i)}_j\cap R_i\right)$ be union of the $\ell$ bundles with most unallocated items;

    Update the set of remaining items: $R_{i+1} \gets R_i\setminus X_i$;
}
\KwOut{Allocation $\mathbf X=(X_1,\ldots,X_n)$}
\end{algorithm}

\begin{lemma} \label{lemma: knownk_kmms}
    With known $k$, Algorithm~\ref{alg: k_type} outputs a $\min\{n,k,\lceil \log m \rceil + 1\}$-MMS allocation.
\end{lemma}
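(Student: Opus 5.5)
The plan has two parts. The approximation ratio follows by the same subadditivity argument as for Theorem~\ref{thm:fully-uninformed-upper}: $c_i(X_i)\le \sum_{j=1}^{\ell} c_i(P^{(i)}_j)\le \ell\cdot\MMS_i$, using monotonicity and subadditivity. It therefore suffices to show that every item is allocated. If $\ell=n$, the first agent takes all $n$ bundles of her MMS partition, so nothing remains and we are done. We may thus assume $\ell<n$. Suppose for contradiction that some item is unallocated at the end. As in the proof of Lemma~\ref{lemma:all_item_allocated_uninformed}, this forces every agent to take exactly $\ell$ nonempty bundles $P^{(i)}_j\cap R_i$.

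\emph{Case $\ell=k$.} Fix a type $t$ and let $N_t$ be the set of agents of that type. All of them use the same MMS partition $\mathbf P^{(t)}$, and the bundles they take are pairwise disjoint: once a bundle $P^{(t)}_j\cap R_i$ is taken, its items leave $R$, so the same bundle is empty at every later time. Since each agent takes $k$ nonempty bundles of $\mathbf P^{(t)}$, we get $k|N_t|\le n$. Moreover, if $k|N_t|=n$, all bundles of $\mathbf P^{(t)}$ are exhausted and every item is allocated, a contradiction. Hence $|N_t|<n/k$ for every type. Summing over the at most $k$ types gives $n=\sum_t|N_t|<n$, a contradiction.

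\emph{Case $\ell=L/2=\lceil\log m\rceil+1$.} Each agent takes the $\ell$ bundles with the most remaining items, so $|R_{i+1}|\le(1-\ell/n)|R_i|$. Iterating over all $n$ agents gives
\[
|R_{n+1}|\le m(1-\ell/n)^n\le m e^{-\ell}<1,
\]
since $\ell>\log m$. This contradicts the assumption that some item remains. Unlike Lemma~\ref{lemma:all_item_allocated_uninformed}, this case needs no halving argument, because $\ell$ is fixed from the very first agent, and so $L/2$ suffices.

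The main obstacle is the disjointness and strictness in the $\ell=k$ case. One must check that a bundle of $\mathbf P^{(t)}$ counted for one agent of type $t$ cannot be counted again for a later agent of that type; this holds because it is empty once taken. One must also check that equality $k|N_t|=n$ already yields a full allocation; this holds because all $n$ bundles of that type's partition are exhausted, and their union is $M$.
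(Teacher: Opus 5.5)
Your proof is correct and follows essentially the same route as the paper: the subadditivity bound $c_i(X_i)\le \ell\cdot\MMS_i$, then the cases $\ell=n$ (trivial), $\ell=L/2$ (geometric shrinkage $|R_{n+1}|\le m e^{-\ell}<1$), and $\ell=k$ (a pigeonhole argument over agents of one type sharing a single MMS partition). Your contradiction framing, in which every agent must take $\ell$ nonempty bundles, makes the paper's ``$\lceil n/k\rceil\cdot k\ge n$ distinct bundles'' step slightly more rigorous, because it rules out already-emptied bundles being counted again.
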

\begin{proof}
    We first prove the cost guarantee. Similar to the analysis in Section~\ref{sec:fully-uninformed-upper}, we consider an arbitrary agent $i$ and have
\begin{align*}
    c_i(X_i)
    \leq \sum_{j=1}^{\ell} c_i(P^{(t_i)}_j\cap R_i)
    \leq \sum_{j=1}^{\ell} c_i(P^{(t_i)}_j)
    \leq  \ell\cdot \MMS_i.
\end{align*}
Thus every agent receives a cost at most $\ell \cdot \MMS_i.$
    It remains to show that no item is left unallocated. We consider the following cases according to the value of $\ell$.
    \begin{itemize}
        \item 
        If $\ell = n$, then the first agent will take all items, which means all items are allocated.
        \item 
        If $\ell = L/2$, we follow the analysis in Section~\ref{sec:fully-uninformed-upper} and have
        \begin{equation*}
            |R_{i+1}| = |R_i \setminus X_i| \leq \left(1-\frac{L}{2n}\right)\cdot | R_i|,
        \end{equation*}
        which implies that
        \begin{equation*}
            |R_{n+1}| \leq \left(1-\frac{L}{2n} \right)^{n} \cdot |R_1| \leq m \cdot \exp\left(-\frac{L}{2}\right) < 1.
        \end{equation*}
        Thus, all items are allocated.
        \item 
        If $\ell = k$, we prove that all items are allocated via a pigeonhole argument. Since there are $n$ agents distributed among $k$ distinct types, by the pigeonhole principle, at least one type $t$ appears at least $\left\lceil n/k\right\rceil$ times. For this type $t$, Algorithm~\ref{alg: k_type} allocates $\lceil n/k\rceil \cdot \ell = \lceil n/k\rceil \cdot k \geq n$ distinct bundles from the MMS partition with respect to $c_t$. Thus, no item is left unallocated.
    \end{itemize}

    Hence, Algorithm~\ref{alg: k_type} outputs a $\min\{n,k,\lceil \log m\rceil + 1\}$-MMS allocation.
\end{proof}

\section{Lower Bounds for Known Subadditive Types}
\label{app:known-subadditive-lower}

In this appendix, we provide an $\Omega\left( \min \{n, k, \log m/\log\log m\} \right)$ offline lower bound for known subadditive type-cost functions. The $n$- and $m$-dependent terms follow from the offline lower bound of Li et al.~\cite{conf/nips/0037WZ23}. We therefore focus on the $k$-dependent term for $k\leq n$.

\begin{lemma}
\label{lem:known-types-subadditive-k-lower}
For every $k\leq n$, there is an instance of chore allocation with $n$ agents and $k$ subadditive cost types such that every complete allocation has MMS approximation ratio $\Omega(k)$.
%In the online chores allocation with agent arrivals, if the agents belong to $k$ ($k<n$) types and the cost functions are known submodular functions, no algorithm can achieve a competitive ratio of $o(k)$.
\end{lemma}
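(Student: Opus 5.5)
The plan is a direct product construction: every type has MMS equal to $1$, and coordinates make the types ``orthogonal'' to one another. Take the item set $M=[n]^k$, so $m=n^k$ and each item is a vector $x=(x_1,\ldots,x_k)$. For each type $t\in[k]$, the $t$-th coordinate partitions $M$ into the $n$ slices $P^t_j=\{x\in M: x_t=j\}$, $j\in[n]$. Define
\begin{equation*}
    c_t(S) := \bigl|\{x_t : x\in S\}\bigr|,
\end{equation*}
which counts the slices of partition $t$ that $S$ meets. This is a coverage function, so it is monotone and submodular, hence subadditive, and $c_t(\varnothing)=0$. The partition $(P^t_1,\ldots,P^t_n)$ gives every bundle cost $1$. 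No $n$-partition can do better, since $c_t(M)=n$ and subadditivity forces some bundle to have cost at least $1$. Hence $\MMS_t=1$ for every $t$.

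Next I fix the agent population: the $n$ agents are split among the $k$ types as evenly as possible. Each type therefore has at least one agent (using $k\le n$) and at most $\lceil n/k\rceil$ agents. This is an offline instance, so the arrival order is irrelevant.

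The key step is a counting argument. Suppose a complete allocation $\bX$ is $\alpha$-MMS, so every agent $i$ of type $t$ has $c_t(X_i)\le\alpha$. Then the type-$t$ agents together meet at most $\alpha\lceil n/k\rceil$ of the $n$ values of coordinate $t$. If $\alpha\lceil n/k\rceil<n$, then for each $t$ there is a value $v_t\in[n]$ that is the $t$-th coordinate of no item held by a type-$t$ agent. Consider the item $x^*=(v_1,\ldots,v_k)$, which exists because $M$ is a full product. For every type $t$, the item $x^*$ is not held by any type-$t$ agent. So $x^*$ is held by nobody, contradicting completeness. Therefore $\alpha\ge n/\lceil n/k\rceil\ge k/2$, since $\lceil n/k\rceil\le 2n/k$ when $k\le n$. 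When $k$ divides $n$, using integrality of costs gives exactly $\alpha\ge k$. In either case the ratio is $\Omega(k)$.

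I do not expect a real obstacle. The only points needing care are the divisibility/rounding when $k\nmid n$, handled by the $k/2$ bound above, and checking that $\MMS_t$ is exactly $1$ rather than smaller. The construction uses $m=n^k$ items, which is fine because the lemma concerns only the $k$-dependent term, and the $\log m/\log\log m$ term is covered separately by Li et al.~\cite{conf/nips/0037WZ23}.
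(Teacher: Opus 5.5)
Your proof is correct and follows essentially the same route as the paper: a product item set with coordinate-coverage costs, agents split as evenly as possible among the $k$ types, and a missing-value-per-coordinate argument that exhibits an unallocated item $x^*=(v_1,\ldots,v_k)$. The only difference is a choice of parameter. The paper uses $B=\lceil n/k\rceil\cdot k$ values per coordinate, so $\MMS_t\le 2$, and shows some agent has cost at least $k$; your choice of $n$ values gives $\MMS_t=1$ exactly and yields $\alpha\ge n/\lceil n/k\rceil\ge k/2$ directly.
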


\begin{proof}
We first describe our instance. Let $q=\lceil n/k\rceil$ and $B=q\cdot k$. There are $|M|=B^k$ items in the instance, and each item is represented by a vector $\mathbf{x}=(x_1,\ldots,x_k)$, where $x_i \in [B]$. For each $t\in[k]$, we define the cost function as
\begin{align*}
    c_t(S) = \left| \left\{ \ell\in[B]: \exists x\in S \text{ such that } x_t=\ell \right\} \right|.
\end{align*}
Here, $c_t(S)$ counts the number of distinct values that appear in the $t$-th coordinate among the items in $S$. It is easy to verify that this cost function is subadditive. We use $r_t$ to denote the number of agents of type $t$. Since $q\cdot k\ge n$, there exist positive integers $r_1,\dots,r_k$ with $\sum_{t=1}^k r_t=n$ and $r_t\le q$ for all $t$. We fix such integers $r_1,\ldots,r_k$.

We show that $\MMS_t$ is upper bounded by $2$ for every $t \in [k]$ in this instance. For any type $t$, the items in $M$ can be partitioned into $B$ subsets according to their $t$-th coordinate; all items in a subset share the same value of this coordinate. Since $B = q\cdot k \leq n+k \le 2n$, these subsets can be grouped into $n$ bundles, each containing at most two subsets. For any type $t$, every such bundle has a cost of at most $2$, therefore we have
\begin{equation*}
    \MMS_t\leq 2 \quad \text{for every }t\in[k].
\end{equation*}

In the following, we prove that an $o(k)$-MMS allocation is impossible by showing that, for any allocation, there is at least one agent who receives a bundle with cost at least $k$. Suppose, toward a contradiction, that every agent of type $t$ receives a bundle of cost at most $k-1$ (the cost must be an integer). For any type $t$, the bundles assigned to all type-$t$ agents contain items with at most $r_t\cdot (k-1)$ distinct values in the $t$-th coordinate. Since $r_t\cdot (k-1) \leq q\cdot (k-1) < B$, for each coordinate dimension $t \in [k]$, there is at least one available coordinate $\ell_t \in [B]$ that is completely untouched by the bundles assigned to type-$t$ agents. By combining these missing coordinates, it follows that the item $x^* = (\ell_1, \dots, \ell_k)$ must remain unallocated, which leads to a contradiction. Therefore, some agent incurs a cost of at least $k$, implying that the MMS approximation ratio of the allocation is at least $k/2$.
\end{proof}

Applying the offline lower bound by Lemma~\ref{lem:known-types-subadditive-k-lower} and the results of Li et al.~\cite{conf/nips/0037WZ23}, we get the following result.
\begin{corollary}
\label{cor:known-types-subadditive-lower}
In the online chores allocation with agent arrivals, if the agents belong to $k$ types and the cost functions are known subadditive functions, no algorithm can achieve a competitive ratio of
\begin{equation*}
    o\!\left(\min\left\{n,\frac{\log m}{\log\log m},k\right \} \right).
\end{equation*}
\end{corollary}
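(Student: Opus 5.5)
The statement is Corollary~\ref{cor:known-types-subadditive-lower}. The plan is to observe that an offline impossibility transfers directly to the online setting. Then I will take the worse of two offline constructions according to which term attains the minimum. The basic point is that any online algorithm, run on a fixed instance with a fixed arrival order, outputs some complete allocation of that instance. So if every complete allocation of an instance is $\beta$-bad, no online algorithm can be better than $\beta$-competitive on it. This holds whether or not the algorithm knows the types in advance, since the offline bound holds even with full information.

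First, the $k$ term. Suppose $k\le n$. I will use the instance of Lemma~\ref{lem:known-types-subadditive-k-lower} with its chosen type multiplicities $r_1,\dots,r_k$. The adversary announces these $k$ cost functions as the known types and lets the agents arrive in any order with those multiplicities. Every output is a complete allocation, so some agent has cost at least $k/2\cdot\MMS$, which gives the $\Omega(k)$ bound. If instead $k>n$, the minimum is at most $n$. In that case I will fall back on the $n$-dependent term, noting that an instance with fewer realized types is still an instance with at most $k$ known types, because unused types can be added as dummy known types. This requires the model to allow unrealized types, which the paper explicitly permits.

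Second, the $n$ and $\log m/\log\log m$ terms come from the offline lower bound of Li et al.~\cite{conf/nips/0037WZ23}. That result gives subadditive (indeed submodular) instances where every allocation is $\Omega(\min\{n,\log m/\log\log m\})$-MMS. I will declare its $n$ cost functions as the known types, so $k'\le n$ types are used, and pad with dummy types up to $k$ if needed. The offline-to-online transfer then applies verbatim.

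Finally, I combine the two bounds. For given parameters, whichever of $n$, $k$, $\log m/\log\log m$ is smallest, I select the corresponding construction, adjusting it so the other parameters are at least as large as required. Then no algorithm achieves $o(\min\{n,\log m/\log\log m,k\})$. The main obstacle is this parameter bookkeeping. The $k$-instance has $m=B^k$ items, so I must check that $\log m/\log\log m \gtrsim k$ there. Indeed $\log m=k\log B$ with $B\ge k$, so that holds. The Li et al. instance may use up to $n$ distinct types, which is fine when $k\ge$ the number of types it needs; when $k$ is smaller, the $k$-instance already dominates. I also need to ensure item counts can be padded, using zero-cost dummy items or a suitable $m$, without changing the MMS values.
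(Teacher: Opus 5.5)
Your proposal is correct and follows the paper's route. The corollary is the offline $\Omega(k)$ bound of Lemma~\ref{lem:known-types-subadditive-k-lower} combined with the offline $\Omega(\min\{n,\log m/\log\log m\})$ bound of Li et al.~\cite{conf/nips/0037WZ23}, carried over to the online known-types setting because every run of an online algorithm outputs a complete allocation of a fixed instance. Your extra bookkeeping is sound and goes beyond what the paper writes: you check that $\log m/\log\log m\ge k/2$ when $m=B^k$ with $B\ge k$, and you pad with unrealized dummy types when $k>n$. Note only that it suffices to exhibit one family of instances along which the minimum tends to infinity; you cannot match every parameter triple, since for instance when $n\ge m$ an exact MMS allocation is trivial.
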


\section{Derandomization for \texorpdfstring{$O(\log k)$}{}-MMS Universal Partition}
\label{app:additive-derandomization}

In this appendix, we present the derandomization for the $O\!\left(\log k\right)$-MMS universal partition.

\begin{lemma}
\label{lem:additive-extraction-derandomization}
The randomized extraction step in Lemma~\ref{lem:additive-logk-extraction} can be implemented deterministically in polynomial time.
\end{lemma}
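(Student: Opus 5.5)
We derandomize by the method of conditional expectations, mirroring the proof of Lemma~\ref{lem:additive-logkn-universal}. Fix round $i$ with $n-i+1>12$; the case $n-i+1\le 12$ is already deterministic, since there we take $S=R_i$. Fix the critical set $A$ and $p=6/(n-i+1)$. Order the items of $R_i$ arbitrarily as $e_1,\dots,e_s$. We decide the items one at a time, and for each item we choose whether to include it in $S$. We maintain the invariant that the conditional expectation of $\Phi(S)$, given the decisions made so far, is below $1$. In this conditional expectation, the undecided items are still included independently with probability $p$.

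The invariant holds initially, because the proof of Lemma~\ref{lem:additive-logk-extraction} shows $\E[\Phi(S)]\le 2/k<1$ (recall that $k=\omega(1)$). Suppose the decisions so far are summarized by a partial set $S'\subseteq\{e_1,\dots,e_{j-1}\}$. By the law of total expectation,
\begin{equation*}
\E[\Phi(S)\mid S'] = (1-p)\cdot\E[\Phi(S)\mid S',\, e_j\notin S] + p\cdot\E[\Phi(S)\mid S',\, e_j\in S].
\end{equation*}
Hence one of the two choices for $e_j$ does not increase the conditional expectation, and we take that choice. After all $s$ items are decided, the conditional expectation equals $\Phi(S)$ itself. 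We therefore obtain a deterministic $S$ with $\Phi(S)<1$. As already argued in the proof of Lemma~\ref{lem:additive-logk-extraction}, this gives $c_t(S)\le 12\log k$ for all $t$ and $c_t(S)\ge\log k$ for all $t\in A$. The second bound gives $c_t(R_i\setminus S)\le(n-i)\log k$ for every $t\in A$, and this bound holds trivially for $t\notin A$.

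It remains to check that each conditional expectation can be computed in polynomial time. By linearity and independence, with $U$ the set of undecided items,
\begin{equation*}
\E[\Phi(S)\mid S'] = 2^{-12\log k}\sum_{t\in[k]} 2^{c_t(S')}\prod_{e\in U}\bigl(1-p+p\,2^{c_t(e)}\bigr) + 5^{\log k}\sum_{t\in A}5^{-c_t(S')}\prod_{e\in U}\bigl(1-p+p\,5^{-c_t(e)}\bigr).
\end{equation*}
This is a sum of $O(k)$ terms, each a product of $O(m)$ factors, so it can be evaluated with $O(km)$ arithmetic operations. The only mild obstacle is numerical. The quantities involve real exponentials such as $2^{c_t(e)}$ for real costs. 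The standard remedy is to compare the two candidate values with enough bits of precision, or to work with the exact symbolic expressions. Since the initial expectation is at most $2/k$, there is a constant slack below $1$ that absorbs polynomially small rounding errors. Repeating this over the $n$ rounds gives a polynomial-time deterministic algorithm.
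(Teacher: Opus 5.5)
Your proposal is correct and matches the paper's proof essentially step for step: it fixes each item's inclusion in $S$ sequentially by conditional expectations (inclusion probability $p$), using the closed forms $2^{c_t(F)}\prod_{e}(1-p+p\,2^{c_t(e)})$ and $5^{-c_t(F)}\prod_{e}(1-p+p\,5^{-c_t(e)})$ over the undecided items. Your remark about finite-precision arithmetic, and about using the slack below $1$ to absorb rounding errors, is a sensible addition that the paper skips with ``clearly computable in polynomial time.''
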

\begin{proof}

Suppose Invariant~\ref{invariant:logk_universal} holds at the beginning of round $i$. We now show how to compute a set $S$ deterministically in polynomial time such that the invariant holds at the beginning of round $i+1$.

Fix an arbitrary ordering of the remaining items. We determine whether each item is assigned to $S$ sequentially while maintaining an upper bound of $1$ on the conditional expectation of the potential. In particular, let $\E[\Phi(S) \mid (F, U)]$ denote the conditional expectation of the potential given that the items in $F$ have been fixed to be included in $S$ and the items in $U$ have been fixed to be excluded from $S$. The expectation is taken over the random assignment of the remaining undetermined items in $R_i \setminus (F\cup U)$. For every item $e \in R_i \setminus(F\cup U) $, it is assigned to $S$ with probability $p=6/(n-i+1)$ and not assigned to $S$ with probability $1-p$. As shown in Section~\ref{sec:known-types-additive-logk}, we have
\begin{equation*}
    \E[\Phi(S)\mid (\varnothing, \varnothing)] < 1.
\end{equation*}
Moreover, when $F\cup U = R_i$, we have
\begin{equation*}
    \E[\Phi(S) \mid (F, U)] = \Phi(F).
\end{equation*}

In the following, we maintain an invariant that $\E[\Phi(S)\mid (F, U)] < 1$ throughout the assignment process. Since the initial state satisfies the invariant, its preservation across round $i$ guarantees the resulting set $S$ is what we need.

Suppose that the items in $F$ have already been fixed to be included in $S$ and the items in $U$ have been fixed to be excluded from $S$, and let $e$ be the next undecided item. We define $\Psi_F$ as the conditional expectation of $\Phi(S)$ given $F, U$ and $e$ is assigned to $S$:
\begin{equation*}
    \Psi_F := \E[\Phi(S) \mid (F\cup \{e\}, U)].
\end{equation*}
Similarly, we define $\Psi_U$ as the conditional expectation of $\Phi(S)$ given $F, U$ and $e$ is not assigned to $S$:
\begin{equation*}
    \Psi_U := \E[\Phi(S) \mid (F, U\cup \{e\})].
\end{equation*}
Since $e$ is assigned to $S$ with probability $p$, we have
\begin{equation*}
    \E[\Phi(S)\mid (F, U)] = p\cdot \Psi_F + (1-p)\cdot \Psi_U.
\end{equation*}
Either $\Psi_F$ or $\Psi_U$ is at most $\E[\Phi(S)\mid (F, U)]$, and therefore less than $1$. If $\Psi_F \leq \E[\Phi(S)\mid (F, U)]$, we assign $e$ to $S$; otherwise, we do not assign $e$ to $S$. Thus, the invariant $\E[\Phi(S)\mid (F, U)] < 1$ is preserved, and we continue the assignment until $F \cup U = R_i$.

It remains to show that, given any two sets $F$ and $U$, the conditional expectation $\E[\Phi(S)\mid (F,U)]$ can be computed in polynomial time. From the definition of the potential function, it suffices to compute $\E[2^{c_t(S)}]$ for every type $t\in [k]$ and $\E[5^{-c_t(S)}]$ for every critical type $t \in A$. For $\E[2^{c_t(S)}]$, we have
\begin{equation*}
   \E[2^{c_t(S)}]= 2^{c_t(F)} \cdot \prod_{e \in R_i\setminus(F\cup U)} (1-p + p\cdot 2^{c_t(e)}).
\end{equation*}
For $\E[5^{-c_t(S)}]$, we have
\begin{equation*}
    \E[5^{-c_t(S)}] = 5^{-c_t(F)} \cdot \prod_{e \in R_i\setminus(F\cup U)} (1-p + p\cdot 5^{-c_t(e)}).
\end{equation*}
Both expressions are clearly computable in polynomial time, which completes the proof.
\end{proof}

\section{Omitted Proofs from Section~\ref{sec:known-types-binary}}
\label{app:omitted-3mms}

In this appendix, we prove the technical statements used in Section~\ref{sec:known-binary-three-mms}.

\begin{claim}
For all $1\le \ell\le r$,
$
\prod_{i=\ell+1}^{r}\left(1-\frac{3}{2i}\right)
\le
\left(\frac{\ell}{r}\right)^{3/2}.
$
\end{claim}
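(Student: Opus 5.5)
We will prove the claim by taking logarithms and comparing the resulting sum with an integral. The case $\ell = r$ is trivial: the product is empty and both sides equal $1$. So assume $\ell < r$.

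Every factor satisfies $1-\frac{3}{2i} \in (0,1)$ for $i \ge 2$, and the product only runs over $i \ge \ell+1 \ge 2$. Hence the product is positive and we may take logarithms. The elementary inequality $\log(1-x) \le -x$ gives
\begin{equation*}
\log \prod_{i=\ell+1}^{r}\left(1-\frac{3}{2i}\right) \le -\frac{3}{2}\sum_{i=\ell+1}^{r}\frac{1}{i}.
\end{equation*}
It therefore suffices to show that $\sum_{i=\ell+1}^{r} \frac{1}{i} \ge \log \frac{r}{\ell}$, since then the right-hand side is at most $-\frac{3}{2}\log\frac{r}{\ell} = \log\left(\frac{\ell}{r}\right)^{3/2}$.

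This harmonic bound follows from the standard integral comparison. Since $1/x$ is decreasing, $\frac{1}{i} \ge \int_{i}^{i+1} \frac{\mathrm{d}x}{x}$, which gives
\begin{equation*}
\sum_{i=\ell+1}^{r}\frac{1}{i} \ge \int_{\ell+1}^{r+1}\frac{\mathrm{d}x}{x} = \log\frac{r+1}{\ell+1}.
\end{equation*}
However, $\log\frac{r+1}{\ell+1}$ is smaller than $\log\frac{r}{\ell}$ when $\ell<r$, so this direction is too weak. We use instead the comparison $\frac{1}{i} \ge \int_{i}^{i+1}\frac{\mathrm{d}x}{x}$ shifted one step down, namely $\frac{1}{i} \ge \log\frac{i}{i-1}$ is false, and the correct tool is $\frac{1}{i} \le \log\frac{i}{i-1}$, which also goes the wrong way. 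Neither integral comparison directly yields $\sum_{i=\ell+1}^{r} \frac1i \ge \log\frac r\ell$, and this is exactly the main obstacle.

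The fix is to avoid the lossy step $\log(1-x)\le -x$ and bound each factor directly. Set $a = \frac{3}{2}$. We will prove the per-factor inequality $1-\frac{a}{i} \le \left(\frac{i-1}{i}\right)^{a}$ for all $i\ge 2$. This is Bernoulli's inequality: for $a \ge 1$ and $y = \frac{1}{i} \in [0,1]$, we have $(1-y)^a \ge 1-ay$. Multiplying the per-factor bounds over $i=\ell+1,\ldots,r$, the right-hand side telescopes:
\begin{equation*}
\prod_{i=\ell+1}^{r}\left(1-\frac{3}{2i}\right) \le \prod_{i=\ell+1}^{r}\left(\frac{i-1}{i}\right)^{3/2} = \left(\frac{\ell}{r}\right)^{3/2}.
\end{equation*}
The multiplication is valid because all factors on both sides are nonnegative.

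So the final plan is simply Bernoulli's inequality applied factorwise, followed by telescoping. The only point that needs checking is nonnegativity of $1-\frac{3}{2i}$ for $i\ge 2$, so that inequalities can be multiplied. This holds because $\ell\ge1$ forces $i\ge 2$, where $1-\frac{3}{2i} \ge \frac14 > 0$.
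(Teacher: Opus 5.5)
Your final argument is correct and is the same as the paper's: both prove the per-factor bound $1-\frac{3}{2i}\le\left(\frac{i-1}{i}\right)^{3/2}$ for $i\ge 2$ and then telescope. The only difference is that you cite Bernoulli's inequality $(1-y)^{3/2}\ge 1-\frac{3}{2}y$, where the paper checks that the derivative of $(1-x)^{3/2}-1+\frac{3x}{2}$ is nonnegative. In a write-up, delete the abandoned logarithmic detour: it rightly cannot work, since $\sum_{i=\ell+1}^{r}\frac{1}{i}\le\log\frac{r}{\ell}$, but its middle sentences are garbled.
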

\begin{proof}
It is enough to show that for every $i\ge 2$,
$
1-\frac{3}{2i}\le \left(\frac{i-1}{i}\right)^{3/2}.
$
Let $x=1/i$.
It suffices to show that
$$
(1-x)^{3/2}-1+\frac{3x}{2}\ge 0
$$
for $x\in[0,1/2]$.
The derivative of the left-hand side is
$
\frac{3}{2} \cdot \left(1-\sqrt{1-x}\right),
$
which is nonnegative on this interval.
Since the left-hand side is zero at $x=0$, the inequality follows.
Multiplying the inequalities for $i=\ell+1,\ldots,r$ gives
\begin{equation*}
\prod_{i=\ell+1}^{r}\left(1-\frac{3}{2i}\right)
\le
\prod_{i=\ell+1}^{r}\left(\frac{i-1}{i}\right)^{3/2}
=
\left(\frac{\ell}{r}\right)^{3/2}. \qedhere
\end{equation*}
\end{proof}

\begin{claim}
For every integer $N\ge 1$,
$$
\sum_{i=3}^{N}
\binom{N}{i}
\left(\frac{i}{2N}\right)^{3i/2}
<
\frac{1}{N}.
$$
\end{claim}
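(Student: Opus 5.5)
The plan is to bound each summand by a quantity that decays fast enough in $i$. Then I split the range of $i$ into a ``small $i$'' part, which contributes $O(N^{-3/2})$, and a ``large $i$'' part, which is exponentially small in $N$. Any finite set of remaining $N$ is checked directly. For $N\le 2$ the sum is empty and the claim is trivial, so assume $N\ge 3$.

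\textbf{Step 1 (termwise bound).} Using $\binom{N}{i}\le N^i/i!$ and Stirling's lower bound $i!\ge \sqrt{2\pi i}\,(i/e)^i$, the $i$-th summand $a_i$ satisfies
\begin{equation*}
a_i \;\le\; \frac{N^i}{i!}\cdot\frac{i^{3i/2}}{(2N)^{3i/2}}
\;\le\; \frac{1}{\sqrt{2\pi i}}\left(\frac{e}{2^{3/2}}\sqrt{\frac{i}{N}}\right)^{i}.
\end{equation*}
The key numerical fact is $\gamma:=e/2^{3/2}\approx 0.961<1$. Since $i\le N$, this gives the uniform bound $a_i\le \gamma^i$.

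\textbf{Step 2 (splitting the range).} Fix a small constant $\delta$, say $\delta=1/16$.
\begin{itemize}
\item[(a)] For $3\le i\le \delta N$, the base satisfies $\gamma\sqrt{i/N}\le \gamma\sqrt{\delta}\le 1/4$. I will also bound the ratio $a_{i+1}/a_i$ directly from $\binom{N}{i+1}/\binom{N}{i}=(N-i)/(i+1)$ and $\left(\frac{i+1}{i}\right)^{3i/2}\le e^{3/2}$. This gives
\begin{equation*}
\frac{a_{i+1}}{a_i}\le e^{3/2}\cdot\frac{N}{i+1}\left(\frac{i+1}{2N}\right)^{3/2},
\end{equation*}
which is at most $1/2$ when $i+1\le \delta N$. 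Hence this range contributes at most $2a_3$.
\item[(b)] The leading term is $a_3=\binom{N}{3}(3/(2N))^{9/2}$. It is at most $\frac{N^3}{6}\cdot\frac{3^{4.5}}{2^{4.5}N^{4.5}}\approx 1.03\,N^{-3/2}$. So part (a) contributes about $2.1\,N^{-3/2}$.
\item[(c)] For $i>\delta N$, use $a_i\le\gamma^i$ from Step 1. This range contributes at most $\gamma^{\delta N}/(1-\gamma)$, which is exponentially small.
\end{itemize}
Combining, the total is at most $2.1\,N^{-3/2}+26\,\gamma^{N/16}$. This is below $1/N$ once $N$ exceeds an explicit threshold $N_0$.

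\textbf{Step 3 (small $N$) and the main obstacle.} For $3\le N<N_0$ the inequality must be verified directly, either by evaluating the finite sums or by a sharper version of the bounds above. I expect this to be the main obstacle, because the constants are tight. With $\delta=1/16$, the exponential tail $26\gamma^{N/16}$ is only negligible for $N$ in the thousands, so a naive $N_0$ is large.

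To reduce $N_0$, I would first tighten the split: the ratio bound in (a) holds at ratio roughly $1/2$ for $i$ up to about $N/5$. I would also replace $\gamma^i$ in the tail by the sharper Stirling form, which carries the factor $1/\sqrt{2\pi i}$ and the exact power $(i/N)^{i/2}$. For $i\le N/2$ that power gives an extra $2^{-i/2}$.

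Together these should push $N_0$ down to a few dozen. The remaining cases are a finite, routine numerical check. For instance, at $N=3$ the sum is the single term $(1/2)^{4.5}\approx 0.044<1/3$.
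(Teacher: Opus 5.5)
Your reduction (an analytic bound for $N\ge N_0$ plus a direct check below $N_0$) is logically sound, and Steps 1 and 2 are correct as far as they go. The problem is that the proof stops where the remaining work lies. With $\delta=1/16$, the inequality $2.1N^{-3/2}+26\gamma^{N/16}<1/N$ only holds from roughly $N\approx 4.7\cdot 10^{3}$ on. The refinements meant to bring $N_0$ down to ``a few dozen'' are only sketched, not carried out.

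One of the sketched claims is also off. Your ratio bound $\frac{e^{3/2}}{2^{3/2}}\sqrt{(i+1)/N}$ is about $0.71$ at $i=N/5$, and it is $\le 1/2$ only up to roughly $i\approx N/10$. The root cause is that $\binom{N}{i}\le N^i/i!$ loses a factor of about $e^N/\sqrt{2\pi N}$ at $i=N$, so your tail decays only like powers of $\gamma\approx 0.961$. As it stands, the claim is established only for large $N$, and an unspecified and potentially large range of $N$ is left to a numerical check you have not done.

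The missing idea, which is the whole of the paper's proof, is in your Step 2(a): do not replace $N-i$ by $N$. Keep that factor, and use $(1+1/i)^i<e$ and $\sqrt{i+1}\le 2\sqrt{i}/\sqrt3$ for $i\ge 3$. This gives
\[
\frac{a_{i+1}}{a_i}=\frac{(N-i)\sqrt{i+1}}{(2N)^{3/2}}\Bigl(1+\frac1i\Bigr)^{3i/2}
<\frac{e^{3/2}}{2^{3/2}}\cdot\frac{2}{\sqrt3}\Bigl(1-\frac iN\Bigr)\sqrt{\frac iN}
\le\frac{e^{3/2}}{2^{3/2}}\cdot\frac49<\frac34,
\]
because $\max_{x\in[0,1]}(1-x)\sqrt{x}=\frac{2}{3\sqrt3}$. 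This holds for every $3\le i<N$, so the factor $1-i/N$ controls the large-$i$ terms by itself. No split and no Stirling tail are needed: $\sum_{i=3}^{N}a_i\le 4a_3$. Your own estimate $a_3\le\frac16(3/2)^{9/2}N^{-3/2}\approx 1.03\,N^{-3/2}$ then gives $4a_3<1/N$ for all $N\ge 18$. Only the fifteen values $3\le N\le 17$ remain for direct evaluation, which is exactly how the paper finishes.
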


\begin{proof}
The sum is $0$ for $N\le 2$.
Assume $N\ge 3$ and write
$$
a_i
=
\binom{N}{i}
\left(\frac{i}{2N}\right)^{3i/2}.
$$
For $3\le i<N$,
\begin{align*}
\frac{a_{i+1}}{a_i}&= \frac{N-i}{i+1} \cdot \left(\frac{i+1}{2N}\right)^{3/2} \left(1+\frac{1}{i}\right)^{3i/2} \\
&=(N-i) \cdot \left(\frac{1}{2N}\right)^{3/2} \cdot (i+1)^{1/2} \cdot \left(1+\frac{1}{i}\right)^{3i/2}.
\end{align*}
Using $(1+1/i)^i<e$ and $i+1\le 4i/3$ for $i\ge 3$, we get
\begin{align*}
\frac{a_{i+1}}{a_i}
&< (N-i) \cdot  \left(\frac{1}{2N}\right)^{3/2} \cdot \frac{2\sqrt{i}}{\sqrt{3}} \cdot e^{3/2} \\
&= \frac{e^{3/2}}{2^{3/2}}\cdot\frac{2}{\sqrt{3}}\cdot
\left(1-\frac{i}{N}\right)\sqrt{\frac{i}{N}}.    
\end{align*}
The function $(1-x)\sqrt{x}$ has maximum $2/(3\sqrt{3})$ on $[0,1]$.
Hence
$$
\frac{a_{i+1}}{a_i}
<
\frac{e^{3/2}}{2^{3/2}}\cdot\frac{2}{\sqrt{3}}\cdot\frac{2}{3\sqrt{3}}
<
\frac{3}{4}.
$$
Thus, the tail is dominated by a geometric series with ratio $3/4$.
For $N\ge 18$,
$$
a_3
\le
\frac{N^3}{6}\cdot
\left(\frac{3}{2N}\right)^{9/2}
<
\frac{1}{4N}.
$$
It follows that
$$
\sum_{i=3}^{N}a_i
\le
\frac{a_3}{1-3/4}
<
\frac{1}{N}.
$$
It remains to check the finite range $3\le N\le 17$.
For this range, it is enough to verify that $N\cdot\sum_{i=3}^{N}a_i<1$.
This is a finite check over the fifteen values $N=3,\ldots,17$, and direct evaluation verifies that $N\cdot \sum_{i=3}^{N}a_i<1$ in each case.
Equivalently, $\sum_{i=3}^{N}a_i<1/N$ for every $3\le N\le 17$.
This proves the claim.
\end{proof}

\begin{lemma} 
For any $i < n$, the set $S\subseteq B_i$ described in Lemma~\ref{lem:binary-3mms-one-step} can be computed deterministically in polynomial time.
\end{lemma}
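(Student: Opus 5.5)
We derandomize the uniform size-$s$ sampling of Lemma~\ref{lem:binary-3mms-one-step} by the method of conditional expectations, processing the items of $B_i$ one at a time. Write $s=\lceil 3|B_i|/(2(n-i+1))\rceil$ and $r=n-i$.

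Rather than working with the exact conditional expectation of $\Phi(B_i\setminus S,r)$, we work with the \emph{pessimistic estimator} that the proof of Lemma~\ref{lem:binary-3mms-one-step} actually bounds. The state is a pair $(F,U)$. Here $F$ is the set of items forced into $S$ and $U$ the set forced out, with $|F|\le s$ and $|U|\le |B_i|-s$. The undecided set is $D=B_i\setminus(F\cup U)$, and we complete $S$ by a uniformly random $(s-|F|)$-subset of $D$. Then $R_{i+1}\cap A_t=(U\cap A_t)\cup(\text{survivors in } D\cap A_t)$. For each $t$ and $\ell$, with $q=2\ell\mu_t+1$, we have
\[
\E\left[\binom{|R_{i+1}\cap A_t|}{q}\right]=\sum_{a=0}^{q}\binom{|U\cap A_t|}{q-a}\binom{|D\cap A_t|}{a}\,\pi(a),
\]
where $\pi(a)=\binom{|D|-a}{s-|F|}\big/\binom{|D|}{s-|F|}$ is the probability that a fixed $a$-subset of $D$ avoids the sample. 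This is an exact closed form. It has at most $m+1$ terms, each of which is a ratio of binomial coefficients of polynomial bit length. Summing over $t\in[k]$ and $\ell\in[r]$, with the fixed weights $\bigl(\prod_{j=\ell+1}^{r}(1-\tfrac{3}{2j})\bigr)^{q}$, gives the exact conditional expectation $E(F,U)$ of $\Phi(B_i\setminus S,r)$ in polynomial time.

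The procedure is as follows. Initially $E(\varnothing,\varnothing)\le\Phi(R_i,n-i+1)$, by the computation in Lemma~\ref{lem:binary-3mms-one-step}. That computation bounded exactly this expectation, via $\Pr[T\cap S=\varnothing]\le(1-\tfrac{3}{2(n-i+1)})^{|T|}$ and $B_i\cap A_t\subseteq R_i\cap A_t$. Now take the next undecided item $e\in D$. Under the uniform completion, $e$ is included with probability $p_e=(s-|F|)/|D|$. Conditioned on inclusion, the rest is a uniform $(s-|F|-1)$-subset of $D\setminus\{e\}$; conditioned on exclusion, it is a uniform $(s-|F|)$-subset of $D\setminus\{e\}$. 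So both branches are again states of the same form, and
\[
E(F,U)=p_e\,E(F\cup\{e\},U)+(1-p_e)\,E(F,U\cup\{e\}).
\]
We pick the branch with the smaller value. This keeps the estimator at most $\Phi(R_i,n-i+1)$. If $p_e\in\{0,1\}$, only the forced branch has positive weight, and it equals $E(F,U)$.

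After $|B_i|$ steps, $D=\varnothing$, so $|F|=s$ and $E(F,U)=\Phi(B_i\setminus F,r)$. Setting $S=F$ gives the lemma. The total work is $|B_i|\cdot 2$ evaluations, each costing $O(k\,n\,m)$ arithmetic operations on polynomially sized rationals.

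The main obstacle is ensuring the conditional expectation stays exactly computable under the fixed-size (hypergeometric) sampling. The inclusions are not independent, so the product formula used in the earlier derandomizations is unavailable. The plan handles this with the exact counting formula above. If one prefers to avoid the convolution over $a$, an alternative is to replace $E$ by the upper bound $\sum\binom{|B_i\cap A_t\setminus F|}{q}\cdot(\text{survival bound})$. However, one must then check that this upper bound is itself a valid averaged estimator, which is more delicate. We therefore commit to the exact formula.
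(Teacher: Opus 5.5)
Your proposal is correct and follows essentially the same route as the paper: the same sequential $(F,U)$ formulation of uniform $s$-subset sampling, the same method of conditional expectations starting from the bound in Lemma~\ref{lem:binary-3mms-one-step}, and the same exact convolution formula for $\E\bigl[\binom{|(B_i\setminus S)\cap A_t|}{h}\bigr]$ (your index $a$ is the paper's $h-j$). The differences are minor: your opening calls this quantity a ``pessimistic estimator'' although you, like the paper, track the exact conditional expectation, and you justify the two-branch identity by direct conditioning where the paper uses a random-permutation coupling.
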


\begin{proof}
Before we present the derandomization, we recall some notions in Lemma~\ref{lem:binary-3mms-one-step}. Fix a round $i < n$. Let $t_i$ be the type of the arriving agent, let $R_i$ be the set of remaining items, and let $B_i = A_{t_i} \cap R_i$ be the set of remaining items that have cost $1$ for the arriving agent. Lemma~\ref{lem:binary-3mms-one-step} states that randomly selecting $s=\left\lceil 3\cdot|B_i|/(2\cdot (n-i+1))\right\rceil$ items from $B_i$ yields a desired set $S$ with positive probability. 

We now describe an equivalent sequential formulation of the same random process. 
Fix an arbitrary ordering of the items in $B_i$ and process them one by one. 
We maintain two disjoint sets $F$ (items already committed to $S$) and $U$ (items already committed to be excluded from $S$). 
Initially $F=U=\varnothing$. 
When the next item is considered, let $R'=B_i\setminus(F\cup U)$ be the set of remaining undecided items and let $q=\max\{s-|F|,0\}$ be the number of items still required to reach the target size $s$. 
The item is then placed into $S$ (i.e., added to $F$) with probability $q/|R'|$ and excluded from $S$ (i.e., added to $U$) with probability $1-q/|R'|$.

First, we show that this sequential procedure is well-defined. That is, the probability $q/|R'|$ is always at most $1$. When the sequential procedure starts, since $i<n$, we have $s \leq |B_i|$. Thus, the probability is well-defined for the first item. Then we suppose that for some item (not the first item), the probability $q/|R'|$ exceeds $1$ for the first time. Since after deciding an item, either $q$ and $|R'|$ both decrease by $1$ or only $|R'|$ decreases by $1$, we must have $q = |R'| + 1$. This implies that in the previous round, all items in $R'$ have been assigned to $S$ with probability $1$. So the sequential procedure ends in the previous round, which leads to a contradiction. Thus, this random procedure is well-defined.

We next show that the sequential procedure is equivalent to selecting a uniformly random $s$-subset of $B_i$ by coupling it to a random permutation. Consider a uniformly random permutation of items in $B_i$, and let $S$ be the set consisting of the first $s$ items in this permutation. 
Clearly, $S$ is a uniformly random $s$-subset of $B_i$. 
Now reveal the items' positions one by one in this order. 
Suppose we have revealed $j$ items, and items in $F$ have already been revealed and committed to $S$, leaving $q$ additional items still required. 
Among the $|B_i|-j+1$ unrevealed items, exactly $q$ occupy positions in the first $s$ of the permutation. 
Hence, the next revealed item belongs to $S$ with conditional probability $q/(|B_i|-j+1)$, which is precisely the sequential rule described above. 
Therefore, the sequential procedure generates the same distribution as selecting a uniformly random $s$-subset.

In what follows, we show how to derandomize this sequential procedure to get the set $S$ deterministically. Given two disjoint sets $F$ and $U$, we define the conditional expectation of the potential in Lemma~\ref{lem:binary-3mms-one-step} as
\begin{align*}
  \E[\Phi(B_i\setminus S, n-i)\mid (F, U)],
\end{align*}
where the expectation is taken over the random set $S$, which is a superset of $F$ and may contain more items from $R'$.
Initially, $F=U=\varnothing$, by Lemma~\ref{lem:binary-3mms-one-step}, we have
\begin{align*}
   \E[\Phi(B_i\setminus S, n-i)\mid (\varnothing, \varnothing)]  = \E[\Phi(B_i\setminus S, n-i)] \leq \Phi(R_i, n-i+1).
\end{align*}

When $|F| = s$, there is no randomness on $S$, and we have
\begin{align*}
    \E[\Phi(B_i\setminus S, n-i)\mid (F, U)] = \Phi(B_i\setminus F, n-i).
\end{align*}

In the following, we maintain the invariant that
\begin{equation*}
    \E[\Phi(B_i\setminus S, n-i) \mid (F, U)] \leq \Phi(R_i, n-i+1)
\end{equation*}
throughout the sequential procedure. Since the initial state satisfies the invariant, its preservation across round $i$ guarantees the resulting set $S$ is what we desire.

Suppose that the items in $F$ have already been fixed to be included in $S$ and the items in $U$ have been fixed to be excluded from $S$, and let $e$ be the next undecided item. We define $\Psi_F$
as the conditional expectation of $\Phi(B_i\setminus S, n-i)$ given $F, U$ and $e$ is assigned to $S$:
\begin{align*}
    \Psi_F = \E[\Phi(B_i\setminus S, n-i)\mid (F\cup\{e\}, U)].
\end{align*}

Similarly, we define $\Psi_U$ as the corresponding expectation when $e$ is not assigned to $S$:
\begin{equation*}
    \Psi_U = \E[\Phi(B_i\setminus S, n-i)\mid (F, U\cup\{e\})].
\end{equation*}

Since $e$ is assigned to $S$ with probability $q/|R'|$ in $\E[\Phi(B_i\setminus S, n-i)\mid (F, U)]$, we have
\begin{align*}
    \E[\Phi(B_i\setminus S, n-i)\mid (F, U)] &= \frac{q}{|R'|}\cdot \Psi_F + \left(1-\frac{q}{|R'|}\right) \cdot \Psi_U.
\end{align*}
Therefore, at least one of $\Psi_F$ and $\Psi_U$ is at most $\E[\Phi(B_i\setminus S, n-i)\mid (F, U)]$, and hence at most $\Phi(R_i, n-i+1)$. If $\Psi_F \leq \E[\Phi(B_i\setminus S, n-i) \mid (F, U)]$, we assign $e$ to $S$; otherwise, we do not assign $e$ to $S$. Thus, the invariant $\E[\Phi(B_i\setminus S, n-i) \mid (F, U)] \leq \Phi(R_i, n-i+1)$ is preserved, and we continue the assignment until $|F| = s$.

It remains to show that $\E[\Phi(B_i\setminus S, n-i) \mid (F, U)]$ can be computed in polynomial time for any possible sets $F$ and $U$.
Fix sets $F, U$, a type $t\in[k]$, and a level $\ell\in[n-i]$. Let $h=2\ell\cdot \mu_t+1.$ By the definition of $\Phi$, for this fixed pair $(t,\ell)$, the only random part of the corresponding term is $\binom{|(B_i \setminus S)\cap A_t|}{h}.$
Therefore, it is enough to compute $\E\left[\binom{|(B_i \setminus S)\cap A_t|}{h}\right]$, where the expectation is over $S$, conditioned on $F\subseteq S$ and $S\cap U = \varnothing$.

Since the items in $U$ are excluded from $S$, every size-$h$ subset of $(B_i \setminus S)\cap A_t$ can be classified by the number of its items that come from $U\cap A_t$. With a slight abuse of notation, we suppose there are exactly $j$ items of a size-$h$ subset that come from $U \cap A_t$. Thus, for this size-$h$ subset, the remaining $h-j$ items must come from $R'\cap A_t$. There are 
\begin{align*}
    \binom{|U\cap A_t|}{j}
    \binom{|R'\cap A_t|}{h-j}
\end{align*}
ways to choose these items.

The $j$ items chosen from $U\cap A_t$ always belong to $B_i \setminus S$.
The $h-j$ items chosen from $R'\cap A_t$ belong to $B_i \setminus S$ only if none of them is selected into $S$.
Since we can regard the remaining random selection as uniformly picking a size-$q$ subset of $R'$, this probability is $\binom{|R'|-(h-j)}{q} / \binom{|R'|}{q}$.

Therefore, we have
\begin{align*}
\mathbb E\left[\binom{|(B_i \setminus S)\cap A_t|}{h}\right]
= \sum_{j=0}^{h} \binom{|U\cap A_t|}{j}
\binom{|R'\cap A_t|}{h-j}
\cdot
\frac{\binom{|R'|-(h-j)}{q}}{\binom{|R'|}{q}}.
\end{align*}

For this fixed pair $(t,\ell)$, the other factor in the term of $\Phi$ is fixed, so the whole term can be computed.
Thus, after precomputing all binomial coefficients $\binom{x}{y}$ for $0\le y\le x\le m$, each value of $\E[\Phi(B_i\setminus S, n-i) \mid (F, U)]$ can be computed in polynomial time.

Finally, the fixing process makes at most $|B_i|\leq m$ decisions, and each decision requires computing at most two values of $\Phi$. Therefore, the desired set $S$ can be computed deterministically in polynomial time.
\end{proof}

\section{The Existence of Exact MMS for Binary Costs}\label{appendix: exactMMS}

In this appendix, we study exact MMS allocations when the agents belong to a constant number of known binary additive types.
We first show that exact MMS allocations can be computed when $k = 2$.
We then show that when $k = 7$, no deterministic online algorithm can guarantee an approximation ratio strictly smaller than $3/2$.

The case $k=1$ is immediate.
We therefore begin with the case $k=2$.
\begin{lemma} \label{lemma: exactMMS_k = 2}
    For $k=2$, there exists a deterministic online algorithm that computes an exact MMS allocation for binary additive costs.
\end{lemma}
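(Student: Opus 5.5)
The plan is to keep a two-type version of the universal residual invariant (Invariant~\ref{invariant:binary-3mms-potential}), but with slack factor $1$ instead of $2$. Let the two types have approval sets $A$ and $B$, with $a=\MMS_A=\lceil |A|/n\rceil$ and $b=\MMS_B=\lceil |B|/n\rceil$. Let $R_i$ denote the unallocated items when agent $i$ arrives, and let $r=n-i+1$ be the number of agents still to come, including agent $i$. The invariant is
\begin{equation*}
|R_i\cap A|\le r\cdot a \quad\text{and}\quad |R_i\cap B|\le r\cdot b .
\end{equation*}
It holds for $i=1$ because $|A|\le na$ and $|B|\le nb$. If it holds for $i=n$ (so $r=1$), we give all of $R_n$ to the last agent, and her cost is at most her MMS. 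It therefore suffices to give each agent $i<n$ cost at most her MMS while preserving the invariant.

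By symmetry, let agent $i$ have type $A$. We give her every remaining item outside $A$, which is free for her. We also give her $\min\{a,|R_i\cap A|\}$ items from $R_i\cap A$, chosen from $A\cap B$ first and only then from $A\setminus B$. Her cost is at most $a$. The $A$-part of the invariant is preserved: either $R_{i+1}\cap A=\varnothing$, or $|R_{i+1}\cap A| = |R_i\cap A|-a\le (r-1)a$.

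The main step is the $B$-part of the invariant. Every item of $B\setminus A$ has just been allocated, so $R_{i+1}\cap B\subseteq A\cap B$. Because we took items from $A\cap B$ first, either no item of $A\cap B$ remains, or exactly $a$ items of $A\cap B$ were removed. In the second case,
\begin{equation*}
|R_{i+1}\cap B| = |R_i\cap A\cap B| - a .
\end{equation*}
We split on which MMS value is larger.
\begin{itemize}
\item If $a\ge b$, bound the right-hand side by $|R_i\cap B|-a\le rb-a\le (r-1)b$.
\item If $a\le b$, bound it by $|R_i\cap A|-a\le (r-1)a\le (r-1)b$.
\end{itemize}
Either way the invariant holds at the start of round $i+1$. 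The delicate point is precisely the priority rule: taking shared items first is what makes the case split go through. Without it, the type-$B$ residual could stay too large when $a<b$.

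The algorithm is clearly deterministic, online and polynomial-time. By induction the invariant holds in every round, each agent $i<n$ receives cost at most her MMS, and the last agent does too. Hence the allocation is exactly MMS.
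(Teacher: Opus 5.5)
Your proof is correct and is essentially the paper's argument. Both use the same invariant $|R_i\cap A_t|\le r\mu_t$, give the arriving agent every remaining zero-cost item, and take shared items from $A\cap B$ first, up to her MMS budget. The differences are cosmetic: you greedily take $\min\{a,|R_i\cap A|\}$ costly items where the paper takes only $\max\{|T|,\Delta_1\}$, and your explicit split on $a\ge b$ versus $a\le b$ is exactly what the paper's bound $\Delta_2\le\max\{0,\,r\min\{\mu_1,\mu_2\}-(r-1)\mu_2\}\le\mu_1$ encodes.
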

\begin{proof}
Let the two known types have approval sets $A_1, A_2\subseteq M$, where an item has cost $1$ for type $t$ if and only if it belongs to $A_t$.
For each $t\in\{1,2\}$, let $\mu_t=\MMS_t=\left\lceil |A_t|/{n}\right\rceil$.
Recall that $R_i$ denotes the set of unallocated items at the beginning of round $i$, let $r=n-i+1$ be the number of remaining agents, including agent $i$.

We maintain the invariant that, at the beginning of every round $i$,
$$
|R_i\cap A_t|\le r\cdot \mu_t \qquad \text{for every } t\in\{1,2\}.
$$
The invariant holds initially because $R_1=M$ and $|A_t|\le n\cdot \mu_t$ for each type $t$.

Fix a round $i<n$.
By symmetry, suppose that the arriving agent has type $1$.
The algorithm first allocates all items in $R_i\setminus A_1$ to the current agent.
These items have zero cost for her.
Among the items that are still unallocated, define
\begin{align*}
    S_1&=R_i\cap(A_1\setminus A_2), S_{12}=R_i\cap A_1\cap A_2.
\end{align*}
Thus, $S_1\cup S_{12}$ is exactly the set of remaining items that are costly for type $1$.
We further define $\Delta_1$ to be the minimum number of additional type-$1$ costly items that must be removed in this round in order to maintain the invariant for type $1$.
Equivalently, 
$$ \Delta_1  =  \max\left\{ 0,\,|S_1|+|S_{12}|-(r-1)\cdot \mu_1 \right\}.$$
Since the invariant holds at the beginning of the round, we have
\begin{align*}
    |S_1|+|S_{12}|=|R_i\cap A_1|\le r\cdot \mu_1,
\end{align*}
and hence $\Delta_1\le \mu_1.$

The algorithm first chooses a set
    $T\subseteq S_{12}$ with
    $|T|=\min\{|S_{12}|,\mu_1\}.$
The items in $T$ are costly for both types, so allocating them
helps maintain the invariant for both type $1$ and type $2$.
If $|T|\ge \Delta_1$, then the items in $T$ already remove
enough type-$1$ costly items.
Otherwise, $|T|<\Delta_1$.
Since $\Delta_1\le\mu_1$, this implies that
$|T|=|S_{12}|$.
The algorithm then additionally chooses
$\Delta_1-|T|$ items from $S_1$.
This is possible because
   $ \Delta_1
    \le
    |S_1|+|S_{12}|,$
and hence
   $ \Delta_1-|T|
    =
    \Delta_1-|S_{12}|
    \le
    |S_1|.$

Therefore, the number of costly items allocated to the current
agent is
$    \max\{|T|,\Delta_1\}
    \le
    \mu_1.$
Thus, her cost is at most her MMS.
We next verify the invariant for type $1$.
The algorithm removes exactly
   $ \max\{|T|,\Delta_1\}$
items from $S_1\cup S_{12}$.
Therefore,
\begin{align*}
    |R_{i+1}\cap A_1|
    &= |S_1|+|S_{12}|-\max\{|T|,\Delta_1\}\\
    &\le
    |S_1|+|S_{12}|-\Delta_1 \le (r-1)\mu_1,
\end{align*}
where the last inequality follows from the definition of $\Delta_1$.

For type $2$, every item in $R_i\cap(A_2\setminus A_1)$ was allocated to the current agent at zero cost.
The only type $2$ costly items that may still be unallocated are therefore the items in $S_{12}\setminus T$.
Thus, $|R_{i+1}\cap A_2|=|S_{12}|-|T|.$ 

To show that the invariant for type $2$ is also maintained, we define 
\begin{equation*}
    \Delta_2 = \max \{ 0,\, |S_{12}|-(r-1)\mu_2\}.
\end{equation*}

By definition, $\Delta_2\le |S_{12}|$. We next show that $\Delta_2\le \mu_1$. 
The invariant at the beginning of the round implies that $|S_{12}| \le |R_i\cap A_1 \cap A_2| \le r\cdot \min\{\mu_1, \mu_2\}$.
Therefore, we have
\begin{equation*}
\Delta_2 \le \max\{0,\, r\cdot \min\{\mu_1,\mu_2 \} - (r-1)\cdot \mu_2\}\le \mu_1,
\end{equation*}
which implies
\begin{align*}
    \Delta_2
    \le
    \min\{|S_{12}|,\mu_1\}
    =
    |T|.
\end{align*}
It follows that
\begin{align*}
    |R_{i+1}\cap A_2|
    &=
    |S_{12}|-|T| \le
    |S_{12}|-\Delta_2 \le
    (r-1) \cdot \mu_2.
\end{align*}
Therefore, the invariant is maintained for both types.

By induction, the invariant holds at the beginning of every round, and every nonfinal agent receives a bundle whose cost is at most her MMS.
In the final round, we have $r=1$.
The invariant gives $|R_i\cap A_t|\le \mu_t$, where $t$ is the type of the last agent.
Thus, allocating all remaining items to the last agent incurs a cost of at most $\mu_t$.
Therefore, the algorithm computes an exact MMS allocation for $k=2$.
\end{proof}

We next show that exact MMS allocations cannot always be guaranteed when $k=7$.

\begin{lemma} \label{lemma: MMSnotexist}
No deterministic algorithm can achieve an approximation ratio strictly better than $3/2$, even when $k = 7$.
\end{lemma}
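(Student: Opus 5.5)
The plan is to adapt the adversarial construction from the proof of Theorem~\ref{thm:binary-small-k-two-lower}, using fewer groups so that only $k=7$ types are needed. Take $n\ge 2$ even and $|M|=2n$, and partition $M$ into four disjoint groups $G_1,\ldots,G_4$, each of size $n/2$. The seven known binary types are as follows.
\begin{itemize}
\item A type $t_0$ with approval set $A_{t_0}=M$. Then $\MMS_{t_0}=\lceil 2n/n\rceil=2$.
\item For each of the $\binom{4}{2}=6$ subsets $S\subseteq[4]$ with $|S|=2$, a type $t_S$ with approval set $A_{t_S}=\bigcup_{j\in S}G_j$. Since $|A_{t_S}|=n$, we have $\MMS_{t_S}=1$.
\end{itemize}
This gives $k=1+6=7$.

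Fix any deterministic online algorithm and suppose it is $\alpha$-competitive with $\alpha<3/2$. The adversary first reveals an agent of type $t_0$. Every item is costly for her, so her bundle $X_1$ satisfies $|X_1|=c_{t_0}(X_1)\le \alpha\cdot 2<3$. Hence $|X_1|\le 2$, and $X_1$ meets at most two of the four groups. Let $S'$ be the set of indices of the groups that $X_1$ meets. Since $|S'|\le 2$, we can pick $T\subseteq[4]\setminus S'$ with $|T|=2$. The adversary then makes all remaining $n-1$ agents of type $t_T$.

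None of the $n$ items of $A_{t_T}$ was given to agent $1$. So these $n$ items must be distributed among the remaining $n-1$ agents. By pigeonhole, some agent of type $t_T$ receives at least two of them. Her cost is then at least $2=2\cdot\MMS_{t_T}>\alpha\cdot\MMS_{t_T}$, a contradiction.

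There is no real obstacle here. The only points to check are the integrality step (an integer cost strictly below $3$ means at most $2$ items) and that each $t_S$ has $\MMS$ exactly $1$, which needs $|A_{t_S}|=n$ and hence the choice of groups of size $n/2$.
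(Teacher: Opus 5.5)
Your proof is correct and is essentially the paper's argument. The paper uses exactly this instance with $n=8$: four groups of four items, one all-approving type with $\MMS=2$, and six types approving the pairs of groups. It then applies the same integrality step to the first agent and the same pigeonhole finish; your version only parameterizes the construction by an arbitrary even $n$, which changes nothing in the reasoning.
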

\begin{proof}
Consider an instance with $n = 8$ agents, $k = 7$ agent types, and $m = 16$ items. Let the set of items be partitioned into four disjoint groups, $S_1, S_2, S_3$, and $S_4$, each containing exactly $4$ items.
Next, we define the $7$ agent types.

Let $N_j = \{t \in [k] \mid c_t(g) = 1 \text{ for all } g \in S_j\}$ denote the set of agent types that incur a cost of $1$ for items in $S_j$.
Let $N_1 = \{1, 2, 3, 4\}$, $N_2 = \{1, 2, 5, 6\}$, $N_3 = \{1, 3, 5, 7\}$, and $N_4 = \{1, 4, 6, 7\}$.
In other words, type $t$ agents have cost $1$ on items in $S_j$ if and only if $t\in N_j$.
It can be verified that $\MMS_1 = 2$, and $\MMS_i = 1$ for all $i \in \{2,3,\ldots,7\}$.
We show that no deterministic algorithm can achieve an approximation ratio strictly below $3/2$ for this instance. 

Suppose the adversary first reveals an agent of type $1$. 
To maintain a ratio strictly below $3/2$, the algorithm can allocate at most two items to this agent. 
Let $S_i$ and $S_j$ be the sets from which the two items are chosen, where it is possible that $i = j$.
By the design of the instance, there exists an agent type $t \in [k]$ such that $t \notin N_i\cup N_j$.  
Let the remaining $7$ agents be of type $t$.
Since there are $8$ items with cost $1$ for these agents, by the pigeonhole principle, at least one agent must receive at least two items, thereby incurring a cost of $2$. 
Consequently, this example rules out any deterministic guarantee strictly below $3/2$. 
\end{proof}

\end{document}